\documentclass[preprint,12pt]{imsart}

\RequirePackage{amsthm,amsmath,amsfonts,amssymb}
\RequirePackage[authoryear]{natbib}
\RequirePackage[colorlinks,citecolor=blue,urlcolor=blue]{hyperref}
\RequirePackage{graphicx}
\RequirePackage{fullpage}
\RequirePackage{stmaryrd}

\RequirePackage{todonotes}

\RequirePackage{grffile}
\RequirePackage{booktabs,rotating}
\RequirePackage{bm}

\startlocaldefs
\numberwithin{equation}{section}
\theoremstyle{plain}
\newtheorem{prop}{Proposition}[section]

\newtheorem{thm}[prop]{Theorem}
\newtheorem{cor}[prop]{Corollary}
\newtheorem{lem}[prop]{Lemma}
\newtheorem{cond}[prop]{Condition}

\theoremstyle{remark}
\newtheorem{remark}[prop]{Remark}


\newcommand{\eps}{\varepsilon}
\newcommand{\N}{\mathbb{N}}
\newcommand{\R}{\mathbb{R}}

\newcommand{\Fc}{\mathcal{F}}
\newcommand{\Mc}{\mathcal{M}}
\newcommand{\Lc}{\mathcal{L}}
\newcommand{\Xc}{\mathcal{X}}
\newcommand{\Pc}{\mathcal{P}}
\newcommand{\Sc}{\mathcal{S}}
\newcommand{\sPc}{{\scriptscriptstyle{\mathcal{P}}}}

\renewcommand{\P}{\mathbb{P}}
\def\disp{\displaystyle}
\newcommand{\Ex}{\mathbb{E}}

\newcommand{\Cov}{\mathrm{Cov}}

\newcommand{\1}{\mathbf{1}}
\newcommand{\ip}[1]{\lfloor #1 \rfloor}
\newcommand{\p}{\overset{\sss\P}{\to}}
\newcommand{\as}{\overset{\sss a.s.}{\to}}
\renewcommand{\d}{\overset{d}{=}}
\newcommand{\pobs}[1]{\hat{\bm #1}}
\renewcommand{\d}{\overset{d}{=}}
\newcommand{\floor}[1]{\lfloor #1 \rfloor}
\newcommand{\sss}[1]{\scriptscriptstyle{#1}}
\newcommand{\vep}{\varepsilon}
\newcommand{\llb}{\llbracket}
\newcommand{\rrb}{\rrbracket}

\endlocaldefs

\begin{document}

\begin{frontmatter}
\title{Multi-purpose open-end monitoring procedures for multivariate observations based on the empirical distribution function}
\runtitle{}

\begin{aug}
  \author[M]{\fnms{Mark} \snm{Holmes}\ead[label=eM]{holmes.m@unimelb.edu.au}},
  \author[I]{\fnms{Ivan} \snm{Kojadinovic}\ead[label=eI]{ivan.kojadinovic@univ-pau.fr}}
  \and
  \author[M,I]{\fnms{Alex} \snm{Verhoijsen}\ead[label=eA]{alexverhoijsen@gmail.com}}
  \address[M]{School of Mathematics \& Statistics \\ The University of Melbourne \\ Parkville, VIC 3010, Australia}
  \address[I]{CNRS / Universit\'e de Pau et des Pays de l'Adour / E2S UPPA \\ Laboratoire de math\'ematiques et applications IPRA, UMR 5142 \\ B.P. 1155, 64013 Pau Cedex, France}
\end{aug}

\begin{abstract}
We propose nonparametric open-end sequential testing procedures that can detect all types of changes in the contemporary distribution function of possibly multivariate observations. Their asymptotic properties are theoretically investigated under stationarity and under alternatives to stationarity. Monte Carlo experiments reveal their good finite-sample behavior in the case of continuous univariate, bivariate and trivariate observations. A short data example concludes the work.  
\end{abstract}

\begin{keyword}[class=MSC2010]
\kwd[Primary ]{62L99}
\kwd{62E20}
\kwd[; secondary ]{62G10}
\end{keyword}

\begin{keyword}
  \kwd{asymptotic results}
  \kwd{change-point detection}
  \kwd{open-end monitoring}
  \kwd{sequential testing}
  \kwd{theoretical quantile estimation}  
\end{keyword}

\end{frontmatter}


\section{Introduction}

From an historical perspective, monitoring is often associated with \emph{control charts} also known as \emph{Shewart charts}. Such graphical tools central to \emph{statistical process control} \citep[see, e.g.,][for an overview]{Lai01,Mon07} are usually calibrated in terms of the so-called \emph{average run length} (ARL) controlling how many monitoring steps are necessary on average before the data generating process is declared \emph{out of control}. The fact that this conclusion (that is, that the probabilistic properties of the monitored observations have changed) is reached with probability one could be regarded as a drawback of this type of procedure, in particular if \emph{false alarms} are very costly. To remedy this situation, \cite{ChuStiWhi96} have proposed to treat the issue of monitoring from the point of view of statistical testing. The main advantage is that, when observations arise from a stationary time series, monitoring procedures \emph{à la} \cite{ChuStiWhi96} will lead to the conclusion that a change has occurred in the data generating process only with a small probability $\alpha$ controlled by the user. For a recent nicely written literature review comparing monitoring as carried out in statistical process control to approaches based on statistical tests \emph{à la} \cite{ChuStiWhi96}, we refer the reader to the introduction of \cite{GosStoHeiDet22}.

In addition to being statistical tests, the monitoring procedures investigated in this work are nonparametric and deal with $d$-dimensional observations, $d \geq 1$. In that respect, as we continue, we use the superscript $\strut^{\sss[\ell]}$ to denote the $\ell$th coordinate of a vector (for instance, $\bm x = (x^{\sss[1]},\dots, x^{\sss[d]}) \in \R^d)$. As is customary in the sequential testing literature, we assume that we have at hand $m \geq 1$ observations from the initial data generating process. Monitoring starts immediately thereafter. To be more precise, we assume that we have at our disposal a stretch $\bm X_i = (X_i^{\sss[1]},\dots, X_i^{\sss[d]})$, $i \in \{1,\dots, m\}$, from a $d$-dimensional stationary time series with unknown contemporary distribution function (d.f.) $F$ given by $F(\bm x) = \P(X_1^{\sss[1]} \leq x^{\sss[1]}, \dots, X_1^{\sss[d]} \leq x^{\sss[d]}) = \P(\bm X_1 \leq \bm x)$, $\bm x \in \R^d$. These available observations will be referred to as the \emph{learning sample} as we continue. Once the monitoring starts, new observations $\bm X_{m+1}, \bm X_{m+2},\dots$ arrive sequentially and the aim is to issue an alarm as soon as possible if there is evidence that the contemporary distribution of the most recent observations is no longer equal to $F$.

Most approaches in the literature are of a \emph{closed-end} nature: the monitoring eventually stops if stationarity is not rejected after the arrival of a final observation $\bm X_n$, $n > m$. Our focus in this work is on the more difficult scenario in which the monitoring can in principle continue indefinitely: this is called the \emph{open-end} setting. From a practical perspective, it can be argued that the fact that the \emph{monitoring horizon} $n$ does not need to be specified is a great advantage of open-end procedures. The price to pay for open-endness is however a significantly more complicated theoretical setting. Indeed, as discussed for instance in Remark 2.2 of \cite{GosKleDet21}, while the asymptotics of closed-end procedures can usually be derived using functional central limit theorems, such results are insufficient in the open-end case and need to be either combined with H\'ay\'ek-R\'eyni type inequalities \citep[see, e.g.,][and the references therein]{KirWeb18} or replaced by approximations of the form of forthcoming Condition~\ref{cond:H0} \citep[see][]{AueHor04,AueHorHusKok06}.

The null hypothesis of the sequential testing procedures studied in this work is
\begin{equation}
  \label{eq:H0}
  H_0: \,  \bm X_1,\dots, \bm X_m, \bm X_{m+1}, \bm X_{m+2},\dots, \text{ is a stretch from a stationary time series}.
\end{equation}

When $d = 1$, starting from the work of \cite{GosKleDet21}, \cite{HolKoj21} have recently introduced a detection procedure that is particularly sensitive to changes in the mean. Because it uses the \emph{retrospective cumulative sum (CUSUM) statistic} as \emph{detector}, it turns out to be more powerful then existing procedures as long as changes do not occur at the very beginning of the monitoring.  As noted in Section 6 of the latter reference, this approach can be adapted to obtain alternative procedures that are particularly sensitive for instance to changes in the variance or some other moments. The goal of this work is to generalize the method of \cite{HolKoj21} in order to obtain open-end monitoring procedures that can be sensitive simultaneously to all types of changes in the d.f.\ $F$. Although such procedures already exist in a closed-end setting \citep[see, e.g.,][]{KojVer21}, to the best of our knowledge, they are unavailable in the open-end setting.

This paper is organized as follows. In the second section, starting from the work of \cite{GosKleDet21} and \cite{HolKoj21}, we propose a detector that can be sensitive to all types of changes in the contemporary d.f.\ of multivariate observations. We additionally introduce a suitable threshold function and study the asymptotics of the resulting monitoring procedure under $H_0$ in~\eqref{eq:H0} and under sequences of alternatives to $H_0$. In the third section, we focus on the case of continuous observations and provide additional asymptotic results under the null. In the fourth section,  using \emph{asymptotic regression models}, we address the estimation of high quantiles of the distributions appearing in the asymptotic results under $H_0$ which are necessary in practice to carry out the sequential tests. The fifth section summarizes the results of numerous Monte Carlo experiments for $d \in \{1,2,3\}$ whose aim is to study the finite-sample behavior of the monitoring procedures under $H_0$ and under alternatives to $H_0$. A data example and concluding remarks are gathered in the last section.

Unless mentioned otherwise, all convergences are as $m \to \infty$. Also, as there are many discrete intervals appearing in this work, we will conveniently use the notation $\llb j,k \rrb$, $\llb j,k \llb$, $\rrb j,k \rrb$, and $\rrb j,k \llb$ for the sets of integers $\{j,\dots,k\}$, $\{j,\dots,k-1\}$, $\{j-1,\dots,k\}$, and $\{j-1,\dots,k-1\}$, respectively. Note that all mathematical proofs are gathered in a series of appendices and a non-optimized implementation of the monitoring procedures studied in this work is available in the package \texttt{npcp} \citep{npcp} for the \textsf{R} statistical environment \citep{Rsystem}.

\section{A first detector, the threshold function and related asymptotics}
\label{sec:det:thres:asym}

\subsection{A first detector and the threshold function}

One of the two main ingredients of a monitoring procedure \emph{à la} \cite{ChuStiWhi96}  is a statistic, also called a \emph{detector}, that is potentially computed after the arrival of every new observation $\bm X_k$, $k > m$. This statistic is typically positive and quantifies some type of departure from stationarity. Once computed, it is compared to a positive threshold, possibly also depending on $k$. If greater, evidence against $H_0$ in~\eqref{eq:H0} is deemed significant and the monitoring stops. Otherwise, a new observation is collected.

For sensitivity to changes in the mean of univariate ($d=1$) observations, \cite{HolKoj21} used among others the detector
\begin{equation}
  \label{eq:det:mean}
  R_m(k) =  \max_{j \in \llb m,k \llb} \frac{j (k-j)}{m^{\frac32}}  | \bar X_{1:j}^{\sss[1]} - \bar X_{j+1:k}^{\sss[1]} |, \qquad k \geq m+1,
\end{equation}
where $\bar X_{j:k}^{\sss[1]} =\frac{1}{k-j+1}\sum_{i=j}^{k} X_j^{\sss[1]}$, $1 \leq j \leq k$. Some thought reveals that, for any fixed $k \geq m+1$, $R_m(k)$ is akin to the so-called \emph{retrospective CUSUM statistic} frequently used in offline change-point detection tests \citep[see, e.g.,][]{CsoHor97, AueHor13}.

When $d \geq 1$, to be sensitive to changes in the d.f.\ at a fixed point $\bm x =(x^{\sss[1]},\dots, x^{\sss[d]})\in \R^d$, a straightforward adaptation of the previous approach would be to compute~\eqref{eq:det:mean} from the stretch of univariate observations $\1(\bm X_1 \leq \bm x), \dots, \1(\bm X_m \leq \bm x), \1(\bm X_{m+1} \leq \bm x), \dots, \1(\bm X_k \leq \bm x)$, where inequalities between vectors are to be understood componentwise. The detector $R_m$ can then be equivalently expressed as
\begin{equation}
  \label{eq:Em}
  E_m^{\bm x}(k) =  \max_{j \in \llb m,k \llb} \frac{j (k-j)}{m^{\frac32}}  | F_{1:j}(\bm x) - F_{j+1:k}(\bm x) |, \qquad k \geq m+1,
\end{equation}
where, for any integers $j,k \geq 1$,
\begin{equation}
  \label{eq:Fjk}
  F_{j:k}(\bm x) =
  \left\{
    \begin{array}{ll}
      \disp \frac{1}{k-j+1}\sum_{i=j}^k \1(\bm X_i \leq \bm x), \qquad & \text{if } j \leq k, \\
      0, \qquad &\text{otherwise,}
    \end{array}
  \right.
\end{equation}
is the empirical d.f.\ of $\bm X_j, \dots, \bm X_k$ evaluated at $\bm x$. Our aim is to extend the previous approach using $p \geq 1$ points $\bm x_1,\dots, \bm x_p$ in $\R^d$, where the integer $p$ and the points $\bm x_1,\dots, \bm x_p$ are chosen by the user. Let $\Pc = (\bm x_1,\dots, \bm x_p)$ and, for any $i \in \N$, let $\bm Y_i^\sPc=\big( \1(\bm X_i \leq \bm x_1), \dots, \1(\bm X_i \leq \bm x_p) \big)$, which is a $p$-dimensional random vector.  Combining the approach of \cite{GosKleDet21} with the one of \cite{HolKoj21}, the first detector considered in this work is defined by
\begin{equation}
  \label{eq:Dm}
  D_m^\sPc(k) = \max_{j \in \llb m,k \llb} \frac{j (k-j)}{m^{\frac32}}  \| \bm F_{1:j}^\sPc - \bm F_{j+1:k}^\sPc \|_{(\Sigma_m^\sPc)^{-1}}, \qquad k \geq m+1,
\end{equation}
where, for any integers $j,k \geq 1$, $\bm F_{j:k}^\sPc = \big( F_{j:k}(\bm x_1),\dots,F_{j:k}(\bm x_p) \big) \in \R^p$, $\Sigma_m^\sPc$ is an estimator (based on $\bm Y_1^\sPc,\dots,\bm Y_m^\sPc$) of the long-run $p \times p$ covariance matrix
\begin{equation}
  \label{eq:sigma}
\Sigma^\sPc = \Cov(\bm Y_1^\sPc, \bm Y_1^\sPc) + \sum_{i = 2}^\infty \{ \Cov(\bm Y_1^\sPc, \bm Y_i^\sPc) + \Cov(\bm Y_i^\sPc, \bm Y_1^\sPc) \}
\end{equation}
of the $p$-dimensional time series $\big( \bm Y_i^\sPc \big)_{i \in \N}$ and, for any $\bm y \in \R^p$, $\| \bm y \|_M = \sqrt{(\bm y^\top M \bm y)/p}$ denotes a weighted norm of $\bm y$ induced by a $p \times p$ positive-definite matrix $M$ and the integer~$p$.

Note that~\eqref{eq:Dm} can be equivalently rewritten as
\begin{equation}
  \label{eq:Dm:Yi}
  D_m^\sPc(k) = \max_{j \in \llb m,k \llb} \frac{j (k-j)}{m^{\frac32}}  \| \bm {\bar Y}^\sPc_{1:j} - \bm {\bar Y}^\sPc_{j+1:k} \|_{(\Sigma_m^\sPc)^{-1}}, \qquad k \geq m+1,
\end{equation}
where, for any integers $1 \leq j \leq k$, $ \bm {\bar Y}^\sPc_{j:k} = \frac{1}{k-j+1}\sum_{i=j}^k \bm Y_i^\sPc$.

\begin{remark}
  \label{rem:norm}    
  The role of the matrix $(\Sigma_m^\sPc)^{-1}$ when using the Mahalanobis-like norm $\| \cdot \|_{(\Sigma_m^\sPc)^{-1}}$ in~\eqref{eq:Dm} is, roughly speaking, to standardize and decorrelate under the null vectors of the form $\bm F_{1:j}^\sPc - \bm F_{j+1:k}^\sPc$ before computing their $L_2$ norm (scaled by $1/\sqrt{p}$). As shall become clearer from Theorem~\ref{thm:H0} below, a consequence of that step is that a key limiting null distribution playing a central role in the testing procedure will not depend on the characteristics of the underlying time series $( \bm X_i )_{i \in \N}$ but only on the number of points $p$ chosen by the user. The latter desirable property from a practical perspective is also the reason why we did not consider, instead of $D_m^\sPc$ in~\eqref{eq:Dm}, alternative detectors that evaluate differences of empirical d.f.s at all the points in $\R^d$. One natural such alternative detector is
  \begin{equation}
  \label{eq:Dm:sup}
  D_m^{\sup}(k) = \max_{j \in \llb m,k \llb} \frac{j (k-j)}{m^{\frac32}}  \sup_{\bm x \in \R^d} | F_{1:j}(\bm x) - F_{j+1:k}(\bm x) |, \qquad k \geq m+1.
\end{equation}
Such a detector was actually considered in a closed-end setting by \cite{KojVer21} and required in practice the use of bootstrapping when monitoring serially dependent observations. The major practical obstacle related to its use in an open-end setting will be discussed in Remark~\ref{rem:Dm:sup:asym}.
\end{remark}

The second key ingredient of a monitoring procedure is a \emph{threshold function}. In the considered open-end setting, given a significance level $\alpha \in (0,\frac12)$, the aim is to define a deterministic function $w:[1,\infty) \to (0,\infty)$ such that (ideally) under $H_0$ in~\eqref{eq:H0},
\begin{equation}
  \label{eq:typeIerror}
  \P\Big(D_m^\sPc(k) \leq w(k/m) \text{ for all } k>m \Big) =  \P \Bigg( \sup_{k > m} \frac{D_m^\sPc(k)}{w(k/m)} \leq 1 \Bigg) = 1 - \alpha,
\end{equation}
where the supremum is over integers $k>m$. Because the detector $D_m^\sPc$ in~\eqref{eq:Dm} or in~\eqref{eq:Dm:Yi} can be regarded as a multivariate generalization of the detector $R_m$ in~\eqref{eq:det:mean}, one can use the same reasoning as 
in Section~2 of \cite{HolKoj21} to suggest that a meaningful threshold function in the considered case is
\begin{equation}
  \label{eq:w}
  w(t) = q_{p,\eta}^{\sss{(1 - \alpha)}} t^{\frac32+\eta}, \qquad t \in [1,\infty),
\end{equation}
where $\eta$ is a positive real parameter and $q_{p,\eta}^{\sss{(1 - \alpha)}} > 0$ is the $(1-\alpha)$-quantile of $\Lc_{p,\eta}$, the weak limit of $\sup_{k > m} (m/k)^{\frac32+\eta} D_m^\sPc(k)$ under $H_0$, assuming that $\Lc_{p,\eta}$ is continuous. In that case, under $H_0$, by the Portmanteau theorem,
\begin{align}
  \nonumber
  \lim_{m\to\infty} \P \Bigg( \sup_{k>m} \frac{D_m^\sPc(k)}{w(k/m)} \leq 1 \Bigg) 
  &=   \lim_{m\to\infty}\P \Bigg( \sup_{k>m} (m/k)^{\frac32+\eta} D_m^\sPc(k) \leq q_{p,\eta}^{\sss{(1 - \alpha)}} \Bigg) \\
  &= \P(\Lc_{p,\eta} \leq q_{p,\eta}^{\sss{(1 - \alpha)}}) = 1 - \alpha,   \label{eq:typeIerror:asym}
\end{align}
which can be regarded as an ``asymptotic version'' of~\eqref{eq:typeIerror}. As far as $\eta$ is concerned, as a consequence of Proposition~\ref{prop:eta:0} below, it needs to be chosen strictly positive so that $\Lc_{p,\eta}$ is almost surely finite. In practice, we follow the recommendation made in \cite{HolKoj21} and set $\eta$ to 0.001.

\begin{remark}
Proceeding for instance along the lines of \cite{HorHusKokSte04}, \cite{Fre15}, \cite{KirWeb18}, \cite{GosKleDet21} or \cite{HolKoj21}, instead of $w$ in~\eqref{eq:w}, one could alternatively consider as a threshold function $\tilde w$ defined by $\tilde w(t) = \bar w_\gamma(t) w(t)$, $t \in [1,\infty)$, where 
\begin{equation*}
  \bar w_\gamma(t) = \max \left\{ \left(\frac{t-1}{t}\right)^\gamma, \epsilon \right\},  \qquad t \in [1,\infty),
\end{equation*}
with $\gamma \geq 0$ a real parameter and $\epsilon > 0$ a technical constant that can be taken very small in practice. The multiplication of a candidate threshold function by $\bar w_\gamma$ was initially considered in \cite{HorHusKokSte04} and \cite{AueHor04} for the so-called \emph{ordinary CUSUM} detector in order to study, under suitable alternatives, the limiting distribution of the detection delay (the time delay after which the detector exceeds the threshold function). From a practical perspective, as discussed in \cite{HolKoj21}, an appropriate choice of $\gamma \geq 0$ may improve the finite-sample performance of the sequential test at the beginning of the monitoring. The multiplication of a candidate threshold function by $\bar w_\gamma$ does not however affect the asymptotics of the underlying monitoring procedure.
\end{remark}

\subsection{Asymptotics under the null}
\label{sec:asym:H0}

One of the first assumptions required to be able to study the asymptotics (as $m \to \infty$, of the monitoring procedure based on $D_m^\sPc$ in~\eqref{eq:Dm} and $w$ in~\eqref{eq:w})  concerns the long-run covariance matrix $\Sigma^\sPc$ in~\eqref{eq:sigma} of the $p$-dimensional time series $\big( \bm Y_i^\sPc \big)_{i \in \N}$, under the null. As we shall see later in this section, it will be necessary to consider both its inverse $(\Sigma^\sPc)^{-1}$ and its square root $(\Sigma^\sPc)^{\frac12}$. The following assumption guarantees that these two matrices exist (and are unique).

\begin{cond}[On the long-run covariance matrix $\Sigma^\sPc$]
  \label{cond:sigma}
Under $H_0$ in~\eqref{eq:H0}, the long-run covariance matrix $\Sigma^\sPc$ in~\eqref{eq:sigma} of the $p$-dimensional (stationary) time series $\big( \bm Y_i^\sPc \big)_{i \in \N}$ exists and is positive-definite.
\end{cond}

\begin{remark}
  \label{remark:points}
  Under $H_0$ and when the time series $( \bm X_i )_{i \in \N}$ consists of independent observations, the $p \times p$ elements of $\Sigma^\sPc$ in~\eqref{eq:sigma} are simply
$$
\Cov\{\1(\bm X_1 \leq \bm x_i), \1(\bm X_1 \leq \bm x_j) \} = F \big( \min(\bm x_i, \bm x_j) \big) -  F(\bm x_i) F(\bm x_j), \qquad i,j \in \llb 1, p \rrb,
$$
where $\min$ denotes the element-wise minimum operator. Hence, in the case of serially independent observations, by definition of positive-definiteness, Condition~\ref{cond:sigma} will hold if the points $\bm x_1,\dots,\bm x_p$ appearing in $\Pc$ are chosen such that any linear combination of the $\1(\bm X_1 \leq \bm x_i)$, $i \in \llb 1, p \rrb$, has a strictly positive variance. A necessary condition for this is that $\bm x_1,\dots,\bm x_p$ all belong to the support of $\bm X_1$ and are all distinct. Since the law of $\bm X_1$ is unknown, the user could in practice rely on the learning sample $\bm X_1,\dots,\bm X_m$ to choose $\bm x_1,\dots,\bm x_p$. If the learning sample seems to be a stretch from a discrete time series, a natural possibility consists of choosing $\bm x_1,\dots,\bm x_p$ from a subset of frequently occurring observations. The choice of $\Pc$ when the observations in the learning sample seem to arise from a continuous time series will be discussed in Section~\ref{sec:cont}.
\end{remark}

As shall become clearer in the forthcoming paragraphs, studying the asymptotics under the null (of the monitoring procedure as $m \to \infty$) actually amounts to establishing the weak limit of $\sup_{k>m} (m/k)^{\frac32+\eta} D_m^\sPc(k)$ under $H_0$ in~\eqref{eq:H0}. The following assumption on the time series $\big( \bm Y_i^\sPc \big)_{i \in \N}$  (a type of ``strong approximation'' condition) is typical of the kinds of assumptions in the sequential change-point literature; see, e.g., Assumption 2.3 in \cite{GosKleDet21}, Condition 3.1 in \cite{HolKoj21} and the corresponding discussions in these references. Let $\|\cdot\|_2$ denote the Euclidean norm.

\begin{cond}[Approximation]
  \label{cond:H0}
  There exists a probability space $(\Omega, \Fc,\P)$ on which:
  \begin{itemize}
  \item $(\bm Y_i^\sPc)_{i \in \N}$ is a $p$-dimensional stationary time series satisfying Condition~\ref{cond:sigma}, \\
  \item for each $m \in \N$,  $\bm W_{1,m}$ and $\bm W_{2,m}$ are two independent $p$-dimensional standard Brownian motions,
  \end{itemize}
  such that, for some $0 < \xi < \frac12$,
\begin{equation}
  \label{eq:sup:cond}
\sup_{k>m} \frac{1}{(k-m)^\xi} \left\| \sum_{i=m+1}^{k} \{\bm Y_i^\sPc - \Ex(\bm  Y_1^\sPc) \}  - (\Sigma^\sPc)^{\frac12} \bm W_{1,m}(k-m) \right\|_2 = O_\P(1)
\end{equation}
and
\begin{equation}
  \label{eq:cond}
\frac{1}{m^\xi} \left\| \sum_{i=1}^{m} \{\bm Y_i^\sPc - \Ex( \bm Y_1^\sPc) \}  - (\Sigma^\sPc)^{\frac12} \bm W_{2,m}(m) \right\|_2 = O_\P(1).
\end{equation}
\end{cond}

We use the notation `$\leadsto$' to  denote convergence in distribution (weak convergence) and $I_p$ to denote the $p \times p$ identity matrix. The following result, proven in Appendix~\ref{proof:thm:H0}, can be regarded as a multivariate extension of Theorem~3.3 of \cite{HolKoj21}.

\begin{thm}
\label{thm:H0}
Fix $\eta>0$.  Under Condition~\ref{cond:H0}, if $\Sigma_m^\sPc \p \Sigma^\sPc$ then
\begin{align*}
\sup_{k>m} (m/k)^{\frac32+\eta}  D_m^\sPc(k) \leadsto \Lc_{p,\eta} = \sup_{1 \leq s \leq t < \infty} t^{-\frac32-\eta} \| t \bm W(s) - s \bm W(t) \|_{I_p},
\end{align*}
where $D_m^\sPc$ is defined in~\eqref{eq:Dm} and $\bm W$ is a $p$-dimensional standard Brownian motion.  In addition, the limiting random variable $\Lc_{p,\eta}$ is almost surely finite.
\end{thm}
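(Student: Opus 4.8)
The plan is to follow the strategy of the proof of Theorem~3.3 in \cite{HolKoj21}, carrying the matrix $\Sigma^\sPc$ through the argument. Write $\bm S_n = \sum_{i=1}^n \bm Y_i^\sPc$, so that $\bm F^\sPc_{1:j} = j^{-1}\bm S_j$ and $\bm F^\sPc_{j+1:k} = (k-j)^{-1}(\bm S_k - \bm S_j)$, and recall from~\eqref{eq:Dm:Yi} that $D_m^\sPc$ is built from the $\bm {\bar Y}^\sPc_{j:k}$. A one-line computation gives, for $m \le j < k$,
\[
\frac{j(k-j)}{m^{3/2}}\bigl(\bm {\bar Y}^\sPc_{1:j} - \bm {\bar Y}^\sPc_{j+1:k}\bigr) = \frac{1}{m^{3/2}}\bigl(k\bm S_j - j\bm S_k\bigr) = \frac{1}{m^{3/2}}\bigl(k\tilde{\bm S}_j - j\tilde{\bm S}_k\bigr),
\]
where $\tilde{\bm S}_n = \bm S_n - n\Ex(\bm Y_1^\sPc)$; in particular the unknown mean drops out and $(m/k)^{3/2+\eta}D_m^\sPc(k) = \max_{m \le j < k}(k/m)^{-3/2-\eta}\,\| m^{-3/2}(k\tilde{\bm S}_j - j\tilde{\bm S}_k)\|_{(\Sigma_m^\sPc)^{-1}}$. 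Since $\Sigma_m^\sPc \p \Sigma^\sPc$ and $\Sigma^\sPc$ is positive-definite by Condition~\ref{cond:sigma}, we have $(\Sigma_m^\sPc)^{-1}\p(\Sigma^\sPc)^{-1}$, and the elementary inequality $\bigl|\,\|\bm v\|_A - \|\bm v\|_B\,\bigr| \le \|A-B\|_{\mathrm{op}}\,\|\bm v\|_2/\sqrt{p\,\lambda_{\min}(B)}$ (valid for $B$ positive-definite) lets me replace $(\Sigma_m^\sPc)^{-1}$ by $(\Sigma^\sPc)^{-1}$ throughout, at the cost of an $o_\P(1)$ term, provided the supremum over $k>m$ of the same quantity with $\|\cdot\|_2$ in place of $\|\cdot\|_{(\Sigma_m^\sPc)^{-1}}$ is $O_\P(1)$ --- which comes out of the next step.

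Next I would insert the strong approximation of Condition~\ref{cond:H0}. For $j \ge m$ write $\tilde{\bm S}_j = \tilde{\bm S}_m + \sum_{i=m+1}^{j}\{\bm Y_i^\sPc - \Ex(\bm Y_1^\sPc)\}$ and apply~\eqref{eq:cond} and~\eqref{eq:sup:cond} to obtain $\tilde{\bm S}_j = (\Sigma^\sPc)^{1/2}\{\bm W_{2,m}(m) + \bm W_{1,m}(j-m)\} + \bm r_m(j)$ with $\|\bm r_m(j)\|_2 \le O_\P(1)(m^\xi + (j-m)^\xi)$ uniformly in $j > m$. Substituting for $j$ and $k$, the deterministic cancellation $k\bm W_{2,m}(m) - j\bm W_{2,m}(m) = (k-j)\bm W_{2,m}(m)$ leaves, with $t=k/m$ and $j\le k\le mt$, a remainder $\bm\rho_m(j,k)$ with $\|\bm\rho_m(j,k)\|_2 \le O_\P(1)\,m^{\xi-1/2}\,t^{1+\xi}$ uniformly. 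Setting $s=j/m$ and using Brownian scaling, $\bm B_i^{(m)}(u):=m^{-1/2}\bm W_{i,m}(mu)$ are two independent $p$-dimensional standard Brownian motions and the main term equals $(\Sigma^\sPc)^{1/2}\{(t-s)\bm B_2^{(m)}(1) + t\bm B_1^{(m)}(s-1) - s\bm B_1^{(m)}(t-1)\} = (\Sigma^\sPc)^{1/2}\bigl(t\bm W^{(m)}(s) - s\bm W^{(m)}(t)\bigr)$, where $\bm W^{(m)}(u):=\bm B_2^{(m)}(1) + \bm B_1^{(m)}(u-1)$, $u\ge 1$, which by construction has exactly the law of a standard $p$-dimensional Brownian motion restricted to $[1,\infty)$. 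Since $\|(\Sigma^\sPc)^{1/2}\bm z\|_{(\Sigma^\sPc)^{-1}} = \|\bm z\|_{I_p}$, and since after multiplying by $(k/m)^{-3/2-\eta}=t^{-3/2-\eta}$ the contribution of $\bm\rho_m$ is at most $O_\P(1)\,m^{\xi-1/2}\sup_{t\ge1}t^{\xi-1/2-\eta} = O_\P(m^{\xi-1/2}) = o_\P(1)$ (using $\xi<\tfrac12$), collecting everything yields
\[
\sup_{k>m}(m/k)^{3/2+\eta}D_m^\sPc(k) = M_m + o_\P(1), \qquad M_m := \sup_{(s,t)\in\mathcal{G}_m}t^{-3/2-\eta}\bigl\|t\bm W^{(m)}(s) - s\bm W^{(m)}(t)\bigr\|_{I_p},
\]
with $\mathcal{G}_m = \{(j/m,k/m): j,k\in\N,\ m\le j<k\}$. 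The $O_\P(1)$ bound needed above also follows, since $M_m$ is dominated by $\sup_{1\le s\le t}t^{-3/2-\eta}\|t\bm W^{(m)}(s)-s\bm W^{(m)}(t)\|_{I_p}$, a random variable with a fixed (and, by the next step, a.s.\ finite) distribution.

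For the a.s.\ finiteness of $\Lc_{p,\eta}$, the bound $\|t\bm W(s) - s\bm W(t)\|_2 \le 2t\sup_{0\le u\le t}\|\bm W(u)\|_2$ gives $\Lc_{p,\eta} \le 2p^{-1/2}\sup_{t\ge1}t^{-1/2-\eta}\sup_{0\le u\le t}\|\bm W(u)\|_2$. A dyadic blocking argument together with L\'evy's maximal inequality and the Gaussian tail of $\sup_{0\le u\le1}\|\bm W(u)\|_2$ shows $\sup_{0\le u\le t}\|\bm W(u)\|_2 = o(t^{1/2+\eta/2})$ a.s.\ as $t\to\infty$; hence $t\mapsto t^{-1/2-\eta}\sup_{0\le u\le t}\|\bm W(u)\|_2$ is continuous on $[1,\infty)$ and tends to $0$, so its supremum is a.s.\ finite. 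This is precisely the place where $\eta>0$ is indispensable.

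It remains to show $M_m \leadsto \Lc_{p,\eta}$, after which Slutsky's lemma (combined with the $o_\P(1)$ above) finishes the proof. The law of $M_m$ depends on $m$ only through the deterministic grid $\mathcal{G}_m$ and coincides with the law of $\sup_{(s,t)\in\mathcal{G}_m}t^{-3/2-\eta}\|t\bm W(s)-s\bm W(t)\|_{I_p}$ for a single standard Brownian motion $\bm W$ on $[1,\infty)$ fixed on one probability space. On that space, as $m\to\infty$ the grid $\mathcal{G}_m$ becomes dense in $\{1\le s\le t<\infty\}$, so path-continuity gives $\liminf_m(\cdots)\ge\Lc_{p,\eta}$, while the grid being a subset gives $(\cdots)\le\Lc_{p,\eta}$; since $\Lc_{p,\eta}<\infty$ a.s., this forces a.s.\ convergence to $\Lc_{p,\eta}$, hence convergence in law, hence $M_m\leadsto\Lc_{p,\eta}$. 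The main obstacle throughout is making the various $o_\P(1)$ and $O_\P(1)$ estimates genuinely \emph{uniform} over all $k>m$ --- equivalently, controlling the tail $k>mT$ as $T\to\infty$ --- which is exactly what distinguishes the open-end setting from the closed-end one and is handled here by the explicit polynomial-in-$t$ decay rates ($t^{\xi-1/2-\eta}$ for the approximation remainder, $t^{-1/2-\eta}$ for the Brownian functional) obtained above.
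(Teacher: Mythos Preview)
Your proposal is correct and follows essentially the same route as the paper's proof: replace $\Sigma_m^\sPc$ by $\Sigma^\sPc$ (the paper's Lemma~A.4), insert the strong approximation of Condition~\ref{cond:H0} (Lemma~A.1), rescale Brownian motions, and pass from the integer grid to the continuous supremum (Lemma~A.3). Two minor streamlinings are worth noting. First, by defining $\bm W^{(m)}(u)=\bm B_2^{(m)}(1)+\bm B_1^{(m)}(u-1)$ and observing directly that this has the law of a standard Brownian motion on $[1,\infty)$, you bypass the explicit covariance comparison between $(t-s)\bm W_2(1)+t\bm W_1(s-1)-s\bm W_1(t-1)$ and $t\bm W(s)-s\bm W(t)$ that the paper carries out at the end. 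Second, for the grid-to-continuum step you rely only on pointwise continuity of $(s,t)\mapsto t^{-3/2-\eta}\|t\bm W(s)-s\bm W(t)\|_{I_p}$, density of $\mathcal{G}_m$, and the a.s.\ finiteness of $\Lc_{p,\eta}$, whereas the paper first proves almost sure \emph{uniform} continuity of the corresponding function $D_\eta$ (its Lemma~A.2, invoking the univariate Lemma~A.2 of \cite{HolKoj21}); your monotone sandwich $M_m'\le \Lc_{p,\eta}$ plus the $\liminf$ argument is enough and avoids that lemma. Both shortcuts are sound and make no substantive change to the architecture of the argument.
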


Note that in Theorem \ref{thm:H0} the supremum on the left is over integers $k$ while the supremum on the right is over real numbers $s,t$.

\begin{remark}
  \label{rem:L}
  It is important to note that the limiting random variable $\Lc_{p,\eta}$ depends neither on the characteristics of the underlying time series $( \bm X_i )_{i \in \N}$ (such as its dimension $d$, its serial dependence properties or the unknown d.f.\ $F$), nor on the user-chosen points $\Pc = (\bm x_1,\dots, \bm x_p)$ involved in the definition of $D_m^\sPc$. It only depends on the integer $p$ and on the real $\eta$. The latter is due to the use of the Mahalanobis-like norm $\| \cdot \|_{(\Sigma_m^\sPc)^{-1}}$ in~\eqref{eq:Dm} as hinted at in Remark~\ref{rem:norm}. As shall become clearer below, an important practical consequence of this is that the monitoring procedure can be used as soon as it is possible to compute or estimate quantiles of $\Lc_{p,\eta}$ for the chosen parameters $p$ and $\eta$. This important aspect will be investigated in Section~\ref{sec:quant} in more detail. 
\end{remark}

For a given serial dependence scenario under $H_0$ in~\eqref{eq:H0}, it is hoped that Condition~\ref{cond:H0} will hold for many different vectors of points $\Pc = (\bm x_1,\dots,\bm x_p )$. The following proposition shows that this is for instance the case when the time series $( \bm X_i )_{i \in \N}$ is \emph{strongly mixing} under~$H_0$. Given a time series $(\bm Z_i)_{i \in \N}$ and for any $j, k \in \N \cup \{+\infty \}$, denote by $\Mc_j^k$ the $\sigma$-field generated by $(\bm Z_i)_{j \leq i \leq k}$ and recall that the strong mixing coefficients corresponding to $(\bm Z_i)_{i \in \N}$ are defined by
\begin{equation*}
\alpha_r^{\bm Z} = \sup_{k \in \N} \sup_{A \in \Mc_1^k,B\in \Mc_{k+r}^{+\infty}} \big| \P(A \cap B) - \P(A) \P(B) \big|, \qquad r \in \N.
\end{equation*}
The sequence $(\bm Z_i)_{i \in \N}$ is then said to be \emph{strongly mixing} if $\alpha_r^{\bm Z} \to 0$ as $r \to \infty$.

The following result, proven in Appendix~\ref{proof:prop:H0}, is a consequence of Theorem~4 of \cite{KuePhi80}.

\begin{prop}
  \label{prop:cond:H0} Assume that the time series $( \bm X_i )_{i \in \N}$ is stationary and strongly mixing, and that its strong mixing coefficients satisfy $\alpha_r^{\bm X} = O(r^{-a})$ as $r \to \infty$ with $a > 3$. Then, Condition~\ref{cond:H0} holds for all vectors of points $\Pc$ such that Condition~\ref{cond:sigma} holds.
\end{prop}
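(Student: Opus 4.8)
The plan is to deduce Condition~\ref{cond:H0} for the sequence $(\bm Y_i^\sPc)_{i\in\N}$ from a strong (almost sure) invariance principle for strongly mixing sequences of finite-dimensional random vectors, namely Theorem~4 of \cite{KuePhi80} (stated there for Banach- or Hilbert-space valued summands, hence \emph{a fortiori} applicable in $\R^p$), and then to recast its conclusion into the two-Brownian-motion form demanded by Condition~\ref{cond:H0}, using the freedom granted there to choose the underlying probability space and to let $\bm W_{1,m}$ and $\bm W_{2,m}$ depend on $m$.

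First I would check the hypotheses of that theorem for the centred sequence $(\bm Y_i^\sPc - \Ex(\bm Y_1^\sPc))_{i\in\N}$. Because $\bm Y_i^\sPc = \bm g(\bm X_i)$ for the fixed Borel map $\bm g(\bm x) = (\1(\bm x \le \bm x_1),\dots,\1(\bm x \le \bm x_p))$, one has $\sigma(\bm Y_i^\sPc) \subseteq \sigma(\bm X_i)$ for every $i$, so $(\bm Y_i^\sPc)_{i\in\N}$ is stationary and its strong mixing coefficients satisfy $\alpha_r^{\bm Y^\sPc} \le \alpha_r^{\bm X} = O(r^{-a})$ with $a > 3$; in particular $\sum_r \alpha_r^{\bm Y^\sPc} < \infty$, which together with $\|\bm Y_i^\sPc\|_2 \le \sqrt p$ a.s.\ makes the series defining $\Sigma^\sPc$ in~\eqref{eq:sigma} absolutely convergent and $\Sigma^\sPc$ finite. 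Boundedness also yields arbitrarily high moments, so any $(2+\delta)$-moment requirement is met; Condition~\ref{cond:sigma} provides positive-definiteness of $\Sigma^\sPc$ (so $(\Sigma^\sPc)^{1/2}$ and $(\Sigma^\sPc)^{-1}$ exist), and stationarity together with absolute summability of the autocovariances gives the convergence of the normalised partial-sum covariances to $\Sigma^\sPc$ needed for the invariance principle. The decay $a > 3$ comfortably meets the mixing-rate hypothesis of Theorem~4 of \cite{KuePhi80} (which, given arbitrarily high moments, amounts to little more than $\sum_r \alpha_r < \infty$) and produces a polynomial error term: on a suitable probability space one obtains a version of $(\bm Y_i^\sPc)_{i\in\N}$ together with a $p$-dimensional standard Brownian motion $\bm W$ for which $\big\| \sum_{i=1}^n \{\bm Y_i^\sPc - \Ex(\bm Y_1^\sPc)\} - (\Sigma^\sPc)^{1/2} \bm W(n) \big\|_2 = O(n^{1/2 - \rho})$ a.s.\ for some $\rho \in (0,\tfrac12)$. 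Set $\xi := \tfrac12 - \rho \in (0,\tfrac12)$. Estimate~\eqref{eq:cond} is then the easy one: letting $\bm W_{2,m}$ agree with $\bm W$ on $[0,m]$ and be prolonged beyond $m$ by an auxiliary independent Brownian motion gives $m^{-\xi}\big\|\sum_{i=1}^m\{\bm Y_i^\sPc - \Ex(\bm Y_1^\sPc)\} - (\Sigma^\sPc)^{1/2}\bm W_{2,m}(m)\big\|_2 = O(m^{1/2-\rho-\xi}) = O(1)$ a.s.

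For~\eqref{eq:sup:cond} the endpoint rate above is, by itself, too weak when $k$ is close to $m$: it bounds the error only by $O(m^{1/2-\rho})$, whereas a bound of order $(k-m)^\xi$ is required, and such a bound cannot be read off from increments $\bm W(k) - \bm W(m)$ of the \emph{global} approximating motion. One must instead apply the strong approximation directly to the shifted stationary sequence $(\bm Y_{m+i}^\sPc)_{i\in\N}$, obtaining a Brownian motion $\bm W_{1,m}$ for which the left-hand side of~\eqref{eq:sup:cond} is at most $C_m \sup_{j\ge1} j^{1/2-\rho-\xi} = C_m$, where $C_m$ has a distribution independent of $m$ by stationarity, hence is $O_\P(1)$. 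The main obstacle is to reconcile this with the required \emph{independence} of $\bm W_{1,m}$ and $\bm W_{2,m}$: the learning sample $\bm Y_1^\sPc,\dots,\bm Y_m^\sPc$ and the monitoring observations $\bm Y_{m+1}^\sPc, \bm Y_{m+2}^\sPc,\dots$ are genuinely dependent under strong mixing — unlike in the serially independent case, where the two blocks are exactly independent — so the Brownian motions naturally attached to the past and to the future are not automatically independent. I expect this to be handled by a coupling argument: one severs the sequence at $m$ and, exploiting the fast polynomial decay of $\alpha_r^{\bm X}$ (this is where $a > 3$ is used in full), replaces the past and future parts by genuinely independent surrogates at a cost negligible at the scales $m^\xi$ and $(k-m)^\xi$, before applying the strong approximation to each (now independent) part. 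Putting these pieces together then yields Condition~\ref{cond:H0} for every such $\Pc$.
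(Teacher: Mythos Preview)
Your reduction to Theorem~4 of \cite{KuePhi80} and your treatment of~\eqref{eq:cond} match the paper. The divergence is in how you handle~\eqref{eq:sup:cond} and the independence of $\bm W_{1,m}$ and $\bm W_{2,m}$, and there the proposal has a genuine gap: the coupling argument you allude to (``severing the sequence at $m$ \ldots\ replacing past and future parts by genuinely independent surrogates'') is not carried out, and it is not clear how to make it go through while simultaneously keeping the strong-approximation error under control on \emph{both} pieces on a \emph{single} probability space.

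The paper sidesteps this entirely, and the trick is worth knowing. One does \emph{not} apply Kuelbs--Philipp a second time to the shifted sequence. Instead, from the single global Brownian motion $\bm W$ furnished by the strong approximation, set
\[
\bm W_{1,m}(s)=\bm W(m+s)-\bm W(m),\qquad
\bm W_{2,m}(s)=\begin{cases}\bm W(s),& s\in[0,m],\\ \bm W'(s-m)+\bm W(m),& s>m,\end{cases}
\]
with $\bm W'$ an auxiliary Brownian motion independent of everything. Since $\bm W_{1,m}$ depends only on the increments of $\bm W$ after time $m$, while $\bm W_{2,m}$ depends only on $\bm W|_{[0,m]}$ and on $\bm W'$, the independent-increments property of $\bm W$ gives independence of $\bm W_{1,m}$ and $\bm W_{2,m}$ for free --- no mixing, no coupling. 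Your worry that ``such a bound cannot be read off from increments $\bm W(k)-\bm W(m)$ of the global approximating motion'' is then addressed not by a pointwise estimate but distributionally: with $\xi\in(\tfrac12-\lambda,\tfrac12)$, the random variables
\[
V_m=\sup_{k>m}(k-m)^{-\xi}\Bigl\|\textstyle\sum_{i=m+1}^{k}\{\bm Y_i^\sPc-\Ex(\bm Y_1^\sPc)\}-(\Sigma^\sPc)^{1/2}\bm W_{1,m}(k-m)\Bigr\|_2
\]
are identically distributed in $m$, so it suffices to show $V_0<\infty$ a.s.; but $V_0$ is the supremum of a sequence converging to $0$ a.s.\ (by the choice of $\xi$), hence a.s.\ finite. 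This is precisely the step that replaces your unproved coupling. (A minor point: take $\xi$ strictly larger than $\tfrac12-\rho$, not equal to it, so that the sequence inside $V_0$ genuinely tends to $0$.)
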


The previous proposition leads to the following immediate corollary of Theorem~\ref{thm:H0}.

\begin{cor}
\label{cor:H0:mixing}
Assume that the time series $( \bm X_i )_{i \in \N}$ is stationary and strongly mixing, and that its strong mixing coefficients satisfy $\alpha_r^{\bm X} = O(r^{-a})$ as $r \to \infty$ with $a > 3$. Then, for any fixed $\eta > 0$ and any vector of points $\Pc$ such that Condition~\ref{cond:sigma} holds,
\begin{align*}
\sup_{k>m} (m/k)^{\frac32+\eta}  D_m^\sPc(k) \leadsto \Lc_{p,\eta} = \sup_{1 \leq s \leq t < \infty} t^{-\frac32-\eta} \| t \bm W(s) - s \bm W(t) \|_{I_p}.
\end{align*}
\end{cor}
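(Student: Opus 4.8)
The statement is a corollary, so the plan is simply to combine the two preceding results, after checking the one hypothesis of Theorem~\ref{thm:H0} that the corollary does not restate explicitly. First I would apply Proposition~\ref{prop:cond:H0}: its hypothesis is precisely that $(\bm X_i)_{i\in\N}$ is stationary and strongly mixing with $\alpha_r^{\bm X}=O(r^{-a})$, $a>3$, and its conclusion is that Condition~\ref{cond:H0} holds for every vector of points $\Pc$ satisfying Condition~\ref{cond:sigma}. This disposes of the ``strong approximation'' part of the hypotheses of Theorem~\ref{thm:H0}.

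Second, I would verify that $\Sigma_m^\sPc \p \Sigma^\sPc$ under the same assumptions. The key observation is that each $\bm Y_i^\sPc$ is a bounded measurable function of $\bm X_i$ alone, so $(\bm Y_i^\sPc)_{i\in\N}$ is stationary and strongly mixing with $\alpha_r^{\bm Y^\sPc}\le \alpha_r^{\bm X}=O(r^{-a})$; in particular $\sum_{r\ge1}\alpha_r^{\bm X}<\infty$, so the series defining $\Sigma^\sPc$ in~\eqref{eq:sigma} converges absolutely and $\Sigma^\sPc$ is finite, while Condition~\ref{cond:sigma} ensures it is positive-definite. Consistency of the HAC/kernel-type long-run covariance estimator $\Sigma_m^\sPc$ entering~\eqref{eq:Dm} then follows from standard results on such estimators for strongly mixing bounded sequences with summable mixing coefficients. (If instead $\Sigma_m^\sPc$ is taken as a standing, known-consistent estimator throughout, this step is vacuous and the corollary is literally ``Proposition~\ref{prop:cond:H0} plus Theorem~\ref{thm:H0}''.)

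Third, with Condition~\ref{cond:H0} in force and $\Sigma_m^\sPc\p\Sigma^\sPc$ established, Theorem~\ref{thm:H0} applies verbatim for any fixed $\eta>0$, yielding $\sup_{k>m}(m/k)^{\frac32+\eta}D_m^\sPc(k)\leadsto \Lc_{p,\eta}$ together with the almost sure finiteness of $\Lc_{p,\eta}$; this is exactly the assertion of the corollary, the quantifier over $\Pc$ being inherited from Proposition~\ref{prop:cond:H0}.

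I do not expect a genuine obstacle at the level of this corollary: all the analytic difficulty is already encapsulated in Theorem~\ref{thm:H0} (reducing the open-end supremum functional of $D_m^\sPc$ to the Brownian-bridge-type limit $\Lc_{p,\eta}$ and proving its finiteness) and in Proposition~\ref{prop:cond:H0} (which rests on the strong invariance principle of Kuelbs and Philipp). The only points demanding a little care are matching the mixing rate $a>3$ to the consistency requirements of the particular estimator $\Sigma_m^\sPc$ and transferring the mixing property from $(\bm X_i)$ to $(\bm Y_i^\sPc)$, both of which are routine.
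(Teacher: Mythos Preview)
Your proposal is correct and matches the paper's own treatment: the paper states the result as an ``immediate corollary'' of Proposition~\ref{prop:cond:H0} and Theorem~\ref{thm:H0}, which is exactly your first and third steps. You are actually more careful than the paper in flagging the hypothesis $\Sigma_m^\sPc \p \Sigma^\sPc$ of Theorem~\ref{thm:H0}, which the corollary leaves implicit; your parenthetical remark that this is being treated as a standing assumption is the correct reading.
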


The strong mixing conditions in the previous corollary are for instance satisfied (with much to spare) when $( \bm X_i )_{i \in \N}$ is a stationary vector ARMA process with absolutely continuous innovations \citep[see][]{Mok88}.

The following result, proven in Appendix~\ref{proof:prop:H0}, can be regarded as a multivariate extension of Proposition 3.4 of \cite{HolKoj21}. It shows that imposing that $\eta$ is strictly positive in Theorem~\ref{thm:H0} and Corollary~\ref{cor:H0:mixing} is necessary and sufficient for ensuring that the limiting random variable $\Lc_{p,\eta}$ is almost surely finite.

\begin{prop}
  \label{prop:eta:0}
For any fixed $M > 0$,
\begin{equation*}
  \P \Big(
  \sup_{1 \leq s \leq t < \infty} t^{-\frac32} \| t \bm W(s) - s \bm W(t) \|_{I_p} \geq M \Big) = 1.
\end{equation*}
\end{prop}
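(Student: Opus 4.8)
The plan is to reduce the statement to the one-dimensional case and then, using the time-inversion property of Brownian motion, to exhibit a countable family of pairs $(s,t)$ along which the functional is almost surely unbounded. Let $W$ denote the first coordinate of $\bm W$, a one-dimensional standard Brownian motion. Since $\|\bm z\|_{I_p} = \|\bm z\|_2/\sqrt p \geq |z_1|/\sqrt p$ for every $\bm z = (z_1,\dots,z_p) \in \R^p$, it suffices to show that, almost surely,
\[
  \sup_{1 \leq s \leq t < \infty} t^{-\frac32}\,\big| t W(s) - s W(t) \big| \;=\; +\infty ,
\]
because this yields $\sup_{1 \leq s \leq t < \infty} t^{-\frac32}\|t\bm W(s) - s\bm W(t)\|_{I_p} = +\infty$ almost surely, which is stronger than the stated assertion (it holds for all $M$ at once).

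The key device is the time inversion $V(u) := u W(1/u)$, $u > 0$, which is again a standard Brownian motion; in particular it has independent increments with $V(v) - V(u) \sim \mathcal N(0, v-u)$ for $0 < u \leq v$. Under the substitution $s = 1/v$, $t = 1/u$, the region $\{1 \leq s \leq t < \infty\}$ corresponds bijectively to $\{0 < u \leq v \leq 1\}$, and a direct computation gives
\[
  t^{-\frac32}\,\big| t W(s) - s W(t) \big| \;=\; u^{\frac32}\Big| \tfrac1u W(1/v) - \tfrac1v W(1/u)\Big| \;=\; \frac{\sqrt u}{v}\,\big| V(v) - V(u)\big| ,
\]
so the supremum above equals $\sup_{0 < u \leq v \leq 1} \tfrac{\sqrt u}{v}\,|V(v) - V(u)|$. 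The essential gain is that this time-inverted functional depends on $V$ only through a single increment, which makes an independence argument available.

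Fix any $\lambda \in (0,1)$ and restrict the last supremum to the pairs $(u_k, v_k) = (\lambda^{k+1}, \lambda^k)$, $k \geq 0$. The increments $\Delta_k := V(\lambda^k) - V(\lambda^{k+1})$ are taken over the pairwise disjoint intervals $(\lambda^{k+1}, \lambda^k]$, $k \geq 0$, hence are independent, with $\Delta_k \sim \mathcal N(0, (1-\lambda)\lambda^k)$; an elementary computation then gives $\tfrac{\sqrt{u_k}}{v_k}\,\Delta_k = \lambda^{(1-k)/2}\Delta_k \sim \mathcal N(0, \lambda(1-\lambda))$, so the random variables $\tfrac{\sqrt{u_k}}{v_k}\,|\Delta_k|$, $k \geq 0$, are i.i.d.\ with common distribution $\sqrt{\lambda(1-\lambda)}\,|Z|$, $Z$ standard normal, which charges $(M,\infty)$ for every $M>0$. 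By the second Borel--Cantelli lemma, for each $M \in \N$ one has $\tfrac{\sqrt{u_k}}{v_k}\,|\Delta_k| > M$ for infinitely many $k$ almost surely; intersecting over $M \in \N$ gives $\sup_{k \geq 0} \tfrac{\sqrt{u_k}}{v_k}\,|\Delta_k| = +\infty$ almost surely, whence $\sup_{0 < u \leq v \leq 1} \tfrac{\sqrt u}{v}\,|V(v) - V(u)| = +\infty$ almost surely. Combined with the first paragraph, this proves the proposition.

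The only genuinely substantive step is the time-inversion identity $t^{-\frac32}|t W(s) - s W(t)| = \tfrac{\sqrt u}{v}|V(v) - V(u)|$, which collapses the two-parameter Gaussian field into a single increment of a Brownian motion; once this is in hand, the disjointness of the intervals $(\lambda^{k+1},\lambda^k]$, the variance bookkeeping, and the Borel--Cantelli conclusion are all routine. (Only one verification is needed beyond the algebra: that each pair $(u_k,v_k)$ indeed arises from a pair $(s_k,t_k)$ with $1 \le s_k \le t_k < \infty$, which is immediate since $\lambda \in (0,1)$.)
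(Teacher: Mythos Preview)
Your proof is correct. The reduction to the first coordinate is exactly what the paper does; the paper then simply cites Proposition~3.4 of \cite{HolKoj21} for the one-dimensional statement, whereas you supply a self-contained argument. Your time-inversion identity $t^{-3/2}|tW(s)-sW(t)| = \tfrac{\sqrt{u}}{v}\,|V(v)-V(u)|$ is correct (the algebra checks out), and the resulting Borel--Cantelli argument on the geometric partition $(\lambda^{k+1},\lambda^k]$ is clean: the increments are independent and the scaled increments are genuinely i.i.d.\ $\mathcal N(0,\lambda(1-\lambda))$, so divergence of the supremum is immediate. This is arguably more informative than the paper's proof, which outsources the substantive step; the time-inversion device is the natural way to see why the exponent $\tfrac32$ is exactly critical, since it converts the two-parameter functional into a supremum of i.i.d.\ normals with no damping factor.
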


\begin{remark}
In relation to the previous result, note that $t^\eta$ in~\eqref{eq:w} could actually be replaced by $h(t)$, where $h(t) = \sqrt{\log \log t}$ when $t>e^e$ and $h(t)=1$ when $t\le e^e$. Indeed, as explained in Remark~3.5 of \cite{HolKoj21}, by the law of the iterated logarithm for Brownian motion, all the results stated before Proposition~\ref{prop:eta:0} should continue to hold with such a modification which could be considered optimal in the sense that, as $t\to \infty$, $h$ diverges slower to infinity than $t \mapsto t^\eta$ for any $\eta > 0$. We did not however consider such a change as it is unwieldy from a practical perspective as shall become clearer from Section~\ref{sec:quant}.  
\end{remark}

The next proposition, also proven in Appendix~\ref{proof:prop:H0}, shows that the weak limit appearing in Theorem~\ref{thm:H0} and Corollary~\ref{cor:H0:mixing} is absolutely continuous. The proof is an application of Theorem 7.1 of \cite{DavLif84} together with an argument allowing us to reduce the problem to compact sets.

\begin{prop}
  \label{prop:L:continuous}
For any $\eta > 0$ and $p \in \N $, $\Lc_{p,\eta}$ is an absolutely continuous random variable.
\end{prop}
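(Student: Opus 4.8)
The plan is to reduce the statement to a known criterion for absolute continuity of the supremum of a Gaussian process, namely Theorem~7.1 of \cite{DavLif84}, which asserts that the supremum of a separable Gaussian process over a parameter set is absolutely continuous on $(0,\infty)$ provided the process is a.s.\ bounded (and not a.s.\ constant). The natural object to work with is the centred Gaussian process
\begin{equation*}
  Z(s,t) = t^{-\frac32-\eta}\, \| t \bm W(s) - s \bm W(t) \|_{I_p}, \qquad 1 \leq s \leq t < \infty,
\end{equation*}
so that $\Lc_{p,\eta} = \sup_{1 \le s \le t} Z(s,t)$. However, $Z$ is \emph{not} itself a Gaussian process (it is a norm of a Gaussian vector), so one cannot apply \cite{DavLif84} directly to $Z$; instead I would apply it to the genuinely Gaussian field $(s,t) \mapsto t^{-\frac32-\eta}\bigl(t\bm W(s)-s\bm W(t)\bigr) \in \R^p$, viewed as a real-valued Gaussian process indexed by $\{1\le s\le t\} \times \{v \in \R^p : \|v\|_{I_p}\le 1\}$ via $((s,t),v) \mapsto t^{-\frac32-\eta}\langle t\bm W(s)-s\bm W(t), v\rangle$. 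Its supremum over $v$ in the closed unit ball (with respect to the relevant norm) equals $Z(s,t)$, so the supremum of this Gaussian field over the enlarged index set is exactly $\Lc_{p,\eta}$, and \cite{DavLif84} then gives absolute continuity once boundedness is established. (Alternatively one could invoke a vector-valued version of the Davydov--Lifshits result, but the ``dualize the norm'' trick keeps us within the scalar framework.)

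The first substantive step is therefore to verify that $\Lc_{p,\eta}<\infty$ almost surely — but this is already granted by Theorem~\ref{thm:H0} (the ``in addition'' clause), so no new work is needed there beyond citing it. The second step is the ``argument allowing us to reduce the problem to compact sets'' mentioned in the excerpt: the index set $\{1 \le s \le t < \infty\}$ is not compact, and the Davydov--Lifshits theorem is cleanest for processes on compact (or at least nice Polish) parameter spaces with a continuous modification. I would handle this by decomposing the supremum according to whether $t$ is large: write
\begin{equation*}
  \Lc_{p,\eta} = \max\Bigl\{ \sup_{1\le s\le t\le T} Z(s,t),\ \sup_{t > T}\, \sup_{1 \le s \le t} Z(s,t) \Bigr\},
\end{equation*}
and argue that for a (random, a.s.\ finite) $T$ the second term is dominated by the first, or more robustly, show that the tail supremum over $t>T$ tends to $0$ a.s.\ as $T\to\infty$ (this follows from the same self-similarity/law-of-iterated-logarithm estimates used to prove a.s.\ finiteness in Theorem~\ref{thm:H0}: $t^{-\eta}$ beats the growth of $\|\bm W(t)\|$). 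Then $\Lc_{p,\eta} = \sup_{1\le s\le t\le T_0}Z(s,t)$ for some fixed large deterministic $T_0$ on an event of probability arbitrarily close to $1$; more precisely, one writes $\Lc_{p,\eta}$ as an a.s.\ increasing limit of the compact-set suprema $\Lc_{p,\eta}^{(T)} := \sup_{1\le s\le t\le T} Z(s,t)$, each of which is absolutely continuous by Davydov--Lifshits applied on the compact triangle $\{1\le s\le t\le T\}$ (where $\bm W$ has a continuous modification and the process is clearly bounded and nonconstant). Since $\Lc_{p,\eta}^{(T)} \uparrow \Lc_{p,\eta}$ a.s., one must upgrade ``each $\Lc_{p,\eta}^{(T)}$ absolutely continuous'' to ``the limit is absolutely continuous.''

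That last upgrade is the step I expect to be the main obstacle, because an increasing pointwise limit of absolutely continuous random variables need not be absolutely continuous in general. The clean way around it is to use that $\{\Lc_{p,\eta} > x\} = \bigcup_T \{\Lc_{p,\eta}^{(T)} > x\}$ together with continuity of the distribution functions: it suffices to show $\P(\Lc_{p,\eta} = x) = 0$ for Lebesgue-a.e.\ $x$ (in fact for every $x>0$), and then absolute continuity can be obtained by noting that $\Lc_{p,\eta}$ itself, being the supremum of a bounded Gaussian field over a Polish index set with an a.s.\ continuous version on each compact piece, satisfies the hypotheses of Davydov--Lifshits once we know it is a.s.\ finite — so the genuinely economical route is simply to check directly that the full (non-compact) index set, equipped with the pseudometric induced by the covariance, is such that the Gaussian field admits a separable, a.s.\ bounded version, which is precisely the content of Theorem~\ref{thm:H0}, and then apply \cite{DavLif84} once and for all without any limiting argument. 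I would present both: the compact-truncation reduction to make the boundedness/continuity hypotheses transparent, and then a single application of Theorem~7.1 of \cite{DavLif84} to conclude. The only genuinely delicate point to get right is confirming that the process is not almost surely constant (trivially true here since, e.g., $Z(1,1)=0$ while $Z$ takes strictly positive values with positive probability), which rules out the degenerate exceptional case in the Davydov--Lifshits statement.
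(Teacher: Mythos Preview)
Your overall strategy is sound, but the paper's route is somewhat different and shorter, and it is worth seeing why.

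\textbf{The main difference.} You plan to ``dualize the norm'' so as to view $\Lc_{p,\eta}$ as the supremum of a \emph{scalar} Gaussian field indexed by $\{1\le s\le t\}\times\{v:\|v\|_{I_p}\le 1\}$, and then invoke the supremum-of-Gaussian-process version of \cite{DavLif84}. The paper instead applies Theorem~7.1 of \cite{DavLif84} in its \emph{convex functional} form: for each $v\ge 2$ it considers the map
\[
f_v:C([0,v],\R^p)\to[0,\infty),\qquad f_v(\bm w)=\sup_{1\le s\le t\le v}t^{-\frac32-\eta}\|t\bm w(s)-s\bm w(t)\|_{I_p},
\]
checks that $f_v$ is continuous and convex on the separable locally convex space $C([0,v],\R^p)$ (convexity is immediate because $\bm w\mapsto t\bm w(s)-s\bm w(t)$ is linear and $\|\cdot\|_{I_p}$ is a norm), and concludes that $f_v(\bm W)$ is absolutely continuous on $(0,\infty)$. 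An elementary check that $\P(f_v(\bm W)=0)=0$ (take $s=1$, $t=2$) removes the atom at zero. This handles the vector-valued nature of $\bm W$ directly, without your dualization layer.

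\textbf{The compact reduction.} Here you and the paper are aligned in spirit. The paper bounds the tail via
\[
\sup_{t\ge u}\,\sup_{1\le s\le t} t^{-\frac32-\eta}\|t\bm W(s)-s\bm W(t)\|_{I_p}\;\le\; u^{-\eta/2}\,\Lc_{p,\eta/2},
\]
and uses the almost-sure finiteness of $\Lc_{p,\eta/2}$ (Theorem~\ref{thm:H0}) to make the right-hand side smaller than any fixed $r/2$ with probability arbitrarily close to~$1$. Then, for any $r>0$, on that high-probability event the full supremum equals $r$ only if the truncated supremum $f_{u_0}(\bm W)$ equals $r$, which has probability zero by the absolute continuity of $f_{u_0}(\bm W)$.

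\textbf{Your flagged obstacle is real.} You correctly note that an increasing limit of absolutely continuous random variables need not be absolutely continuous, and that the argument ``$\P(\Lc_{p,\eta}=r)=0$ for every $r>0$'' yields only continuity of the distribution function, not absolute continuity with respect to Lebesgue measure. The paper's written argument in fact establishes exactly this atomlessness and no more. For the paper's purposes (the Portmanteau step in~\eqref{eq:typeIerror:asym} and the existence of the quantile $q_{p,\eta}^{\sss{(1-\alpha)}}$) continuity is all that is used, so the distinction is harmless there; but if you want absolute continuity proper, your proposed fix---apply \cite{DavLif84} once on the full index set after verifying a separable, a.s.\ bounded version exists---is the right way to close the gap, and the convex-functional formulation makes that verification cleaner than the dualization route.
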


Let us finally explain how Theorem~\ref{thm:H0} can be used to carry out the monitoring in practice for a chosen vector of points $\Pc$ for which Condition~\ref{cond:sigma} is assumed to hold. Given a significance level $\alpha \in (0,\frac12)$, suppose that we are able to compute $q_{p,\eta}^{\sss{(1 - \alpha)}}$, the $(1-\alpha)$-quantile of $\Lc_{p,\eta}$. Then, under $H_0$ in~\eqref{eq:H0} and Condition~\ref{cond:H0}, from the Portmanteau theorem, \eqref{eq:typeIerror:asym} holds. Hence, for large $m$, we can expect that, under $H_0$ and Condition~\ref{cond:H0},
$$
\P \Big(  D_m^\sPc(k) > q_{p,\eta}^{\sss{(1 - \alpha)}} (k/m)^{\frac32+\eta} \text{ for some } k \geq m+1 \Big)  \simeq \alpha.
$$
In practice, after the arrival of observation $\bm X_k$, $k > m$, $D_m^\sPc(k)$ is computed from $\bm X_1,\dots, \bm X_k$ and compared to the threshold $q_{p,\eta}^{\sss{(1 - \alpha)}} (k/m)^{\frac32+\eta}$ (or, equivalently, $(m/k)^{\frac32+\eta}D_m^\sPc(k)$ is computed and compared to $q_{p,\eta}^{\sss{(1 - \alpha)}}$). If greater, the null hypothesis is rejected and the monitoring stops. Otherwise, $\bm X_{k+1}$ is collected and the previous iteration is repeated using the $k+1$ available observations.

\begin{remark}
  \label{rem:Dm:sup:asym}
  Under a suitable transformation of Condition~\ref{cond:H0}, we suspect that it is possible to obtain an analogue of Theorem~\ref{thm:H0} for the detector $D_m^{\sup}$ in~\eqref{eq:Dm:sup}. From the closed-end results obtained in Proposition 2.5 of \cite{KojVer21}, we can actually guess the form of the corresponding weak limit. This leads us to believe that, under $H_0$ in~\eqref{eq:H0} and a suitable version of Condition~\ref{cond:H0},
\begin{equation}
  \label{eq:conj}
  \sup_{k>m} (m/k)^{\frac32+\eta}  D_m^{\sup}(k) \leadsto \sup_{1 \leq s \leq t < \infty} t^{-\frac32-\eta} \sup_{\bm x \in \R^d} | t K(s, \bm x) - s K(t, \bm x) |,
\end{equation}
where the limit is almost surely finite and $K$ is a \emph{Kiefer process}, that is, a two-parameter centered Gaussian process whose covariance function is given, for any $s,t \in [0,\infty)$ and $\bm x,\bm y \in \R^d$, by 
\begin{multline}
  \label{eq:Gamma}
  \Gamma(s,t,\bm x,\bm y) = \min(s, t)  \Big( \Cov\{1(\bm X_1 \leq \bm x), 1(\bm X_1 \leq \bm y)\} \\ + \sum_{i=2}^\infty \big[ \Cov\{1(\bm X_1 \leq \bm x), 1(\bm X_i \leq \bm y) \} + \Cov\{1(\bm X_i \leq \bm x), 1(\bm X_1 \leq \bm y) \} \big] \Big).
\end{multline}
Thus, it appears that in general, the weak limit in~\eqref{eq:conj} depends on the characteristics of the underlying time series $( \bm X_i )_{i \in \N}$.  This implies that in general, to carry out monitoring based on the detector $D_m^{\sup}$, one would need to be able to estimate high quantiles of the weak limit in~\eqref{eq:conj} prior to every execution of the procedure. Given the unwieldy form of the weak limit, this seems to be a major obstacle to the use of the detector $D_m^{\sup}$. One exception may be when the monitored observations are univariate ($d=1$), continuous and serially independent. In that case, using a change of variable $F(x) \mapsto u$ (as is classically done for instance when dealing with a Brownian bridge), it can be verified that the weak limit in~\eqref{eq:conj} no longer depends on the characteristics of the underlying time series $( X_i^{\sss[1]} )_{i \in \N}$. To estimate high quantiles of the resulting weak limit, one could then proceed as in forthcoming Section~\ref{sec:quant} where the estimation of high quantiles of the weak limit $\Lc_{p,\eta}$ appearing in Theorem~\ref{thm:H0} is addressed. Due to the apparently rather limited scope of application of an open-end monitoring procedure based on $D_m^{\sup}$, we do not pursue the investigation of such a sequential test in this work and leave this for future research.
\end{remark}

\subsection{Asymptotics under alternatives}
\label{sec:asym:H1}

To complement the previously stated asymptotic results, it is necessary to study the asymptotics of the monitoring procedure under sequences of alternatives to $H_0$ in~\eqref{eq:H0}. Because it is based on the detector $D_m^\sPc$ in~\eqref{eq:Dm}, the studied monitoring procedure is expected to be particularly sensitive to alternative hypotheses of the form 
\begin{multline*}
  H_1: \, \exists \, k^\star \geq m \text{ and } \ell \in \llb 1, p \rrb \text{ such that }  \P(\bm X_1 \leq \bm x_\ell) = \dots = \P(\bm X_{k^\star} \leq \bm x_\ell)  \\ \neq \P(\bm X_{k^\star+1} \leq \bm x_\ell) = \P(\bm X_{k^\star+2} \leq \bm x_\ell) = \dots
\end{multline*}
corresponding to a change in the d.f.\ at one or more of the chosen evaluation points. Note that this can be interpreted as a change in  mean since by rewriting in terms of the univariate time series $\big( Y_i^{\sPc,\sss[\ell]} \big)_{i \in \N} = \big( \1(\bm X_i \leq \bm x_\ell) \big)_{i \in \N}$, we get the following equivalent statement
\begin{multline}
  \label{eq:H1:mean}
  H_1: \, \exists \, k^\star \geq m \text{ and } \ell \in \llb 1, p \rrb \text{ such that }  \Ex(Y_1^{\sPc,\sss[\ell]}) = \dots = \Ex(Y_{k^\star}^{\sPc,\sss[\ell]})  \\ \neq \Ex(Y_{k^\star+1}^{\sPc,\sss[\ell]}) = \Ex(Y_{k^\star+2}^{\sPc,\sss[\ell]}) = \dots.
\end{multline}
As already mentioned at the beginning of Section~\ref{sec:det:thres:asym}, monitoring procedures designed to be particularly sensitive to changes in the mean were studied in \cite{HolKoj21}. Theorem~3.7 and Condition~3.7 in the latter reference specifically provide conditions under which, for a sequence of alternatives to $H_0$ related to $H_1$ in~\eqref{eq:H1:mean}, $\sup_{k>m} (m/k)^{\frac32+\eta}  E_m^{\bm x_\ell}(k) \p \infty$, where $E_m^{\bm x_\ell}$ is defined as in~\eqref{eq:Em} with $\bm x = \bm x_\ell$. For the sake of brevity, we do not restate these conditions with the notation used in this work as they are lengthy to write. Very roughly speaking, they imply that for ``early'' or ``late'' changes in the d.f.\ of the observations at $\bm x_\ell$, the scaled detector $k \mapsto (m/k)^{\frac32+\eta}  E_m^{\bm x_\ell}(k)$ will end up exceeding any fixed threshold provided $m$ is sufficiently large. The following result, proven in Appendix~\ref{proof:prop:H1}, shows that, as $\bm x_\ell \in \Pc$, the same will hold for the scaled detector $k \mapsto (m/k)^{\frac32+\eta}   D_m^\sPc(k)$, where $D_m^\sPc$ is defined in~\eqref{eq:Dm}.

\begin{prop}
  \label{prop:H1}
  Let $\eta > 0$ and assume that for some $\bm x_\ell \in \Pc$, $\sup_{k>m} (m/k)^{\frac32+\eta} E_m^{\bm x_\ell}(k) \p \infty$. Then, if $\Sigma_m^\sPc \p \Sigma^\sPc$, where $\Sigma^\sPc$ is positive-definite and $\Sigma_m^\sPc$ is positive-definite almost surely for all $m \in \N$,
$$
\sup_{k>m} (m/k)^{\frac32+\eta} D_m^\sPc(k) \p \infty.
$$
\end{prop}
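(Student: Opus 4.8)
The plan is to bound the detector $D_m^\sPc$ from below by a fixed multiple of the one-point detector $E_m^{\bm x_\ell}$ of~\eqref{eq:Em}, the multiplicative factor being controlled through the smallest eigenvalue of $(\Sigma_m^\sPc)^{-1}$, and then to invoke the hypothesis that the scaled version of $E_m^{\bm x_\ell}$ diverges in probability. The first step is a deterministic, pointwise-in-$k$ lower bound. Denote by $\lambda_{\min}(M)$ and $\lambda_{\max}(M)$ the smallest and largest eigenvalues of a $p \times p$ symmetric positive-definite matrix $M$. For any $\bm y = (y^{\sss[1]},\dots,y^{\sss[p]}) \in \R^p$ one has $\| \bm y \|_M^2 = (\bm y^\top M \bm y)/p \geq (\lambda_{\min}(M)/p)\,\| \bm y \|_2^2 \geq (\lambda_{\min}(M)/p)\,(y^{\sss[\ell]})^2$. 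Applying this with $M = (\Sigma_m^\sPc)^{-1}$ (positive-definite almost surely by assumption, hence $\lambda_{\min}((\Sigma_m^\sPc)^{-1}) > 0$ a.s.) and $\bm y = \bm F_{1:j}^\sPc - \bm F_{j+1:k}^\sPc$, whose $\ell$th coordinate equals $F_{1:j}(\bm x_\ell) - F_{j+1:k}(\bm x_\ell)$, then multiplying by $j(k-j)/m^{\frac32}$ and maximizing over $j \in \llb m,k\llb$ in~\eqref{eq:Dm}, one obtains
\[
  D_m^\sPc(k) \;\geq\; \sqrt{\lambda_{\min}\big((\Sigma_m^\sPc)^{-1}\big)/p}\; E_m^{\bm x_\ell}(k), \qquad k \geq m+1,
\]
since a non-negative constant factor comes out of the maximum over $j$. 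As this factor does not depend on $k$, multiplying by $(m/k)^{\frac32+\eta}$ and taking the supremum over integers $k > m$ preserves the inequality, yielding $\sup_{k>m}(m/k)^{\frac32+\eta} D_m^\sPc(k) \geq \sqrt{\lambda_{\min}((\Sigma_m^\sPc)^{-1})/p}\; \sup_{k>m}(m/k)^{\frac32+\eta} E_m^{\bm x_\ell}(k)$.

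The second step is to control the random prefactor. Since $\lambda_{\min}((\Sigma_m^\sPc)^{-1}) = 1/\lambda_{\max}(\Sigma_m^\sPc)$ and $M \mapsto \lambda_{\max}(M)$ is continuous (indeed $1$-Lipschitz for the operator norm by Weyl's inequality), the assumption $\Sigma_m^\sPc \p \Sigma^\sPc$ together with the continuous mapping theorem gives $\lambda_{\min}((\Sigma_m^\sPc)^{-1}) \p 1/\lambda_{\max}(\Sigma^\sPc)$, and the limit is a strictly positive constant because $\Sigma^\sPc$ is positive-definite. Writing $c := 1/\{2\lambda_{\max}(\Sigma^\sPc)\} > 0$, we therefore have $\P\big(\lambda_{\min}((\Sigma_m^\sPc)^{-1}) \geq c\big) \to 1$.

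The final step is a routine argument. Fix arbitrary $N > 0$ and $\delta \in (0,1)$. For all $m$ large enough, $\P\big(\lambda_{\min}((\Sigma_m^\sPc)^{-1}) \geq c\big) \geq 1 - \delta/2$, and, by the hypothesis $\sup_{k>m}(m/k)^{\frac32+\eta} E_m^{\bm x_\ell}(k) \p \infty$, also $\P\big(\sup_{k>m}(m/k)^{\frac32+\eta} E_m^{\bm x_\ell}(k) \geq N\sqrt{p/c}\big) \geq 1 - \delta/2$. On the intersection of these two events the lower bound above gives $\sup_{k>m}(m/k)^{\frac32+\eta} D_m^\sPc(k) \geq \sqrt{c/p}\cdot N\sqrt{p/c} = N$, whence $\P\big(\sup_{k>m}(m/k)^{\frac32+\eta} D_m^\sPc(k) \geq N\big) \geq 1 - \delta$ for all large $m$. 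Since $N$ and $\delta$ were arbitrary, this is precisely $\sup_{k>m}(m/k)^{\frac32+\eta} D_m^\sPc(k) \p \infty$.

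I do not expect a substantial obstacle: the argument reduces to the elementary norm/eigenvalue inequality of the first step combined with the continuous mapping theorem. The only point requiring care is that the random prefactor $\sqrt{\lambda_{\min}((\Sigma_m^\sPc)^{-1})/p}$ must stay bounded away from $0$ with probability tending to one, and this is exactly where the positive-definiteness of the limit $\Sigma^\sPc$ (assumed in the statement, cf.\ Condition~\ref{cond:sigma}) enters — without it the prefactor could degenerate in the limit and the bound would be vacuous.
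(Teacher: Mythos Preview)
Your proof is correct and follows essentially the same route as the paper's: both establish the pointwise lower bound $D_m^\sPc(k) \geq p^{-1/2}\sqrt{\lambda_{\min}((\Sigma_m^\sPc)^{-1})}\,E_m^{\bm x_\ell}(k)$ via the elementary eigenvalue inequality for the Mahalanobis-like norm, and then use the continuous mapping theorem (continuity of eigenvalues) together with positive-definiteness of $\Sigma^\sPc$ to ensure the prefactor stays bounded away from zero in probability. The only cosmetic difference is that the paper writes the prefactor as $\psi(\Sigma_m^\sPc)=(\lambda_{\max}(\Sigma_m^\sPc))^{-1/2}$ directly from the eigenvalues of $\Sigma_m^\sPc$, whereas you pass through $\lambda_{\min}((\Sigma_m^\sPc)^{-1})=1/\lambda_{\max}(\Sigma_m^\sPc)$; the resulting bounds are identical.
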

\section{The case of continuous observations: practical implementation and additional asymptotic results under the null}
\label{sec:cont}

Prior to using the monitoring procedure based on the detector $D_m^\sPc$ in~\eqref{eq:Dm}, the user needs to choose the points $\Pc = (\bm x_1, \dots, \bm x_p)$. Following the discussion initiated in Remark~\ref{remark:points}, using the learning sample $\bm X_1,\dots,\bm X_m$ to do so seems meaningful. As mentioned in the latter remark, when the observations are discrete, a natural possibility consists of choosing $\bm x_1,\dots,\bm x_p$ from a subset of frequently occurring observations. We focus in this section on the more complicated situation when the learning sample seems to be a stretch from a continuous time series. 

Fix $\eta > 0$ and assume that $m$ is large. Having Theorem~\ref{thm:H0} as well as Remark~\ref{rem:L} and Corollary~\ref{cor:H0:mixing} in mind, one can hope that, under $H_0$ in~\eqref{eq:H0}, $\sup_{k>m} (m/k)^{\frac32+\eta} D_m^\sPc(k)$ has roughly the same distribution as the random variable $\Lc_{p,\eta}$ for all vectors of points $\Pc$ such that Condition~\ref{cond:sigma} holds. A user who is interested in very specific changes in the d.f.\ may choose $\Pc$ accordingly.  Otherwise, one natural possibility is to select the vector of points $\Pc$ such that the coordinates of each of the $p$ points are empirical quantiles computed from the coordinate samples of the learning sample $\bm X_1,\dots,\bm X_m$. As we continue, for any $1 \leq j \leq k$, let  $F_{j:k}^{\sss{[1]}},\dots,F_{j:k}^{\sss[d]}$ be the $d$ univariate margins of $F_{j:k}$ defined in~\eqref{eq:Fjk}. Also, for any univariate d.f.\ $H$, let $H^{-1}$ denote its associated quantile function (generalized inverse) defined by $H^{-1}(y) = \inf\{x \in \R : H(x) \geq y \}$, $y \in [0,1]$, with the convention that $\inf \emptyset = \infty$. Finally, let $\bm \Xc_{1,m}, \dots, \bm \Xc_{p,m}$ denote the points $\bm x_1, \dots, \bm x_p$ when chosen automatically from the learning sample and let $\Pc_m = (\bm \Xc_{1,m}, \dots, \bm \Xc_{p,m} )$.

\subsection{The univariate case}
\label{sec:univ:points}
 
When $d = 1$, a natural instantiation of the previous generic strategy for choosing $\Pc_m$ consists of setting $\Xc_{j,m}^{\sss [1]} = F_{1:m}^{\sss[1], -1} \big( j/(p+1) \big)$, $j \in \llb 1, p \rrb$, that is, the $\Xc_{j,m}^{\sss[1]}$'s are merely taken as the $j/(p+1)$-empirical quantiles of the learning sample $X_1^{\sss[1]}, \dots, X_m^{\sss[1]}$. As we will see in Section~\ref{sec:MC}, this strategy seems to lead to powerful multi-purpose open-end monitoring procedures in the case of univariate observations.  

\subsection{The multivariate case}
\label{sec:mult:points}

A natural first idea when $d > 1$ is simply to apply the univariate strategy above to each component sample of the learning sample $\bm X_1, \dots, \bm X_m$, yielding sets $\Pc_m^{\sss[i]}=\{\Xc_{j,m}^{\sss [i]}: j\in \llb 1, p \rrb\}$ of size $p$ for all $i \in \llb 1, d \rrb$.  For each dimension $i \in \llb 1, d \rrb$, each selected coordinate $\Xc_{j,m}^{\sss [i]}$, $j\in \llb 1, p \rrb$, typically corresponds to a unique $d$-dimensional vector of the learning sample. One could then define $\Pc_m$ to be the union of the corresponding $d$ sets of $d$-dimensional points, which implies that $p \leq |\Pc_m| \leq dp$. A preliminary implementation of this strategy showed however that (among other things) this approach can sometimes lead to the selection of points in the learning sample that are too close to the ``border'' of the point cloud $\bm X_1, \dots, \bm X_m$, resulting in numerical difficulties when computing the inverse or the square root of $\Sigma_m^{\sPc_m}$ (see also Remark~\ref{remark:points}). 

Another natural adaption of the strategy considered in the univariate case would be to choose an integer $r \geq 1$, consider the uniformly-spaced grid (containing $r^d$ points)
\begin{equation}
  \label{eq:unif:grid}
  \Pi = \{(j_1/(r+1), \dots,j_d/(r+1)) : j_1,\dots,j_d \in \llb 1, r \rrb   \} \subseteq (0,1)^d
\end{equation}
and define $\Pc_m$ as consisting of the $r^d$ points
\begin{equation}
  \label{eq:candidate:points}
\left\{ \left( F_{1:m}^{\sss[1], -1} (\pi^{\sss[1]}), \dots, F_{1:m}^{\sss[d], -1} (\pi^{\sss[d]}) \right) : \bm \pi \in \Pi \right\}.   
\end{equation}
This strategy needs however to be refined because some of the above points might not belong to the support of $F$ which, as hinted at in Remark~\ref{remark:points}, is a necessary condition for Condition~\ref{cond:sigma} to hold. Let $\bm U_1, \dots, \bm U_m$ be the unobservable sample obtained from the learning sample $\bm X_1, \dots, \bm X_m$ by probability integral transformations, that is, let
\begin{equation}
  \label{eq:prob:int:trans}
\bm U_i = (U_i^{\sss[1]},\dots,U_i^{\sss[d]}) = \big( F^{\sss[1]}(X_i^{\sss[1]}),\dots,F^{\sss[d]}(X_i^{\sss[d]}) \big), 
\end{equation}
where $F^{\sss[1]},\dots,F^{\sss[d]}$ are the $d$ unknown univariate margins of $F$. Note in passing that $\bm U_1, \dots, \bm U_m$ can be regarded as a stretch from a $d$-dimensional time series of continuous random vectors with contemporary d.f.\ $C$, where $C$ is the (unique) copula of $F$  \citep[see, e.g.,][]{Skl59} satisfying
$$
C(\bm u) = F \big( F^{\sss[1],-1} (u^{\sss[1]}), \dots, F^{\sss[d],-1} (u^{\sss[d]}) \big), \qquad \bm u \in [0,1]^d,
$$
and
$$
F(\bm x) = C \big( F^{\sss[1]}(x^{\sss[1]}),\dots,F^{\sss[d]}(x^{\sss[d]}) \big), \qquad \bm x \in \R^d.
$$
Adapting the approach briefly described in Section 4.2 of \cite{LiGen13}, we propose to keep in $\Pc_m$ only those points in~\eqref{eq:candidate:points} constructed from grid points in~\eqref{eq:unif:grid} whose ``neighborhood'' contains a sufficiently large proportion of the $\bm U_i$'s. As $F^{\sss[1]},\dots,F^{\sss[d]}$ are unknown, we follow one of the classical approaches used in the copula literature \citep[see, e.g.,][and the references therein]{HofKojMaeYan18} and use $\pobs{U}_1, \dots, \pobs{U}_m$ as a proxy for $\bm U_1, \dots, \bm U_m$, where
\begin{equation}
  \label{eq:pseudo:obs}
\pobs{U}_i = \frac{m}{m+1} \big( F_{1:m}^{\sss[1]}(X_i^{\sss[1]}),\dots,F_{1:m}^{\sss[d]}(X_i^{\sss[d]}) \big).
\end{equation}
For any $\bm a, \bm b \in [0,1]^d$ such that $\bm a < \bm b$, let $(\bm a, \bm b] = \{\bm u \in [0,1]^d : \bm a < \bm u \leq \bm b \}$. Furthermore, let $\nu_m$ be the empirical measure of $\pobs{U}_1, \dots, \pobs{U}_m$ and let $\bm s = \big( 1/(r+1), \dots, 1/(r+1) \big) \in \R^d$. Given $\bm \pi \in \Pi$ and if the $d$ components of $\bm U_1$ are independent, it is expected that the proportion of $\bm U_i$'s in $(\bm \pi - \bm s, \bm \pi ]$ be approximately equal to $1/(r+1)^d$. This motivates the following strategy: we choose to retain in $\Pc_m$ only the points in~\eqref{eq:candidate:points} constructed from grid points $\bm \pi \in \Pi_m$, where 
\begin{equation}
  \label{eq:Gc:m}
\Pi_m = \left\{ \bm \pi \in \Pi : \nu_m \big( (\bm \pi - \bm s, \bm \pi ] \big) > \frac{1}{\kappa (r+1)^d} \right\},
\end{equation}
$\Pi$ is defined in~\eqref{eq:unif:grid}  and $\kappa > 1$ is a user-chosen parameter. The number of automatically chosen points $p = |\Pi_m|$ depends on $m$. Figure~\ref{fig:point:sel} illustrates the automatic choice of $\Pc_m$ in the bivariate case for $r = 4$ and $\kappa = 1.5$. Values for $\kappa$ and $r$ appearing to lead to powerful multi-purpose open-end monitoring procedures will be recommended in the case $d \in \{2,3\}$ in Section~\ref{sec:MC}.

\begin{figure}[t!]
\begin{center}
  \includegraphics*[width=0.75\linewidth]{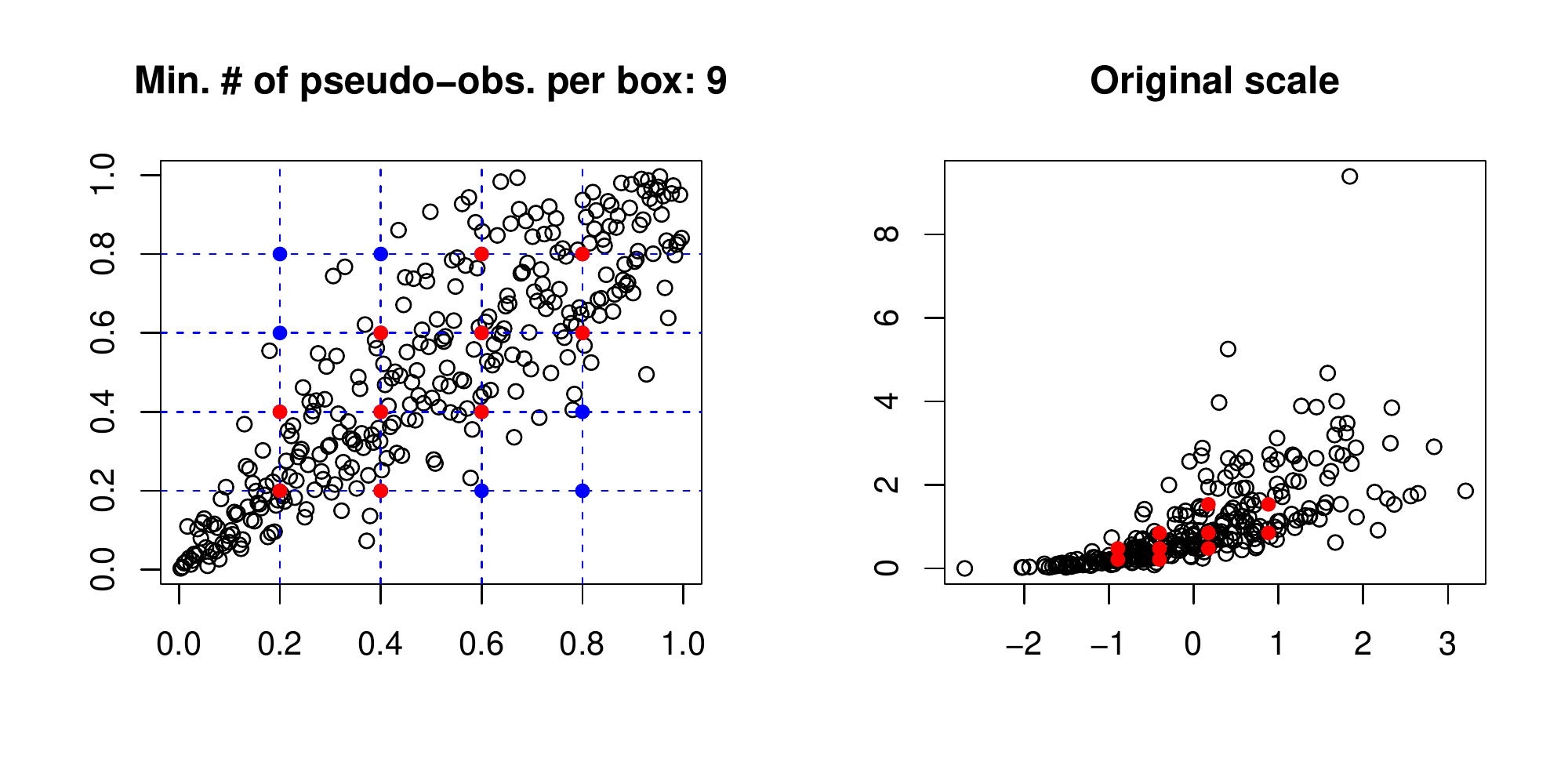}
  \caption{\label{fig:point:sel} Automatic choice of $\Pc_m$ in the bivariate case for $m=300$, $r = 4$ and $\kappa = 1.5$. Left: scatterplot of the ``pseudo-observations'' $\pobs{U}_1, \dots, \pobs{U}_m$ obtained from the learning sample, initial uniformly-spaced grid $\Pi$ in~\eqref{eq:unif:grid} in blue and selected points $\Pi_m \subseteq \Pi$ in red. Right: scatterplot of the learning sample in black and points in $\Pc_m$ in red. }
\end{center}
\end{figure}

We end this section by stating an asymptotic property of the proposed selection procedure. Let $\nu_C$ be the measure on $[0,1]^d$ associated with the copula $C$ of $F$ and let 
\begin{equation}
  \label{eq:Gc:C}
\Pi_C = \left\{\bm \pi \in \Pi : \nu_C \big( (\bm \pi - \bm s, \bm \pi ] \big) > \frac{1}{\kappa (r+1)^d} \right\}.
\end{equation}
Also, recall the definition of $\Pi_m$ in~\eqref{eq:Gc:m} and, as classically done in the literature \citep[see, e.g.,][Chapter~4 and the references therein]{HofKojMaeYan18}, let the empirical copula $C_m$ of $\bm X_1,\dots,\bm X_m$ be defined as the empirical d.f.\ of the ``pseudo-observations'' $\pobs{U}_1, \dots, \pobs{U}_m$ defined in~\eqref{eq:pseudo:obs}. Of course, one hopes that $\Pi_m$ is close to the deterministic (but unknown) set $\Pi_C$ when $m$ is large. Proposition~\ref{prop:mult:point} below (proven in Appendix~\ref{proofs:prop:points}) makes this statement rigorous  under the following mild condition.

\begin{cond} $\strut$
  \label{cond:mult:point}
  \begin{itemize}
  \item[\textup{(i)}]  For each $\bm \pi \in \Pi$, $\nu_C \big( (\bm \pi - \bm s, \bm \pi ] \big) \neq 1/ (\kappa (r+1)^d)$, and 
  \item[\textup{(ii)}] $\sup_{\bm u \in [0,1]^d} | C_m(\bm u) -  C(\bm u) | \as 0$.
  \end{itemize}
\end{cond}

\begin{prop}
  \label{prop:mult:point}
  Assume that Condition~\ref{cond:mult:point} holds. Then, almost surely, for all $m$ sufficiently large, $\Pi_m=\Pi_C$.
\end{prop}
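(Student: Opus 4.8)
The plan is to reduce the set identity $\Pi_m = \Pi_C$ to a uniform-closeness statement for the box-probabilities $\nu_m\bigl((\bm\pi-\bm s,\bm\pi]\bigr)$ and $\nu_C\bigl((\bm\pi-\bm s,\bm\pi]\bigr)$ over the finite grid $\Pi$, and then to invoke the strict separation provided by Condition~\ref{cond:mult:point}(i). The only analytic input is a uniform convergence, namely Condition~\ref{cond:mult:point}(ii); everything else is finite-set bookkeeping.

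First I would express each box-probability through the standard inclusion--exclusion identity for the measure of a half-open box: if $G$ denotes the distribution function associated with a measure $\mu_G$ on $[0,1]^d$, then for any $\bm a < \bm b$ in $[0,1]^d$,
\[
  \mu_G\bigl((\bm a,\bm b]\bigr) = \sum_{\bm\epsilon \in \{0,1\}^d} (-1)^{\,d - |\bm\epsilon|}\, G(\bm c^{\bm\epsilon}),
\]
where $|\bm\epsilon|$ is the number of nonzero entries of $\bm\epsilon$ and $\bm c^{\bm\epsilon}$ is the corner of $(\bm a,\bm b]$ obtained by taking, coordinatewise, the endpoint from $\bm b$ where $\bm\epsilon$ equals $1$ and the endpoint from $\bm a$ where $\bm\epsilon$ equals $0$. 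Applying this with $G = C_m$ (which, being the empirical d.f.\ of $\pobs{U}_1,\dots,\pobs{U}_m$, is precisely the distribution function associated with the empirical measure $\nu_m$) and with $G = C$ (whose associated measure is $\nu_C$), and noting that all $2^d$ corners of $(\bm\pi-\bm s,\bm\pi]$ lie in $[0,1]^d$, I obtain, for every $\bm\pi\in\Pi$,
\begin{align*}
  \Bigl| \nu_m\bigl((\bm\pi-\bm s,\bm\pi]\bigr) - \nu_C\bigl((\bm\pi-\bm s,\bm\pi]\bigr) \Bigr|
  &\le \sum_{\bm\epsilon \in \{0,1\}^d} \bigl| C_m(\bm c_{\bm\pi}^{\bm\epsilon}) - C(\bm c_{\bm\pi}^{\bm\epsilon}) \bigr| \\
  &\le 2^d \sup_{\bm u \in [0,1]^d} \bigl| C_m(\bm u) - C(\bm u) \bigr|.
\end{align*}
Corners $\bm c_{\bm\pi}^{\bm\epsilon}$ with a vanishing coordinate contribute $0$ on both sides, so they cause no difficulty. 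By Condition~\ref{cond:mult:point}(ii) the right-hand side tends to $0$ almost surely, and since $\Pi$ is finite this bound is uniform in $\bm\pi$.

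Finally I would set $\delta := \min_{\bm\pi\in\Pi} \bigl| \nu_C\bigl((\bm\pi-\bm s,\bm\pi]\bigr) - 1/(\kappa(r+1)^d) \bigr|$, a minimum of finitely many strictly positive numbers by Condition~\ref{cond:mult:point}(i), hence $\delta > 0$. Combining this with the previous display, almost surely there is a (random) $M < \infty$ with $\sup_{\bm\pi\in\Pi}\bigl| \nu_m((\bm\pi-\bm s,\bm\pi]) - \nu_C((\bm\pi-\bm s,\bm\pi]) \bigr| < \delta$ for all $m \ge M$. For such $m$ and each $\bm\pi\in\Pi$, the inequality $\nu_m((\bm\pi-\bm s,\bm\pi]) > 1/(\kappa(r+1)^d)$ holds if and only if $\nu_C((\bm\pi-\bm s,\bm\pi]) > 1/(\kappa(r+1)^d)$, because the threshold is at distance at least $\delta$ from $\nu_C((\bm\pi-\bm s,\bm\pi])$ while the two box-probabilities differ by less than $\delta$. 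In view of the definitions~\eqref{eq:Gc:m} and~\eqref{eq:Gc:C} this says exactly that $\Pi_m = \Pi_C$ for all $m \ge M$. I do not anticipate a genuine obstacle; the only steps deserving a line of care are checking that $C_m$ really is the distribution function of $\nu_m$ (immediate from its definition as the empirical d.f.\ of the pseudo-observations) and the harmless treatment of boundary corners indicated above.
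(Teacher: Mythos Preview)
Your proof is correct and follows essentially the same approach as the paper's own proof: express the box-probabilities via inclusion--exclusion in terms of $C_m$ and $C$, use Condition~\ref{cond:mult:point}(ii) to get almost sure convergence of $\nu_m\bigl((\bm\pi-\bm s,\bm\pi]\bigr)$ to $\nu_C\bigl((\bm\pi-\bm s,\bm\pi]\bigr)$ for each $\bm\pi\in\Pi$, and then invoke the strict separation from Condition~\ref{cond:mult:point}(i). Your version is a bit more explicit (the $2^d$ bound and the quantitative $\delta$), but the argument is the same.
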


In other words, under Condition~\ref{cond:mult:point} and provided $m$ is sufficiently large, we can regard $\Pc_m = (\bm \Xc_{1,m}, \dots, \bm \Xc_{p,m} )$ as being formed of the $p = |\Pi_C|$ points
\begin{equation*}
\big\{\left( F_{1:m}^{\sss[1], -1} (\pi^{\sss[1]}), \dots, F_{1:m}^{\sss[d], -d} (\pi^{\sss[d]}) \right): \bm \pi \in \Pi_C\big\},  
\end{equation*}
where $\Pi_C$ is defined in~\eqref{eq:Gc:C}.

\subsection{Additional asymptotic results under the null}

The asymptotic results stated in Sections~\ref{sec:asym:H0} and~\ref{sec:asym:H1} concern the monitoring procedure based on the detector $D_m^\sPc$ in~\eqref{eq:Dm} where the chosen evaluation points $\Pc = (\bm x_1, \dots, \bm x_p)$ are fixed (they are not allowed to change in the asymptotics with $m$). To fully asymptotically justify the use of the detector $D_m^{\sPc_m}$ resulting from the automatic choices of points $\Pc_m = (\bm \Xc_{1,m}, \dots, \bm \Xc_{p,m} )$ considered in Sections~\ref{sec:univ:points} or~\ref{sec:mult:points}, one needs an analogue of Theorem~\ref{thm:H0} in which the points at which the empirical d.f.s are evaluated are allowed to change with $m$. 

In the rest of this section, we assume that the automatically chosen points in $\Pc_m$ are of the form
\begin{equation}
  \label{eq:Xcm}
\bm \Xc_{i,m} = \left( F_{1:m}^{\sss[1], -1} (\pi_i^{\sss[1]}), \dots, F_{1:m}^{\sss[d], -1} (\pi_i^{\sss[d]}) \right), 
\end{equation}
for some $p$ vectors of probabilities $\bm \pi_1,\dots,\bm \pi_p \in (0,1)^d$ not depending on $m$. This is clearly the case for the univariate selection strategy proposed in Section~\ref{sec:univ:points}. From Proposition~\ref{prop:mult:point}, it is also the case for the multivariate strategy proposed in Section~\ref{sec:mult:points} upon additionally assuming that Condition~\ref{cond:mult:point} holds and that $m$ is sufficiently large.

Next, set $\bm x_i = \big( F^{\sss[1],-1}(\pi_{i}^{\sss[1]}), \dots, F^{\sss[d],-1}(\pi_{i}^{\sss[d]}) \big)$, write $\Pc = (\bm x_1,\dots, \bm x_p)$ and notice that the points in $\Pc$ are unobservable since $F^{\sss[1]},\dots,F^{\sss[d]}$ are unknown. The $p$-dimensional random vectors $\bm Y_i^\sPc = \big( \1(\bm X_i \leq \bm x_1), \dots, \1(\bm X_i \leq \bm x_p) \big)$ are also unobservable. Since $\Pc_m = (\bm \Xc_{1,m}, \dots, \bm \Xc_{p,m} )$ with $\bm \Xc_{i,m}$ given by~\eqref{eq:Xcm} is an estimator of $\Pc$, the long-run covariance matrix $\Sigma^\sPc$ in~\eqref{eq:sigma} of the unobservable $p$-dimensional time series $\big( \bm Y_i^\sPc \big)_{i \in \N}$ may still be estimated from the sample $\bm Y_1^{\sPc_m},\dots,\bm Y_m^{\sPc_m}$ which is a proxy for the sample $\bm Y_1^\sPc,\dots,\bm Y_m^\sPc$ . 

We first state a condition under which the monitoring procedure based on $D_m^{\sPc_m}$ and the unobservable monitoring procedure based on $D_m^\sPc$ in~\eqref{eq:Dm} are asymptotically equivalent.

\begin{cond}[For the asymptotic equivalence of $D_m^{\sPc_m}$ and $D_m^\sPc$]
  \label{cond:points}
  For any $i \in \llb 1, p \rrb$,
  \begin{equation}
    \label{eq:cond:points}
    \sup_{k>m} k^{-\frac12} \max_{j \in \llb 1, k \rrb} j | F_{1:j}(\bm \Xc_{i,m}) - F(\bm \Xc_{i,m}) -  F_{1:j}(\bm x_i) +  F(\bm x_i) |  = o_\P(1).
  \end{equation}
\end{cond}

The following result, proven in Appendix~\ref{proofs:prop:points}, shows that the previous condition can be satisfied under the null and \emph{absolute regularity}. Given a time series $(\bm Z_i)_{i \in \N}$, recall that, for $j, k \in \N \cup \{+\infty \}$, $\Mc_j^k$ denotes the $\sigma$-field generated by $(\bm Z_i)_{j \leq i \leq k}$, that the absolute regularity coefficients corresponding to $(\bm Z_i)_{i \in \N}$ are defined by
\begin{equation}
\label{eq:beta}
\beta_r^{\bm Z} = \Ex \left\{ \sup_{k \in \N} \sup_{B\in \Mc_{k+r}^{+\infty}} \big| \P(B \mid \Mc_1^k) - \P(B) \big| \right\}, \qquad r \in \N,
\end{equation}
and that the sequence $(\bm Z_i)_{i \in \N}$ is said to be \emph{absolutely regular} if $\beta_r^{\bm Z} \to 0$ as $r \to \infty$. Also, note that absolute regularity is known to imply strong mixing \citep[see, e.g.,][]{DehPhi02} and that an independent sequence is clearly absolutely regular since in this case for every $k$ and $B$ as in~\eqref{eq:beta}, $\P(B|\Mc_1^k)=\P(B)$ almost surely. 

\begin{prop}[Condition~\ref{cond:points} can hold under the null]
  \label{prop:cond:points}
Assume that the underlying time series $(\bm X_i)_{i \in \N}$ is stationary and absolutely regular, and that its absolute regularity coefficients satisfy $\beta^{\bm X}_r = O(r^{-a})$ as $r \to \infty$ with $a>1$. Then, if the $d$ univariate margins $F^{\sss[1]},\dots,F^{\sss[d]}$ of $F$ are continuous and if, for each $i \in \llb 1, p \rrb$, $\bm \Xc_{i,m} \p \bm x_i$, Condition~\ref{cond:points} holds. 
\end{prop}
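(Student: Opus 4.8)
The plan is to rewrite~\eqref{eq:cond:points} in terms of a centred partial-sum process, localise near $\bm x_i$ using $\bm \Xc_{i,m} \p \bm x_i$, and then apply a H\'ajek--R\'enyi-type maximal inequality for the sequential empirical process of the absolutely regular sequence $(\bm X_\ell)$.

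\emph{Reformulation and localisation.} Fix $i \in \llb 1, p \rrb$ and abbreviate $\bm x = \bm x_i$, $\bm z_m = \bm \Xc_{i,m}$. Since $j F_{1:j}(\bm y) = \sum_{\ell=1}^{j} \1(\bm X_\ell \le \bm y)$, the expression inside the absolute value in~\eqref{eq:cond:points}, multiplied by $j$, equals $A_j(\bm z_m)$, where for $\bm y \in \R^d$ and $j \in \N$,
\[
  A_j(\bm y) = \sum_{\ell=1}^{j} \big( g_{\bm y}(\bm X_\ell) - \Ex\, g_{\bm y}(\bm X_1) \big), \qquad g_{\bm y}(\bm w) = \1(\bm w \le \bm y) - \1(\bm w \le \bm x).
\]
Thus~\eqref{eq:cond:points} asserts that the weighted running maximum of this process over the whole monitoring horizon is $o_\P(1)$; the structural facts to exploit are $g_{\bm x} \equiv 0$ and $\bm z_m \p \bm x$. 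Writing $\varpi_m(k)$ for the weight in~\eqref{eq:cond:points} and $W_m(\bm y) = \sup_{k>m} \varpi_m(k) \max_{j \in \llb 1, k \rrb} |A_j(\bm y)|$, one has, for any $\eps > 0$ and with $B_\eps = \{\bm y \in \R^d : \|\bm y - \bm x\|_\infty \le \eps\}$,
\[
  \P\big( W_m(\bm z_m) > \lambda \big) \;\le\; \P(\bm z_m \notin B_\eps) + \P\Big( \sup_{\bm y \in B_\eps} W_m(\bm y) > \lambda \Big),
\]
and $\P(\bm z_m \notin B_\eps) \to 0$ by the assumption $\bm \Xc_{i,m} \p \bm x_i$. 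So it suffices to prove $\limsup_{m\to\infty} \P\big(\sup_{\bm y \in B_\eps} W_m(\bm y) > \lambda\big) \to 0$ as $\eps \downarrow 0$, for every $\lambda > 0$. Passing to $\sup_{\bm y \in B_\eps}$ is what severs the dependence between $\bm z_m$ (a function of $\bm X_1,\dots,\bm X_m$) and the partial sums $A_j(\bm z_m)$ with $j \le m$, removing any circularity.

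\emph{A maximal inequality over $B_\eps$.} I would split the supremum over $k$ into dyadic blocks $k \in \llb 2^n, 2^{n+1} \llb$ with $2^n > m$ and, on each block, invoke a maximal moment inequality for the sequential empirical process of $(\bm X_\ell)$ over the localised class $\mathcal{G}_\eps = \{g_{\bm y} : \bm y \in B_\eps\}$. Two facts make $\mathcal{G}_\eps$ manageable: each $g_{\bm y}$ is a bounded measurable function of a single $\bm X_\ell$, so $(g_{\bm y}(\bm X_\ell))_\ell$ is absolutely regular with mixing coefficients dominated, uniformly in $\bm y$, by $\beta_r^{\bm X}$; and $\mathcal{G}_\eps$ has the square-integrable envelope $\Gamma_\eps(\bm w) = \1(\bm w \in R_\eps)$ with $R_\eps = \bigcup_{\ell=1}^{d} \{\bm w \in \R^d : |w^{\sss[\ell]} - x^{\sss[\ell]}| \le \eps\}$ (indeed $g_{\bm y}(\bm w) \ne 0$ forces some coordinate of $\bm w$ to lie within $\eps$ of the corresponding coordinate of $\bm x$), of mean $\rho(\eps) := \P(\bm X_1 \in R_\eps)$. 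Standard maximal/bracketing inequalities for empirical processes of absolutely regular sequences --- available because $a > 1$ forces $\sum_r \beta_r^{\bm X} < \infty$ --- then give, for every $N \in \N$,
\[
  \Ex \Big( \max_{j \le N} \sup_{\bm y \in B_\eps} |A_j(\bm y)| \Big)^2 \;\le\; C\, N\, v(\eps),
\]
with $C$ not depending on $N$ or $\eps$, and $v(\eps)$ a variance proxy governed by $\rho(\eps)$ and by $\sup_{\bm y \in B_\eps} \sum_{r \ge 1} |\Cov(g_{\bm y}(\bm X_1), g_{\bm y}(\bm X_r))|$ (the bracketing-entropy factor of the VC-type class $\mathcal{G}_\eps$ being bounded and absorbed into $C$). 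Summing over the dyadic blocks and using that $\varpi_m(k)$ decays in $k$ faster than $k^{-1/2}$ --- so that the block series converges and, moreover, contributes a factor vanishing as $m \to \infty$ --- yields $\limsup_{m\to\infty} \P\big(\sup_{\bm y \in B_\eps} W_m(\bm y) > \lambda\big) \le C' \lambda^{-2} v(\eps)$.

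\emph{Vanishing of $v(\eps)$, and where $a > 1$ enters.} Continuity of the margins $F^{\sss[1]},\dots,F^{\sss[d]}$ gives $\rho(\eps) \le \sum_{\ell=1}^{d}\{F^{\sss[\ell]}(x^{\sss[\ell]}+\eps) - F^{\sss[\ell]}((x^{\sss[\ell]}-\eps)^-)\} \to 0$ as $\eps \downarrow 0$. For $\bm y \in B_\eps$, $|g_{\bm y}| \le \Gamma_\eps$, hence $|\Cov(g_{\bm y}(\bm X_1), g_{\bm y}(\bm X_r))| \le 2\rho(\eps)$ trivially, while the covariance inequality for absolutely regular sequences gives $|\Cov(g_{\bm y}(\bm X_1), g_{\bm y}(\bm X_r))| \le 2\beta_{r-1}^{\bm X}$. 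Splitting $\sum_{r\ge 1} |\Cov(g_{\bm y}(\bm X_1), g_{\bm y}(\bm X_r))|$ at $r_0 = \lceil \rho(\eps)^{-1/a} \rceil$ and using $\beta_r^{\bm X} = O(r^{-a})$ bounds it by $O(r_0 \rho(\eps)) + O(r_0^{1-a}) = O(\rho(\eps)^{(a-1)/a})$, uniformly in $\bm y \in B_\eps$; hence $v(\eps) = O(\rho(\eps)^{(a-1)/a}) \to 0$, the exponent $(a-1)/a$ being positive exactly because $a > 1$. Feeding this into the previous display and letting $\eps \downarrow 0$ completes the argument.

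\emph{Main obstacle.} The hard part will be the maximal inequality of the second step: one needs a bound on the sequential empirical process that is uniform \emph{simultaneously} in the moving point $\bm y \in B_\eps$ and over the unbounded index $k > m$, with a constant that genuinely vanishes with $\eps$ --- a functional central limit theorem on compact time intervals does not suffice, so a H\'ajek--R\'enyi blocking device must be combined with a mixing empirical-process bound while carefully tracking how the envelope mass $\rho(\eps)$ enters the constant. A conceptually cleaner but essentially equivalent route, whenever a sufficiently sharp strong approximation of the sequential empirical process of $(\bm X_\ell)$ by a Kiefer-type Gaussian process $K$ with covariance~\eqref{eq:Gamma} is available, is to substitute $K$ --- the weighted approximation error vanishing uniformly in $k$ as $m \to \infty$ --- after which the quantity in~\eqref{eq:cond:points} equals, up to $o_\P(1)$, a continuous functional of the increment $K(\cdot, \bm \Xc_{i,m}) - K(\cdot, \bm x_i)$ between the moving point and $\bm x_i$, which tends to $0$ by the almost sure continuity of $\bm y \mapsto K(\cdot, \bm y)$ and $\bm \Xc_{i,m} \p \bm x_i$; here too the real work is to make the convergence uniform over the unbounded time index.
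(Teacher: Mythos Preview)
Your main approach has a genuine gap at the blocking step. The weight in~\eqref{eq:cond:points} is $\varpi_m(k)=k^{-1/2}$ exactly, not faster; with the second-moment bound $\Ex\big(\max_{j\le N}\sup_{\bm y\in B_\eps}|A_j(\bm y)|\big)^2\le C\,N\,v(\eps)$, each dyadic block $k\in\llb 2^n,2^{n+1}\llb$ contributes the constant $2Cv(\eps)/\lambda^2$ to the union bound, so the sum over $n\ge\log_2 m$ diverges rather than producing a factor that vanishes as $m\to\infty$. In fact the localisation to a \emph{fixed} ball $B_\eps$ cannot work at all: for any $\bm y\in B_\eps$ with $g_{\bm y}\not\equiv 0$ and positive long-run variance, the law of the iterated logarithm gives $\limsup_k k^{-1/2}|A_k(\bm y)|=+\infty$ almost surely, whence $W_m(\bm y)=\sup_{k>m}k^{-1/2}\max_{j\le k}|A_j(\bm y)|=+\infty$ for every $m$, and $\P\big(\sup_{\bm y\in B_\eps}W_m(\bm y)>\lambda\big)=1$ for every $m$ and $\lambda$. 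The spatial localisation $\bm\Xc_{i,m}\to\bm x_i$ and the temporal restriction $k>m\to\infty$ must be coupled; freezing $\eps$ first and then letting $m\to\infty$ severs that coupling.

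The route that works is precisely your ``conceptually cleaner'' alternative, and it is what the paper does: invoke the strong approximation of the sequential empirical process by a Kiefer process $K$ with uniform error $O(n^{1/2-\lambda})$ for some $\lambda>0$ (Theorem~3.1 of \cite{DedMerRio14}), so that after division by $k^{1/2}$ the approximation error is $O(k^{-\lambda})$ and hence vanishes uniformly over the open-end supremum $k>m$; then handle the remaining increment $K(j,\bm\Xc_{i,m})-K(j,\bm x_i)$ via the almost sure uniform continuity of $K$ in the spatial variable (with respect to the pseudo-metric induced by $F^{\sss[1]},\dots,F^{\sss[d]}$), combined with $\bm\Xc_{i,m}\p\bm x_i$ and the continuity of the margins. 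The extra exponent $\lambda>0$ coming from the strong approximation is exactly what your $L^2$ maximal-inequality route is missing.
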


\begin{remark}
  Assumptions related to Condition~\ref{cond:points} appear in \cite{DetGos20} in the context of the study of the asymptotics of closed-end sequential tests designed to be sensitive to changes in the mean, the variance or certain quantiles. In an open-end setting, related conditions are stated in Assumption~2.5 of \cite{GosKleDet21} and in Condition~6.1 of \cite{HolKoj21} under an ``almost sure'' form. As an inspection of the proof of Proposition~\ref{prop:cond:points} reveals, Condition~\ref{cond:points} is essentially a consequence of the continuity of the margins of $F$ and Theorem 3.1 of \cite{DedMerRio14} which provides an adequate strong approximation result for the empirical process under absolute regularity (but not under strong mixing).
\end{remark}

\begin{remark}
In the statement of Proposition~\ref{prop:cond:points}, it is assumed that, for each $i \in \llb 1, p \rrb$, $\bm \Xc_{i,m} \p \bm x_i$. This condition can actually be dispensed with provided additional conditions of the true unobservable quantile functions $F^{\sss[1],-1}, \dots, F^{\sss[d],-1}$ are assumed instead.  Indeed, from \cite{Rio98}, we know that the condition on the absolute regularity coefficients in Proposition~\ref{prop:cond:points} implies that, for any $\ell \in \llb 1, d \rrb$, $F_{1:m}^{\sss[\ell]} \p F^{\sss[\ell]}$ in $\ell^\infty(\R)$, where $\ell^\infty(\R)$ denotes the space of bounded functions on $\R$ equipped with the uniform metric. From Lemma 21.2 in \cite{van98}, this is then equivalent to the fact that $F_{1:m}^{\sss[\ell],-1}(\pi) \p F^{\sss[\ell],-1}(\pi)$ at every $\pi \in (0,1)$ at which $F^{\sss[\ell],-1}$ is continuous. Consequently, the condition that, for any $i \in \llb 1, p \rrb$, $\bm \Xc_{i,m} \p \bm x_i$ could be replaced by the condition that, for any $\ell \in \llb 1, d \rrb$, $F^{\sss[\ell],-1}$ is continuous at $\pi_i^{\sss[\ell]}$, for all $i \in \llb 1, p \rrb$.  
\end{remark}

The next proposition, also proven in Appendix~\ref{proofs:prop:points}, states that, under Conditions~\ref{cond:H0} and~\ref{cond:points}, the monitoring procedures based on $D_m^{\sPc_m}$ and $D_m^\sPc$ are asymptotically equivalent.

\begin{prop}
  \label{prop:equiv:points}
  Under Conditions~\ref{cond:H0} and~\ref{cond:points}, and if $\Sigma_m^\sPc \p \Sigma^\sPc$ and $\Sigma_m^{\sPc_m} \p \Sigma^\sPc$, for any $\eta > 0$, 
$$
\sup_{k>m}  (m/k)^{\frac32+\eta} | D_m^{\sPc_m}(k) - D_m^\sPc(k) | = o_\P(1),
$$
and, consequently,
\begin{align*}
\sup_{k>m} (m/k)^{\frac32+\eta}  D_m^{\sPc_m}(k) \leadsto \Lc_{p,\eta} = \sup_{1 \leq s \leq t < \infty} t^{-\frac32-\eta} \| t \bm W(s) - s \bm W(t) \|_{I_p}.
\end{align*}
\end{prop}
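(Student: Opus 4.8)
The plan is to prove the first (uniform) display; the weak convergence then follows at once, because Theorem~\ref{thm:H0} (whose hypotheses, namely Condition~\ref{cond:H0} and $\Sigma_m^\sPc \p \Sigma^\sPc$, are assumed here) gives $\sup_{k>m}(m/k)^{3/2+\eta} D_m^\sPc(k) \leadsto \Lc_{p,\eta}$, and $|\sup_{k>m}(m/k)^{3/2+\eta} D_m^{\sPc_m}(k) - \sup_{k>m}(m/k)^{3/2+\eta} D_m^\sPc(k)| \leq \sup_{k>m}(m/k)^{3/2+\eta}|D_m^{\sPc_m}(k) - D_m^\sPc(k)|$, so that Slutsky's lemma finishes the job. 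To control the right-hand side, abbreviate $\bm v_{j,k}^\sPc = \bm F_{1:j}^\sPc - \bm F_{j+1:k}^\sPc$ and likewise $\bm v_{j,k}^{\sPc_m}$; since for fixed $k>m$ both detectors are maxima over the same finite set $\llb m, k \llb$, the elementary inequality $|\max_j a_j - \max_j b_j| \leq \max_j |a_j - b_j|$ gives
\[
\sup_{k>m}(m/k)^{3/2+\eta}|D_m^{\sPc_m}(k) - D_m^\sPc(k)| \leq \sup_{k>m}(m/k)^{3/2+\eta}\max_{j \in \llb m, k \llb}\frac{j(k-j)}{m^{3/2}}\,\bigl| \|\bm v_{j,k}^{\sPc_m}\|_{(\Sigma_m^{\sPc_m})^{-1}} - \|\bm v_{j,k}^\sPc\|_{(\Sigma_m^\sPc)^{-1}} \bigr| .
\]
I would then insert $\|\bm v_{j,k}^\sPc\|_{(\Sigma_m^{\sPc_m})^{-1}}$ and split the inner difference into a \emph{vector part}, bounded by $\|\bm v_{j,k}^{\sPc_m} - \bm v_{j,k}^\sPc\|_{(\Sigma_m^{\sPc_m})^{-1}}$ via the reverse triangle inequality, and a \emph{matrix part} $\bigl| \|\bm v_{j,k}^\sPc\|_{(\Sigma_m^{\sPc_m})^{-1}} - \|\bm v_{j,k}^\sPc\|_{(\Sigma_m^\sPc)^{-1}} \bigr|$.

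For the vector part, the key algebraic point is that the deterministic shifts cancel. Writing $G_j^{(i)} = \sum_{\ell=1}^{j}\{\1(\bm X_\ell \leq \bm \Xc_{i,m}) - F(\bm \Xc_{i,m}) - \1(\bm X_\ell \leq \bm x_i) + F(\bm x_i)\}$, one checks that the $i$th coordinate of $\bm v_{j,k}^{\sPc_m} - \bm v_{j,k}^\sPc$ is exactly $G_j^{(i)}/j - (G_k^{(i)} - G_j^{(i)})/(k-j)$ (the contributions $F(\bm \Xc_{i,m}) - F(\bm x_i)$ drop out), so that $\frac{j(k-j)}{m^{3/2}}$ times its modulus equals $m^{-3/2}|k G_j^{(i)} - j G_k^{(i)}| \leq k\,m^{-3/2}(|G_j^{(i)}| + |G_k^{(i)}|)$ since $j \leq k$. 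Multiplying by $(m/k)^{3/2+\eta}$ and using $(m/k)^\eta \leq 1$ for $k>m$ bounds this by $2 k^{-1/2}\max_{j' \in \llb 1, k \rrb}|G_{j'}^{(i)}|$; taking $\sup_{k>m}$ and noting that $\max_{j' \in \llb 1, k \rrb}|G_{j'}^{(i)}| = \max_{j' \in \llb 1, k \rrb}j'\,|F_{1:j'}(\bm \Xc_{i,m}) - F(\bm \Xc_{i,m}) - F_{1:j'}(\bm x_i) + F(\bm x_i)|$, Condition~\ref{cond:points} shows this is $o_\P(1)$ for each $i$. Combining with $\|\bm v\|_{(\Sigma_m^{\sPc_m})^{-1}} \leq \|(\Sigma_m^{\sPc_m})^{-1/2}\|_{\mathrm{op}}\,\|\bm v\|_2/\sqrt p \leq \|(\Sigma_m^{\sPc_m})^{-1/2}\|_{\mathrm{op}}\,\|\bm v\|_\infty$ and $\|(\Sigma_m^{\sPc_m})^{-1/2}\|_{\mathrm{op}} = O_\P(1)$ (matrix inversion and the symmetric square root being continuous at the positive-definite $\Sigma^\sPc$, with $\Sigma_m^{\sPc_m} \p \Sigma^\sPc$), the vector part contributes $O_\P(1)\cdot o_\P(1) = o_\P(1)$ to the supremum.

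For the matrix part, bound $\bigl| \|\bm v_{j,k}^\sPc\|_{(\Sigma_m^{\sPc_m})^{-1}} - \|\bm v_{j,k}^\sPc\|_{(\Sigma_m^\sPc)^{-1}} \bigr| \leq p^{-1/2}\|(\Sigma_m^{\sPc_m})^{-1/2} - (\Sigma_m^\sPc)^{-1/2}\|_{\mathrm{op}}\,\|\bm v_{j,k}^\sPc\|_2$; since $\Sigma_m^{\sPc_m}$ and $\Sigma_m^\sPc$ both converge in probability to the positive-definite $\Sigma^\sPc$, the operator-norm factor is $o_\P(1)$. The remaining factor $\sup_{k>m}(m/k)^{3/2+\eta}\max_{j \in \llb m, k \llb}\frac{j(k-j)}{m^{3/2}}\|\bm v_{j,k}^\sPc\|_2/\sqrt p$ cannot be handled through the crude bound $\|\bm v_{j,k}^\sPc\|_2 \leq \sqrt p$ (that leaves a diverging $k^{1/2-\eta}$); instead one uses $\|\bm v_{j,k}^\sPc\|_2/\sqrt p \leq \|(\Sigma_m^\sPc)^{1/2}\|_{\mathrm{op}}\,\|\bm v_{j,k}^\sPc\|_{(\Sigma_m^\sPc)^{-1}}$ to recognise this factor as $\|(\Sigma_m^\sPc)^{1/2}\|_{\mathrm{op}}\,\sup_{k>m}(m/k)^{3/2+\eta}D_m^\sPc(k)$, which is $O_\P(1)$ since $\|(\Sigma_m^\sPc)^{1/2}\|_{\mathrm{op}} \p \|(\Sigma^\sPc)^{1/2}\|_{\mathrm{op}}$ and, by Theorem~\ref{thm:H0}, $\sup_{k>m}(m/k)^{3/2+\eta}D_m^\sPc(k) \leadsto \Lc_{p,\eta}$ is tight. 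Hence the matrix part is $o_\P(1)\cdot O_\P(1) = o_\P(1)$, and adding the two contributions yields the first display. The step I expect to be the main obstacle is the vector part: it requires performing the exact cancellation of the centerings $F(\bm \Xc_{i,m}) - F(\bm x_i)$ and then book-keeping the powers of $m/k$ carefully enough that the $\eta$-deficit stays harmless — this works precisely because $(m/k)^\eta \leq 1$ for $k>m$, so Condition~\ref{cond:points}'s $k^{-1/2}$ normalisation is exactly what is consumed.
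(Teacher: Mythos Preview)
Your proof is correct and rests on the same essential ingredients as the paper's (the reverse triangle inequality for suprema, the cancellation of the centerings $F(\bm \Xc_{i,m})-F(\bm x_i)$ so that Condition~\ref{cond:points} applies, and tightness of the scaled detector via Theorem~\ref{thm:H0}), but the decomposition is organised differently. The paper routes through two auxiliary detectors $\tilde D_m^\sPc$ and $\tilde D_m^{\sPc_m}$, both defined with the \emph{true} norm $\|\cdot\|_{(\Sigma^\sPc)^{-1}}$, yielding three pieces: two ``matrix'' swaps (estimated to true covariance) handled exactly as in Lemma~\ref{lem:tilde:D-to-D}, plus one ``vector'' swap (points $\sPc_m$ to $\sPc$) carried out in a deterministic norm. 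You instead insert a single cross term $\|\bm v_{j,k}^\sPc\|_{(\Sigma_m^{\sPc_m})^{-1}}$ and keep the random norms throughout, giving only two pieces; your matrix part then compares $(\Sigma_m^{\sPc_m})^{-1/2}$ with $(\Sigma_m^\sPc)^{-1/2}$ directly and bounds the residual $\|\bm v_{j,k}^\sPc\|_2$ factor by recognising $D_m^\sPc$ itself, so Theorem~\ref{thm:H0} is invoked straight away rather than through Lemma~\ref{lem:tilde:D-to-D}. Your route is a little more economical; the paper's has the minor advantage that the vector swap is done in a fixed norm, so no appeal to $\|(\Sigma_m^{\sPc_m})^{-1/2}\|_{\mathrm{op}}=O_\P(1)$ is needed there.
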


The last claim of the previous proposition suggests to carry out the monitoring procedure based on the detector $D_m^{\sPc_m}$ exactly as the procedure based on the detector $D_m^\sPc$ when the points $\Pc$ are hand-picked by the user (see the last paragraph of Section~\ref{sec:asym:H0}).

\subsection{The monitoring procedure based on $D_m^{\sPc_m}$ is margin-free under the null}
\label{sec:margin:free}
  
We end this section by verifying that, under the considered assumption that the true unknown marginal d.f.s $F^{\sss[1]},\dots,F^{\sss[d]}$ are continuous, the procedure based on the detector $D_m^{\sPc_m}$ is margin-free under $H_0$ in~\eqref{eq:H0}, that is, it does not depend on $F^{\sss[1]},\dots,F^{\sss[d]}$ under the null. To see this, let $\bm U_1, \dots, \bm U_k$ be the unobservable sample obtained from the available observations $\bm X_1, \dots, \bm X_k$ using \eqref{eq:prob:int:trans} and let $G_{1:m}$ be the empirical d.f.\ of $\bm U_1,\dots,\bm U_m$. Notice that we can recover the $\bm X_i$ from the $\bm U_i$ by marginal quantile transformations, that is, $\bm X_i = (F_1^{\sss[1],-1}(U_i^{\sss[1]}),\dots,F_d^{\sss[d],-1}(U_i^{\sss[d]}))$. Furthermore, for any $j\in \llb 1,d \rrb$, by (right) continuity of $F^{\sss[j]}$, we have that $\1 \{ F^{\sss[j],-1}(u) \leq x \} = \1 \{ u \leq F^{\sss[j]}(x) \}$ for all $u \in [0,1]$ and $x \in \R$; see, e.g., Proposition~1~(5) in \cite{EmbHof13}. Then, it can be verified that, for every $\bm \pi \in (0,1)^d$, $i\in\llb 1,k\rrb$ and $j \in \llb 1,d\rrb$,
$$
1\{X_i^{[j]} \leq  F_{1:m}^{\sss[j],-1}(\pi^{\sss[j]})\} = 1 \{ U_i^{[j]} \leq  F^{\sss[j]} ( F_{1:m}^{\sss[j],-1}(\pi^{\sss[j]}) ) \} =  1\{U_i^{[j]} \leq  G_{1:m}^{\sss[j],-1}(\pi^{\sss[j]})\},
$$
which implies that, under the null and the current setting, the detector at $k$ can be rewritten to depend only on $\bm U_1,\dots,\bm U_k$.

\section{Estimation of high quantiles of the limiting distribution}
\label{sec:quant}

From the two previous sections, we know that, to carry out the studied monitoring procedures, it is necessary to be able to accurately estimate high quantiles of the random variable $\Lc_{p,\eta}$, $p \geq 1$, $\eta > 0$, appearing first in the statement of Theorem~\ref{thm:H0}. The underlying estimation problem was empirically solved in \cite{HolKoj21} for $p = 1$ using \emph{asymptotic regression modeling}. We choose to use the same approach when $p > 1$ and refer the reader to Section~4 of the aforementioned reference where the motivation for this way of proceeding is explained in detail.

Fix $\eta > 0$, $p \geq 1$, $\alpha \in (0,\frac12)$ and let $q_{p,\eta}^{\sss{(1 - \alpha)}}$ be the $(1-\alpha)$-quantile of $\Lc_{p,\eta}$. Furthermore, let $d = 1$, let $\big( X_i^{\sss[1]} \big)_{i \in \N}$ be an infinite sequence of independent standard normals and let $\Pc = (x_1^{\sss[1]},\dots, x_p^{\sss[1]})$ where  $x_i^{\sss[1]} = \Phi^{-1}\big(i/(p+1)\big)$, $i \in \llb 1, p \rrb$ and $\Phi$ is the d.f.\ of the standard normal. According to Theorem~\ref{thm:H0}, for large $m$ the distribution of $\sup_{k > m} (m/k)^{\frac32+\eta} D_m^\sPc (k)$ should be close to that of $\Lc_{p,\eta}$. If for a given realization of $\big( X_i^{\sss[1]} \big)_{i \in \N}$ we could compute the corresponding realization of $\sup_{k > m} (m/k)^{\frac32+\eta} D_m^\sPc (k)$, then  $q_{p,\eta}^{\sss{(1 - \alpha)}}$ could be estimated by $\hat q_{p,\eta}^{\sss{(1 - \alpha)}}$, the $(1-\alpha)$-empirical quantile of a large sample of realizations of $\sup_{k > m} (m/k)^{\frac32+\eta} D_m^\sPc (k)$. As this is not possible because of the supremum over $k > m$, the idea taken from \cite{HolKoj21} is to model the relationship between $r \in \llb 9, 16 \rrb$ and $\hat q_{p,\eta,r}^{\sss{(1 - \alpha)}}$, the $(1-\alpha)$-empirical quantile of $\sup_{k \in \rrb m, m+2^r \rrb } (m/k)^{\frac32+\eta} D_m^\sPc (k)$, using an asymptotic regression model.

To begin with, using a computer grid, we computed the empirical quantiles $\hat q_{p,\eta,r}^{\sss{(1 - \alpha)}}$, $r \in \llb 9, 16 \rrb$, from 10,000 simulated trajectories of the scaled detector $k \mapsto (m/k)^{\frac32+\eta} D_m^\sPc (k)$ for $k \in \rrb m, m + 2^{16} \rrb$ and $m=500$. In a next step, an asymptotic regression model was fitted to the points $(r, \hat q_{p,\eta,r}^{\sss{(1 - \alpha)}})$, $r \in \llb 9, 16 \rrb$. The considered model is a three-parameter model with mean function 
\begin{align*}
f(x) = \beta_1 + (\beta_2-\beta_1)\{ 1-\exp(-x/\beta_3)\},
\end{align*}
where $y = \beta_2$ is the equation of the upper horizontal asymptote of $f$. Its fitting was carried out using the \textsf{R} package \texttt{drc} \citep{drc}. A candidate estimate $\hat q_{p,\eta}^{\sss{(1 - \alpha)}}$ of $q_{p,\eta}^{\sss{(1 - \alpha)}}$, the $(1-\alpha)$-quantile of $\Lc_{p,\eta}$, is then the resulting estimate of the parameter $\beta_2$. The previous steps were carried out for $p \in \{2,5,10,20\}$, $\alpha \in \{0.01, 0.05, 0.1\}$ and $\eta = 0.001$ (following the practical recommendation made in \cite{HolKoj21}), and can be visualized in the first four panels of Figure~\ref{fig:quantiles} for $\alpha = 0.05$. The corresponding estimates of the quantiles $q_{p,\eta}^{\sss{(1 - \alpha)}}$ are given in columns two to five of Table~\ref{tab:quantiles}.

\begin{figure}[t!]
\begin{center}
  \includegraphics*[width=1\linewidth]{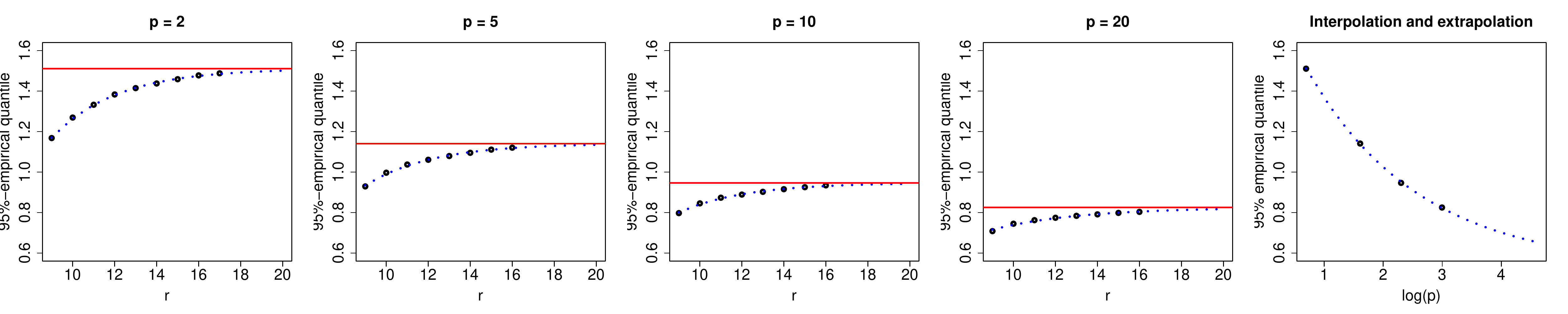}
  \caption{\label{fig:quantiles} First four panels: for $\alpha = 0.05$, $\eta = 0.001$ and $p \in \{2,5,10,20\}$, scatter plots of $\{(r,\hat q_{p,\eta,r}^{\sss{(1 - \alpha)}})\}_{r \in \llb 9, 16 \rrb}$, corresponding fitted asymptotic regression models (dotted blue curves) and estimates of the upper horizontal asymptotes (solid red lines) which are candidate estimates of $q_{p,\eta}^{\sss{(1 - \alpha)}}$, the $(1-\alpha)$-quantile of $\Lc_{p,\eta}$. Fifth panel:  scatter plot of $\{(\log(p),\hat q_{p,\eta}^{\sss{(1 - \alpha)}})\}_{p  \in \{2,5,10,20\}}$ and corresponding transformed fitted asymptotic regression model (dotted blue curve) that could be used to interpolate (resp.\ extrapolate) the value of $\hat q_{p,\eta}^{\sss{(1 - \alpha)}}$ for $p \in [2,20]$ (resp.\ for $p$ slightly larger than 20).}
\end{center}
\end{figure}

\begin{table}[t!]
\centering
\caption{Columns 2 to 5: for $\eta = 0.001$, $p \in \{2,5,10,20\}$ and $\alpha \in \{0.01, 0.05, 0.1 \}$, estimates $\hat q_{p,\eta}^{\sss{(1 - \alpha)}}$ of the $(1-\alpha)$-quantiles $q_{p,\eta}^{\sss{(1 - \alpha)}}$ of the distribution of $\Lc_{p,\eta}$. Columns 6 to 8: corresponding estimates of the parameters of the transformed asymptotic regression models that can be used to interpolate (resp.\ extrapolate) the value of $\hat q_{p,\eta}^{\sss{(1 - \alpha)}}$ for $p \in [2,20]$ (resp.\ for $p$ slightly larger than 20).} 
\label{tab:quantiles}
\begingroup\normalsize
\begin{tabular}{llllllll}
  \hline
   & \multicolumn{4}{c}{$p$} & \multicolumn{3}{c}{$p \not \in \{2,5,10,20\}$}\\\cline{2-8} $1-\alpha$ & 2 & 5 & 10 & 20 & $\hat \beta_1$ & $\hat \beta_2$ & $\hat \beta_3$ \\ \hline
0.99 & 1.654 & 1.234 & 1.010 & 0.860 & -0.126 & 1.535 & 2.080 \\ 
  0.95 & 1.511 & 1.141 & 0.946 & 0.825 & 0.060 & 1.475 & 1.921 \\ 
  0.90 & 1.450 & 1.099 & 0.921 & 0.806 & 0.140 & 1.462 & 1.870 \\ 
   \hline
\end{tabular}
\endgroup
\end{table}

In a last step, to be able to carry out the monitoring procedures for some values of $p$ different than those in $\{2,5,10,20\}$, we fitted asymptotic regression models to the points $(\log(p),2-\hat q_{p,\eta}^{\sss{(1 - \alpha)}})$, $p  \in \{2,5,10,20\}$, for $\alpha \in \{0.01, 0.05, 0.1\}$ and $\eta = 0.001$.  The estimates of the parameters $\beta_1$, $\beta_2$ and $\beta_3$ are reported in the last three columns of Table~\ref{tab:quantiles}. We make no claim regarding the theoretical adequacy of this type of model. The aim is only to be able to interpolate (resp.\ extrapolate) the value of $\hat q_{p,\eta}^{\sss{(1 - \alpha)}}$ for $p \in [2,20]$ (resp.\ for $p$ slightly larger than 20). Note that, since $\Lc_{p,\eta} \as 0$ as $p \to \infty$ (as a consequence of the definition of the norm $\|\cdot\|_{I_p}$), a fitted two-parameter submodel with $\beta_2$ fixed to 2 is expected to behave better for large $p$. We have nonetheless decided to keep the fitted three parameter model because its accuracy for values of $p$ slightly larger than 20 was found to be better in our Monte Carlo experiments summarized in the forthcoming section.

\section{Monte Carlo experiments}
\label{sec:MC}

We carried out rather extensive numerical experiments in the case of low-dimensional ($d \in \{1,2,3\}$) continuous observations to investigate the finite-sample behavior of the monitoring procedure based on the detector $D_m^{\sPc_m}$ introduced in Section~\ref{sec:cont}. Specifically, for $d=1$, recall from Section~\ref{sec:univ:points} that the $p$ real points at which the (univariate) empirical d.f.s are evaluated are chosen as empirical quantiles of order $i/(p+1)$, $i \in \llb 1, p \rrb$, computed from the learning sample $X_1^{\sss[1]},\dots,X_m^{\sss[1]}$. For $d \in \{2,3\}$, the approach is slightly more involved: as explained in Section~\ref{sec:mult:points}, the evaluation points are chosen using a point selection procedure which relies on two parameters: an integer $r \geq 1$ specifying the maximal value $r^d$ of $p$ and the constant $\kappa$ controlling how many of the initial $r^d$ grid points will actually be retained. 

From the definition of the detector $D_m^{\sPc_m}$, we see that an underlying unknown long-run covariance matrix needs to be estimated from the sample $\bm Y_1^{\sPc_m},\dots,\bm Y_m^{\sPc_m}$. In practice, for $\Sigma_m^{\sPc_m}$, we used the estimator of \cite{And91} based on the quadratic spectral kernel with automatic bandwidth selection as implemented in the function \texttt{lrvar()} of the \textsf{R} package \texttt{sandwich} \citep{Zei04,ZeiKolGra20}. Note that we did not however use prewhitening as suggested in \cite{AndMon92}. The fact that the monitoring procedure studied in this work is available in the \textsf{R} package \texttt{npcp} \citep{npcp} not only makes all our experiments fully reproducible but also allows a user to change the long run covariance estimator \citep[for instance to that of][]{NewWes87} by passing parameters to the main function which will be passed to the function \texttt{lrvar()}.

Before we present our empirical findings,  it is important to keep in mind that, in the case of open-end approaches, numerical experiments only provide a biased view of their behavior as finite computing resources impose that monitoring has to be stopped eventually. From the point of view of statistical testing,  the main consequence of this is that all rejection percentages are underestimated.

The full details of our simulations are available in Appendix~\ref{sec:MC:details}. We provide hereafter a summary of our findings:
\begin{itemize}

\item For monitoring univariate data, taking $p \in \{5,\dots,10\}$ evaluation points and choosing them as suggested in Section~\ref{sec:univ:points} seems a good choice in general. Unless serial dependence is very strong, a learning sample of size $m \geq 800$ seems to lead to a procedure that holds its level well and displays good power against several alternatives involving a change in the mean or the variance, or such that the d.f.\ changes with the mean and variance remaining constant. In the case of very strong serial dependence (such as for an AR(1) model with autoregressive parameter 0.7), a larger learning sample (for instance $m=1600$) seems necessary for the monitoring procedure to hold its level reasonably well.

\item For monitoring low-dimensional data ($d \in \{2,3\}$), using $r=4$ if $d=2$, $r=3$ if $d=3$ and $\kappa = 1.5$ in the point selection procedure of Section~\ref{sec:mult:points} seems to be a reasonable choice in general.  As for $d=1$, in the case of mild serial dependence, taking $m=800$ seems sufficient to obtain a sequential test that holds its level reasonably well. With such settings, the procedure seems powerful against various alternatives involving changes in one margin or in the copula of the contemporary d.f. 
  
\end{itemize}

\section{Data example and concluding remarks}

Let us briefly illustrate how the procedure based on the detector $D_m^{\sPc_m}$ considered in Section~\ref{sec:cont} could be used for monitoring changes in the contemporary distribution of daily log-returns of a financial index. In our fictitious example, the learning sample (whose stationarity would need to be tested) corresponds to 1257 trading days of the NASDAQ for the period January 3rd 2012 -- December 30th 2016. Monitoring starts on the first trading day of 2017. The corresponding daily log-returns are represented in Figure~\ref{fig:appl-returns}, where the dashed vertical line represents the start of the monitoring. 

\begin{figure}[t!]
\begin{center}
  \includegraphics*[width=1\linewidth]{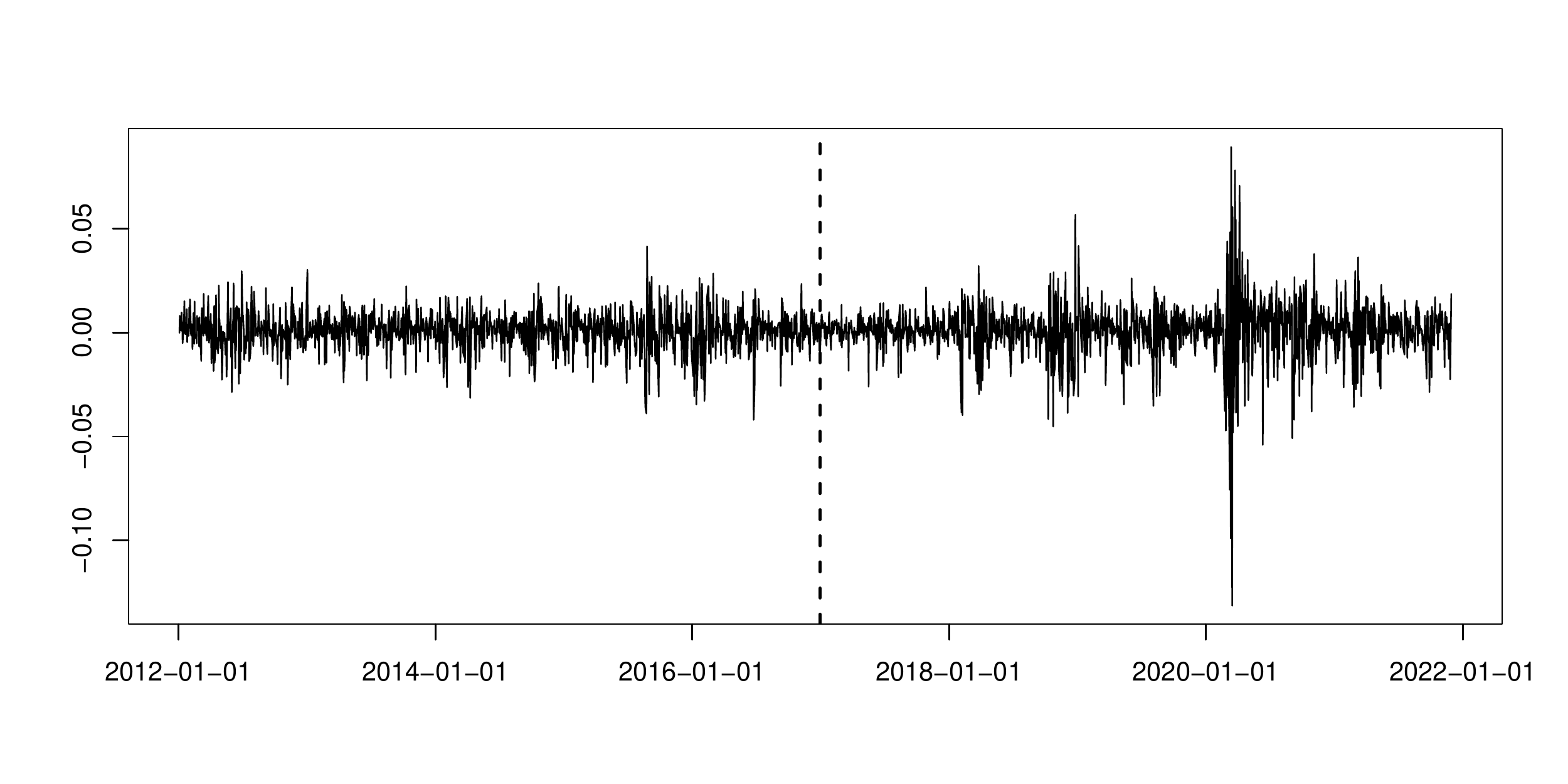}
  \caption{\label{fig:appl-returns} Daily log-returns computed from closing quotes of the NASDAQ composite index from 2012 to 2021. The learning sample in our fictitious data example corresponds to the period 2012 -- 2016. Monitoring starts on the first trading day of 2017, which is represented by a dashed vertical line.}
\end{center}
\end{figure}

The sample paths of $k \mapsto (m/k)^{\frac32+\eta} D_m^{\sPc_m}(k)$ are represented in Figure~\ref{fig:appl-detector} for $p \in \{5, 10\}$. The horizontal dashed line in each panel represent one of the 95\%-quantiles given in the second row of Table~\ref{tab:quantiles}. The dashed vertical lines mark the first time that the threshold is exceeded (corresponding to May 2020). A change in the data generating process slightly prior to this date seems  likely as an inspection of Figure~\ref{fig:appl-returns} reveals: the first months of 2020 are indeed characterized by a period of very high volatility.

\begin{figure}[t!]
\begin{center}
  \includegraphics*[width=1\linewidth]{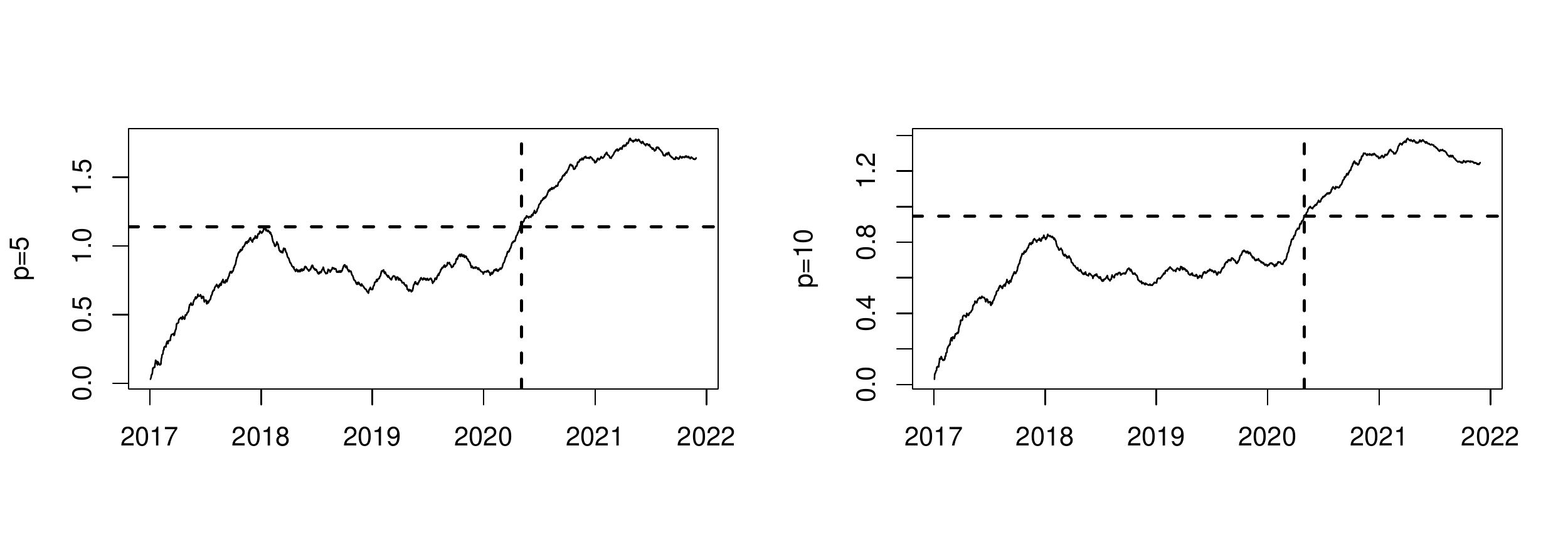}
  \caption{\label{fig:appl-detector} Sample paths of $k \mapsto (m/k)^{\frac32+\eta} D_m^{\sPc_m}(k)$ for $p=5$ and $10$. The horizontal dashed lines represent the 95\%-quantiles given in the second row of Table~\ref{tab:quantiles}. The vertical dashed lines mark the first time the scaled versions of the detectors exceed the corresponding quantiles.}
\end{center}
\end{figure}

We end this section by stating a few remarks:
\begin{itemize}

\item One important practical advantage of the monitoring procedure proposed in this work comes from its open-end nature and is that the monitoring horizon does not need to be specified. As a consequence, monitoring could theoretically run forever. This is however not possible from a practical perspective because of the form of the detector in~\eqref{eq:Dm}: it is indeed clear that the cost of computing the detector at time $k$ increases with $k$. The latter implies that its computation will become impossible for $k$ large enough. Some other detectors, such as the ordinary CUSUM, will not be affected by such a practical issue. Yet, as empirically observed in \cite{GosKleDet21} and in \cite{HolKoj21}, the ordinary CUSUM is substantially less powerful than more computationally costly detectors similar to the ones considered in this work. In a related way, let us mention that the implementation of the studied monitoring procedure available in the \textsf{R} package \texttt{npcp} is merely a proof of concept and is not optimized for very long-term monitoring.

\item The price to pay for open-end monitoring is that the detection power decreases as time elapses. For the studied class of procedures, this is due to the parameter $\eta$ as discussed in Section~4 of \cite{HolKoj21}. On one hand, the smaller the value of $\eta$, the weaker the power decrease. On the other hand, the smaller $\eta$, the more conceptually difficult and computationally costly it is to estimate high quantiles of the distribution $\Lc_{p,\eta}$ appearing in Theorem~\ref{thm:H0}. The latter suggests to devote more research to the estimation of high quantiles of $\Lc_{p,\eta}$.
  
\item The type of monitoring procedure used in this work could also be used to detect changes in the serial dependence. For instance, to detect such changes ``at lag 1'' from an initial sequence of univariate observations $Z_1,\dots,Z_m,Z_{m+1},\dots$, the $\bm X_i$ could be formed as $\bm X_i = (Z_i,Z_{i+1})$.

\end{itemize}

\section*{Acknowledgments}

The authors would like to thank a co-editor and two anonymous referees for their very constructive comments on an earlier version of this manuscript. MH was supported in part by Future Fellowship FT160100166 from the Australian Research Council. AV is supported by a University of Melbourne Research Scholarship.


\begin{appendix}

\section{Proof of Theorem \ref{thm:H0}}
\label{proof:thm:H0}

The proof of Theorem~\ref{thm:H0} is based on four lemmas which we prove first. Throughout the remainder of the proof, we let
\begin{align}
  \label{eq:tilde:Dm}
\tilde D_m^\sPc(k)
&=
\max_{j \in \llb m,k \llb} \frac{j (k-j)}{m^{\frac32}}  \| \bm {\bar Y}_{1:j}^\sPc - \bm {\bar Y}_{j+1:k}^\sPc \|_{(\Sigma^\sPc)^{-1}}, \qquad k \geq m+1, 
\end{align}
which is the version of the detector in~\eqref{eq:Dm:Yi}, in which the estimated long-run variance $\Sigma_m^\sPc$ is replaced by the true long-run variance $\Sigma^\sPc$.

On the probability space of Condition~\ref{cond:H0} (assuming that this condition holds), we may define 
\begin{align}
	\label{eq:bar:Dm}
\bar{D}^\sPc_m\left(k\right)
=
\frac{1}{\sqrt{m}} \max_{j \in \llb m,k \llb} 
&
 \left\Vert  \frac{k}{m} (\Sigma^\sPc)^{\frac12} \lbrace  \bm W_{2,m} (m) + \bm W_{1,m} (j-m)\rbrace \right. \nonumber \\
& \quad
\left.-\frac{j}{m} (\Sigma^\sPc)^{\frac12} \lbrace \bm W_{2,m} (m) + \bm W_{1,m} (k-m) \rbrace \right\Vert_{(\Sigma^\sPc)^{-1}},
\end{align}
where we recall that for each $m \in \N$, $\bm W_{1,m}$ and $\bm W_{2,m}$ are independent $p$-dimensional standard Brownian motions.

\begin{lem}
	\label{lem:bar:D}
Assume that Condition \ref{cond:H0} holds. Then, for any $\eta>0$, 
\begin{align*}
\sup_{k>m}  \Big(\frac mk\Big)^{\frac32+\eta} \left| \tilde D_m^\sPc (k) - \bar D_m^\sPc (k) \right| = o_{\P} (1).
\end{align*}
\end{lem}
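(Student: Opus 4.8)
The plan is to reduce the statement to the two strong-approximation bounds of Condition~\ref{cond:H0} through an elementary algebraic rewriting followed by two applications of the triangle inequality, mirroring the scalar argument behind Theorem~3.3 of \cite{HolKoj21}. I would work throughout on the probability space of Condition~\ref{cond:H0} and introduce the centered partial sums $\bm S_j = \sum_{i=1}^{j}\{\bm Y_i^\sPc - \Ex(\bm Y_1^\sPc)\}$, $j \geq 1$, together with their Gaussian surrogates $\tilde{\bm S}_j = (\Sigma^\sPc)^{\frac12}\{\bm W_{2,m}(m) + \bm W_{1,m}(j-m)\}$, $j \geq m$. A one-line computation from the definitions of $\bm {\bar Y}^\sPc_{1:j}$ and $\bm {\bar Y}^\sPc_{j+1:k}$ shows that the means cancel and
\[
\frac{j(k-j)}{m^{\frac32}}\big(\bm {\bar Y}^\sPc_{1:j}-\bm {\bar Y}^\sPc_{j+1:k}\big)=\frac{1}{m^{\frac32}}\big(k\bm S_j-j\bm S_k\big),
\]
so that $\tilde D_m^\sPc(k)=\max_{j\in\llb m,k\llb} m^{-\frac32}\|k\bm S_j-j\bm S_k\|_{(\Sigma^\sPc)^{-1}}$; applying the same manipulation to~\eqref{eq:bar:Dm} gives $\bar D_m^\sPc(k)=\max_{j\in\llb m,k\llb} m^{-\frac32}\|k\tilde{\bm S}_j-j\tilde{\bm S}_k\|_{(\Sigma^\sPc)^{-1}}$.

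Then I would use $\big|\max_j a_j-\max_j b_j\big|\le\max_j|a_j-b_j|$ together with the triangle inequality for $\|\cdot\|_{(\Sigma^\sPc)^{-1}}$ to obtain
\[
\big|\tilde D_m^\sPc(k)-\bar D_m^\sPc(k)\big|\le \max_{j\in\llb m,k\llb}\frac{1}{m^{\frac32}}\big\|k(\bm S_j-\tilde{\bm S}_j)-j(\bm S_k-\tilde{\bm S}_k)\big\|_{(\Sigma^\sPc)^{-1}}.
\]
Since $\Sigma^\sPc$ is positive definite (Condition~\ref{cond:sigma}), the weighted norm is comparable to the Euclidean one, $\|\bm y\|_{(\Sigma^\sPc)^{-1}}\le c_p\|\bm y\|_2$ for a finite constant $c_p=c_p(\Sigma^\sPc)$, and since $j\le k$ on the index set, the right-hand side is at most $2c_p\, k\, m^{-\frac32}\max_{m\le j\le k}\|\bm S_j-\tilde{\bm S}_j\|_2$. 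Decomposing
\[
\bm S_j-\tilde{\bm S}_j=\big\{\bm S_j-\bm S_m-(\Sigma^\sPc)^{\frac12}\bm W_{1,m}(j-m)\big\}+\big\{\bm S_m-(\Sigma^\sPc)^{\frac12}\bm W_{2,m}(m)\big\}
\]
and invoking~\eqref{eq:sup:cond} and~\eqref{eq:cond} produces random variables $A_m,B_m=O_\P(1)$ such that $\max_{m\le j\le k}\|\bm S_j-\tilde{\bm S}_j\|_2\le A_m(k-m)^\xi+B_m m^\xi$ for all integers $k>m$.

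To finish, I would note that $(m/k)^{\frac32+\eta}\,k\,m^{-\frac32}=m^\eta k^{-\frac12-\eta}$, so that
\[
\sup_{k>m}\Big(\frac mk\Big)^{\frac32+\eta}\big|\tilde D_m^\sPc(k)-\bar D_m^\sPc(k)\big|\le 2c_p\Big(A_m\sup_{k>m}m^\eta k^{-\frac12-\eta}(k-m)^\xi+B_m\sup_{k>m}m^{\xi+\eta}k^{-\frac12-\eta}\Big),
\]
and then bound $(k-m)^\xi\le k^\xi$ in the first term and use that $t\mapsto t^{\xi-\frac12-\eta}$ and $t\mapsto t^{-\frac12-\eta}$ are decreasing (this is where $0<\xi<\frac12$ and $\eta>0$ enter): both suprema are then at most $m^{\xi-\frac12}$, whence the whole expression is $O_\P(1)\,m^{\xi-\frac12}=o_\P(1)$, which is the claim. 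There is no genuine obstacle here — the argument is bookkeeping resting entirely on Condition~\ref{cond:H0} — and the only point that needs a little care is making the exponent arithmetic at the very end consistent with the crude bounds $j\le k$ and $k-m\le k$ that remove the dependence on $j$ and on the precise value of $k$ in the supremum.
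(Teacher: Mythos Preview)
Your proposal is correct and follows essentially the same approach as the paper's proof: both rewrite $j(k-j)(\bm{\bar Y}^\sPc_{1:j}-\bm{\bar Y}^\sPc_{j+1:k})$ as $k\bm S_j-j\bm S_k$, apply the reverse triangle inequality, and then reduce to the two approximation bounds~\eqref{eq:sup:cond} and~\eqref{eq:cond}, finishing with the observation that $m^{\xi-\frac12}\to 0$. The only cosmetic difference is that the paper splits the error into four named terms $I_m^\sPc,(I_m^\sPc)',J_m^\sPc,(J_m^\sPc)'$ and then notes $J_m^\sPc\le I_m^\sPc$ and $(J_m^\sPc)'\le (I_m^\sPc)'$, whereas you absorb this reduction into the single crude bound $j\le k$ upfront and carry a single maximum $\max_{m\le j\le k}\|\bm S_j-\tilde{\bm S}_j\|_2$; your packaging is slightly more compact but the content is identical.
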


\begin{proof}
Fix $\eta>0$. Applying the reverse triangle inequality for suprema 
\begin{equation}
|\sup_{x \in A}f(x) -\sup_{x \in A}g(x)|\le \sup_{x \in A}|f(x)-g(x)|
\label{eq:reverse:triangle}
\end{equation}
to the maximum over $j \in \llb m,k \llb$ below gives
\begin{align*}
\Big| & \tilde D_m^\sPc (k) - \bar{D}_m^\sPc (k) \Big| \\
&\leq
\frac{1}{\sqrt{m}} \max_{j \in \llb m, k \llb} \left| \frac{j (k-j)}{m}  \| \bm {\bar Y}_{1:j}^\sPc - \bm {\bar Y}_{j+1:k}^\sPc \|_{(\Sigma^\sPc)^{-1}}  
 -
\Big\Vert \frac{k}{m} (\Sigma^\sPc)^{\frac12} \lbrace \bm W_{2,m} (m) + \bm W_{1,m} (j-m) \rbrace \right. \\
&\qquad \qquad \qquad \qquad
-
\left.
  \frac{j}{m} (\Sigma^\sPc)^{\frac12} \lbrace \bm W_{2,m} (m) + \bm W_{1,m} (k-m) \rbrace \Big\Vert_{(\Sigma^\sPc)^{-1}}
  \right| \\
& \leq
 \frac{1}{\sqrt{m}}\max_{j \in \llb m, k \llb} 
  \left\Vert \frac{j (k-j)}{m} \{
   \bm {\bar Y}_{1:j}^\sPc - \bm {\bar Y}_{j+1:k}^\sPc \} - 
   \frac{k}{m} (\Sigma^\sPc)^{\frac12} \lbrace \bm W_{2,m} (m) + \bm W_{1,m} (j-m)\rbrace 
   \right.\\
&\qquad\qquad\qquad\qquad
+
\left.  
  \frac{j}{m} (\Sigma^\sPc)^{\frac12} \lbrace \bm W_{2,m} (m) + \bm W_{1,m} (k-m) \rbrace 
  \right\Vert_{(\Sigma^\sPc)^{-1}} =
U_m^\sPc(k).
\end{align*}
Next, we rewrite $j (k-j) \lbrace \bm {\bar Y}_{1:j}^\sPc - \bm {\bar Y}_{j+1:k}^\sPc \rbrace$ as 
\begin{align*}
j (k-j) \left\{ \frac{1}{j} \sum_{i=1}^{j} \bm Y_{i}^\sPc - \frac{1}{k-j} \sum_{i=j+1}^{k}\bm Y_{i}^\sPc \right\} 
&=
(k-j) \sum_{i=1}^{j} \left\{ \bm Y_{i}^\sPc - \Ex (\bm Y_{1}^\sPc) \right\} - j \sum_{i=j+1}^{k} \left\{ \bm Y_{i}^\sPc - \Ex (\bm Y_{1}^\sPc) \right\} \\
&=
k \sum_{i=1}^{j} \left\{ \bm Y_{i}^\sPc - \Ex (\bm Y_{1}^\sPc) \right\} - j \sum_{i=1}^{k}\left\{ \bm Y_{i}^\sPc - \Ex (\bm Y_{1}^\sPc) \right\} \\
&=
k \sum_{i=1}^{m} \left\{ \bm Y_{i}^\sPc - \Ex (\bm Y_{1}^\sPc) \right\} + k \sum_{i=m+1}^{j}\left\{ \bm Y_{i}^\sPc - \Ex (\bm Y_{1}^\sPc) \right\} \\
&\quad
 - 
j\sum_{i=1}^{m} \left\{ \bm Y_{i}^\sPc - \Ex (\bm Y_{1}^\sPc) \right\} - j\sum_{i=m+1}^{k}\left\{\bm Y_{i}^\sPc - \Ex (\bm Y_{1}^\sPc) \right\}.
\end{align*}
Hence, using the triangle inequality, we have that 
\begin{align*}
\sup_{k > m}  \Big(\frac mk\Big)^{\frac32+\eta} \left| \tilde D_m^\sPc (k) - \bar D_m^\sPc (k) \right|
&\leq
\sup_{k > m}  \Big(\frac mk\Big)^{\frac32+\eta} U_m^\sPc (k) \\
&\leq 
I_m^\sPc + (I_m^\sPc)' + J_m^\sPc + (J_m^\sPc)',
\end{align*}
where (using the convention that empty sums are equal to 0) 
\begin{align*}
I_m^\sPc
& =\frac{1}{\sqrt{m}}
\sup_{k>m} \left(\frac{m}{k}\right)^{\frac12+\eta} 
\max_{j \in \llb m,k \llb} \left\| \sum_{i=1}^{m} \left\{ \bm Y_{i}^\sPc - \Ex (\bm Y_{1}^\sPc)\right\} -(\Sigma^\sPc)^{\frac12} \bm W_{2,m} (m) \right\|_{(\Sigma^\sPc)^{-1}}, \\
(I_m^\sPc)'
& =
\frac{1}{\sqrt{m}}
\sup_{k>m} \left(\frac{m}{k}\right)^{\frac12+\eta} 
\max_{j \in \llb m,k \llb} \left\| \sum_{i=m+1}^{j} \left\{ \bm Y_{i}^\sPc - \Ex (\bm Y_{1}^\sPc) \right\} -(\Sigma^\sPc)^{\frac12} \bm W_{1,m} (j-m) \right\|_{(\Sigma^\sPc)^{-1}}, \\
J_m^\sPc
& =
\frac{1}{\sqrt{m}} \sup_{k>m} \left(\frac{m}{k}\right)^{\frac32+\eta} 
\max_{j \in \llb m,k \llb} \frac{j}{m} \left\| \sum_{i=1}^{m} \left\{ \bm Y_{i}^\sPc - \Ex (\bm Y_{1}^\sPc) \right\} - (\Sigma^\sPc)^{\frac12} \bm W_{2,m} (m) \right\|_{(\Sigma^\sPc)^{-1}}, \\
(J_m^\sPc)'
& =
\frac{1}{\sqrt{m}} \sup_{k>m} \left(\frac{m}{k}\right)^{\frac32+\eta} 
\max_{j \in \llb m,k \llb} \frac{j}{m} \left\| \sum_{i=m+1}^{k} \left\{ \bm Y_{i}^\sPc - \Ex (\bm Y_{1}^\sPc)\right\} - (\Sigma^\sPc)^{\frac12} \bm W_{1,m} (k-m) \right\|_{(\Sigma^\sPc)^{-1}}.
\end{align*}
It suffices to show that $I_m^\sPc$, $(I_m^\sPc)'$, $J_m^\sPc$, and $(J_m^\sPc)'$ are $o_\P(1)$.  Since $(m/k)\times (j/m)=j/k\le 1$ when $m\le j<k$, we have that $J_m^\sPc\le I_m^\sPc$ and $(J_m^\sPc)'\le (I_m^\sPc)'$.  So it suffices to consider $I_m^\sPc$ and $(I_m^\sPc)'$. Let $0<\xi<\frac12$ be as in Condition \ref{cond:H0}. Then,
\begin{align*}
I_m^\sPc
&=
\frac{1}{m^\xi} \left\| \sum_{i=1}^{m} \left\{ \bm Y_{i}^\sPc - \Ex (\bm Y_{1}^\sPc) \right\} -(\Sigma^\sPc)^{\frac12} \bm W_{2,m} (m) \right\|_{(\Sigma^\sPc)^{-1}} m^{\xi-\frac12} \sup_{k>m} \Big(\frac{m}{k}\Big)^{\frac12+\eta} \\
&\leq
\frac{1}{m^\xi} \left\| \sum_{i=1}^{m} \left\{ \bm Y_{i,m}^\sPc - \Ex (\bm Y_{1}^\sPc)\right\} -(\Sigma^\sPc)^{\frac12} \bm W_{2,m} (m) \right\|_{(\Sigma^\sPc)^{-1}} m^{\xi-\frac12}.
\end{align*}
Hence, by equivalence of norms on $\R^p$,  \eqref{eq:cond} and the fact that $\xi<\frac12$, $I_m^\sPc$ converges in probability to zero as $m \to \infty$. Next, since the norm in $(I_m^\sPc)'$ is zero when $j=m$ and using the fact that $j-m<k-m<k$, we see that $(I_m^\sPc)'$ is equal to
\begin{align*}
&\frac{1}{\sqrt{m}}
\sup_{k>m} \left(\frac{m}{k}\right)^{\frac12+\eta} 
\max_{j \in \rrb m, k \llb} \left\| \sum_{i=m+1}^{j} \left\{ \bm Y_{i}^\sPc - \Ex (\bm Y_{1}^\sPc) \right\} -(\Sigma^\sPc)^{\frac12} \bm W_{1,m} (j-m) \right\|_{(\Sigma^\sPc)^{-1}} \\
&\le \frac{1}{m^{\frac12-\xi}}
\sup_{k>m} \left(\frac{m}{k}\right)^{\frac12-\xi+\eta} 
\max_{j \in \rrb m, k \llb} \frac{1}{(j-m)^\xi}
\left\| \sum_{i=m+1}^{j} \left\{ \bm Y_{i}^\sPc - \Ex (\bm Y_{1}^\sPc) \right\} -(\Sigma^\sPc)^{\frac12} \bm W_{1,m} (j-m) \right\|_{(\Sigma^\sPc)^{-1}}\\
&\le \frac{1}{m^{\frac12-\xi}}
\sup_{j>m} \frac{1}{(j-m)^\xi}
\left\| \sum_{i=m+1}^{j} \left\{ \bm Y_{i}^\sPc - \Ex (\bm Y_{1}^\sPc) \right\} -(\Sigma^\sPc)^{\frac12} \bm W_{1,m} (j-m) \right\|_{(\Sigma^\sPc)^{-1}} \sup_{k>m} \left(\frac{m}{k}\right)^{\frac12-\xi+\eta}.
\end{align*}
The supremum over $k$ above is less than 1 since $\xi<\frac12$.  The supremum over $j$ is bounded in probability by \eqref{eq:sup:cond} and equivalence of norms.  Since $m^{-\frac12+\xi}\to 0$, this proves that $(I_m^\sPc)'$ converges to 0 in probability.
\end{proof}

Given $p$-dimensional independent standard Brownian motions $\bm W_1$ and $\bm W_2$ and $\eta>0$, define the random function $D_\eta$ by 
\begin{align}
  \label{eq:D:eta}
D_\eta \left(s,t\right)
&=
t^{-\frac32-\eta} \left\Vert (t-s) \bm W_2 (1) + t \bm W_1 (s-1) - s \bm W_1 (t-1) \right\Vert_{I_p},\quad 1\leq s\leq t < \infty.
\end{align}

\begin{lem}
For any $\eta>0$, the function $D_\eta$ is almost surely bounded and uniformly continuous.
\end{lem}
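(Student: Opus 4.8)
The plan is to handle the two claims in turn, the crucial point being that the index set $\{1\le s\le t<\infty\}$ is not compact, so plain continuity will not by itself yield uniform continuity. First I would record that $D_\eta$ is a.s.\ continuous there: since $\bm W_1,\bm W_2$ have a.s.\ continuous sample paths, the map $(s,t)\mapsto (t-s)\bm W_2(1)+t\bm W_1(s-1)-s\bm W_1(t-1)$ is a.s.\ continuous on $\{1\le s\le t<\infty\}$, and composing it with the continuous functions $t\mapsto t^{-\frac32-\eta}$ on $[1,\infty)$ and $\bm y\mapsto\|\bm y\|_{I_p}=\|\bm y\|_2/\sqrt p$ preserves continuity.

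Next I would prove almost sure boundedness and, simultaneously, the decay of $D_\eta$ at infinity using the law of the iterated logarithm (LIL) for Brownian motion. The triangle inequality for $\|\cdot\|_{I_p}$ gives, for $1\le s\le t$,
\[
D_\eta(s,t)\le \frac{1}{\sqrt{p}}\Big(t^{-\frac32-\eta}(t-s)\|\bm W_2(1)\|_2+t^{-\frac12-\eta}\|\bm W_1(s-1)\|_2+t^{-\frac32-\eta}s\,\|\bm W_1(t-1)\|_2\Big).
\]
Using $t-s\le t$, $s\le t$ and $t\ge1$, the first term is at most $\|\bm W_2(1)\|_2/\sqrt p$ and the sum of the other two is at most $2\big(\sup_{u\ge0}\|\bm W_1(u)\|_2/(1+u)^{\frac12+\eta}\big)/\sqrt p$, which is a.s.\ finite because $\|\bm W_1(u)\|_2=O\!\big(\sqrt{u\log\log u}\big)$ as $u\to\infty$ by the LIL; this gives a.s.\ boundedness. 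The same three estimates yield, for large $t$,
\[
\sup_{1\le s\le t}D_\eta(s,t)\le \frac{1}{\sqrt{p}}\Big(t^{-\frac12-\eta}\|\bm W_2(1)\|_2+2\,t^{-\frac12-\eta}\sup_{0\le u\le t}\|\bm W_1(u)\|_2\Big)=O\!\big(t^{-\eta}\sqrt{\log\log t}\,\big)\longrightarrow 0
\]
almost surely as $t\to\infty$, again by the LIL, now applied to $\sup_{0\le u\le t}\|\bm W_1(u)\|_2$.

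Finally I would deduce uniform continuity by reducing to a compact set. Fix $\varepsilon>0$ and work on the full-measure event where the above holds. Using the tail decay, pick a finite $T\ge2$ with $D_\eta(s,t)<\varepsilon/3$ for all $t\ge T-1$ and $s\in[1,t]$. By the Heine--Cantor theorem, $D_\eta$ is uniformly continuous on the compact set $\{1\le s\le t\le T+1\}$, so there is $\delta\in(0,1)$ such that $|D_\eta(s,t)-D_\eta(s',t')|<\varepsilon/3$ whenever $(s,t),(s',t')$ lie in that set at distance $<\delta$. Given any $(s,t),(s',t')$ in the full domain with $|s-s'|<\delta$ and $|t-t'|<\delta$, say $t\le t'$: if $t'\le T+1$ both points lie in the compact set; if $t'>T+1$ then $t>T$, so $D_\eta(s,t)$ and $D_\eta(s',t')$ are both $<\varepsilon/3$; in either case $|D_\eta(s,t)-D_\eta(s',t')|<\varepsilon$, proving a.s.\ uniform continuity.

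I expect the only real obstacle to be the non-compactness of $\{1\le s\le t<\infty\}$, which is precisely what forces the use of the LIL: the cruder bound $\|\bm W_1(u)\|_2=o(u)$ coming from the strong law would leave $t^{-\frac12-\eta}\sup_{u\le t}\|\bm W_1(u)\|_2$ possibly unbounded, whereas the sharper $O(\sqrt{u\log\log u})$ rate makes it vanish, and both the boundedness and the tail-decay steps rest on this. The remaining ingredients — sample-path continuity, the triangle inequality, and Heine--Cantor on a finite rectangle — are routine.
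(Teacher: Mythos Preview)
Your argument is correct. Both the boundedness and tail-decay steps via the LIL are sound, and the reduction of uniform continuity to a compact rectangle via the tail decay and Heine--Cantor is cleanly executed (in particular the buffer between $T-1$ and $T+1$ together with $\delta<1$ handles the boundary case correctly).

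The paper takes a different route: it writes $D_\eta(s,t)=\|\bm A(s,t)\|_{I_p}$ with $\bm A(s,t)=t^{-\frac32-\eta}\{(t-s)\bm W_2(1)+t\bm W_1(s-1)-s\bm W_1(t-1)\}$, observes that each coordinate $A^{\sss[i]}$ is exactly the one-dimensional process treated in Lemma~A.2 of \cite{HolKoj21}, invokes that lemma to conclude that each $A^{\sss[i]}$ is a.s.\ bounded and uniformly continuous, and then passes to $D_\eta$ via the reverse triangle inequality $|\|\bm A(s,t)\|_{I_p}-\|\bm A(s',t')\|_{I_p}|\le\|\bm A(s,t)-\bm A(s',t')\|_{I_p}$. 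So the paper offloads the analytic work (which is essentially your LIL-plus-compactification argument in dimension one) to an external reference and then lifts it to $\R^p$ componentwise; your version is self-contained and makes the mechanism explicit. Both approaches rest on the same underlying fact --- the $O(\sqrt{u\log\log u})$ growth of Brownian paths --- and your remark that the cruder $o(u)$ bound from the strong law would not suffice when $\eta<\tfrac12$ is precisely the point.
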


\begin{proof}
Fix $\eta>0$.  For $i \in \llb 1, p \rrb$ and $1\leq s\leq t < \infty$, define $A^{\sss[i]} \left(s,t\right)$ to be the $i$-th coordinate of 
\begin{align*}
\bm A \left(s,t\right)
&=
t^{-\frac32-\eta}  \left\{ (t-s) \bm W_2 (1) + t \bm W_1 (s-1) - s \bm W_1 (t-1) \right\},\quad 1\leq s\leq t < \infty,
\end{align*}
and note that $D_{\eta}(s,t)=\|\bm A\left(s,t\right)\|_{I_p}$. From Lemma A.2 in~\cite{HolKoj21}, we know that $|A^{\sss[i]}|$ is almost surely bounded and uniformly continuous for each $i$.  It follows that each $A^{\sss[i]}$ is also bounded and uniformly continuous, almost surely.  Now note that 
\begin{align*}
|D_{\eta}(s,t)-D_{\eta}(s',t')|=\Big|\|\bm A \left(s,t\right)\|_{I_p}-\|\bm A \left(s',t'\right)\|_{I_p}\Big|&\le \|\bm A\left(s,t\right)-\bm A \left(s',t'\right)\|_{I_p}\\
&=\sqrt{\frac{1}{p}\sum_{i=1}^p \{ A^{\sss[i]}(s,t)-A^{\sss[i]}(s',t') \}^2}.
\end{align*}  
So $D_\eta$ is almost surely bounded and uniformly continuous since each $A^{\sss[i]}$ is.
\end{proof}

\begin{lem}
\label{lem:bar:D-to-D:eta}
For any $\eta>0$,
\begin{align*}
\sup_{k>m}  \Big(\frac mk\Big)^{\frac32+\eta}\bar{D}_m^\sPc
(k) \leadsto \sup_{1\leq s \leq t < \infty} D_\eta (s,t),
\end{align*}
where $\bar{D}_m^\sPc$ is defined in~\eqref{eq:bar:Dm} and  $D_\eta$ is defined in~\eqref{eq:D:eta}.
\end{lem}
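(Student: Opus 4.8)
The plan is to identify, for each fixed $m$, the exact law of $\sup_{k>m}(m/k)^{\frac32+\eta}\bar D_m^\sPc(k)$ with the supremum of the process $D_\eta$ of~\eqref{eq:D:eta} over a finite grid of $(s,t)$-values, and then to let this grid become dense as $m\to\infty$.

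First I would simplify $\bar D_m^\sPc$ by exploiting that the weighted norm $\|\cdot\|_{(\Sigma^\sPc)^{-1}}$ interacts with $(\Sigma^\sPc)^{\frac12}$ in the simplest possible way: for every $\bm z\in\R^p$ one has $\|(\Sigma^\sPc)^{\frac12}\bm z\|_{(\Sigma^\sPc)^{-1}}=\|\bm z\|_{I_p}$, since $\big((\Sigma^\sPc)^{\frac12}\bm z\big)^\top(\Sigma^\sPc)^{-1}(\Sigma^\sPc)^{\frac12}\bm z=\bm z^\top\bm z$. Pulling $(\Sigma^\sPc)^{\frac12}$ out of the norm in~\eqref{eq:bar:Dm} and regrouping the $\bm W_{2,m}(m)$ terms, $\bar D_m^\sPc(k)$ equals $m^{-\frac12}\max_{j\in\llb m,k\llb}\big\|\tfrac{k-j}{m}\bm W_{2,m}(m)+\tfrac km\bm W_{1,m}(j-m)-\tfrac jm\bm W_{1,m}(k-m)\big\|_{I_p}$. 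Fixing $m$ and setting $\tilde{\bm W}_i(u)=m^{-\frac12}\bm W_{i,m}(mu)$ for $i\in\{1,2\}$ --- a pair of independent standard $p$-dimensional Brownian motions by Brownian scaling --- and writing $t=k/m$, $s=j/m$, a direct computation gives
\[ \Big(\tfrac mk\Big)^{\frac32+\eta}\bar D_m^\sPc(k)=\max_{s}\; t^{-\frac32-\eta}\big\|(t-s)\tilde{\bm W}_2(1)+t\tilde{\bm W}_1(s-1)-s\tilde{\bm W}_1(t-1)\big\|_{I_p}, \]
the maximum running over $s\in\{1,1+\tfrac1m,\dots,t-\tfrac1m\}$. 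Hence, with $\Theta_m:=\{(j/m,k/m):j,k\in\N,\ m\le j<k\}$, one obtains the distributional identity $\sup_{k>m}(m/k)^{\frac32+\eta}\bar D_m^\sPc(k)\overset{d}{=}\sup_{(s,t)\in\Theta_m}D_\eta(s,t)$, where on the right $D_\eta$ is built from one fixed pair $(\bm W_1,\bm W_2)$ of independent standard Brownian motions.

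Second, writing $S_\infty=\sup_{1\le s\le t<\infty}D_\eta(s,t)$, I would show $\sup_{(s,t)\in\Theta_m}D_\eta(s,t)\to S_\infty$ almost surely. The bound ``$\le$'' in the $\limsup$ is immediate since $\Theta_m$ is contained in the full index set. For the matching $\liminf$ bound I would work on the almost sure event on which $D_\eta$ is bounded and uniformly continuous (the preceding lemma). Given $\eps>0$, choose $(s^\star,t^\star)$ with $s^\star<t^\star$ and $D_\eta(s^\star,t^\star)>S_\infty-\eps$; such a point with $s^\star<t^\star$ exists whenever $S_\infty>0$ because $D_\eta$ vanishes on the diagonal and is continuous (and the event $S_\infty=0$ is negligible). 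For $m$ large the grid point $\big(\lceil ms^\star\rceil/m,\ \lceil mt^\star\rceil/m\big)$ belongs to $\Theta_m$ and lies within $2/m$ of $(s^\star,t^\star)$, so uniform continuity gives $\sup_{(s,t)\in\Theta_m}D_\eta(s,t)>S_\infty-2\eps$ for all large $m$; letting $\eps\downarrow0$ proves the claim.

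Combining the two steps finishes the proof: $\sup_{(s,t)\in\Theta_m}D_\eta(s,t)$ converges almost surely, hence in distribution, to $S_\infty$, and for each $m$ it has the same law as $\sup_{k>m}(m/k)^{\frac32+\eta}\bar D_m^\sPc(k)$. The only point that requires care is the bookkeeping in the second step --- the sets $\Theta_m$ are not nested, so one must check that an arbitrary target $(s^\star,t^\star)$ with $s^\star<t^\star$ is approximable by points of $\Theta_m$ compatible with the constraints $m\le j<k$ --- but this is routine for $m$ sufficiently large. The genuinely essential ingredient is the exact scaling identity of the first step; everything downstream is a soft approximation argument resting on the boundedness and uniform continuity of $D_\eta$.
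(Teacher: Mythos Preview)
Your proof is correct and follows essentially the same route as the paper's: both reduce $\bar D_m^\sPc$ via the identity $\|(\Sigma^\sPc)^{1/2}\bm z\|_{(\Sigma^\sPc)^{-1}}=\|\bm z\|_{I_p}$ and Brownian scaling to the supremum of $D_\eta$ over the lattice $\{(j/m,k/m):m\le j<k\}$, and then pass to the continuous supremum using the almost-sure boundedness and uniform continuity of $D_\eta$. The only cosmetic difference is in the last step: the paper substitutes $j=\lfloor ms\rfloor$, $k=\lfloor mt\rfloor$ (after noting that the cases $j=k$ and $k=m$ may be harmlessly included) to rewrite the lattice supremum as $\sup_{1\le s\le t<\infty}D_\eta(\lfloor ms\rfloor/m,\lfloor mt\rfloor/m)$ and then bounds the difference of the two suprema in one stroke via the reverse triangle inequality~\eqref{eq:reverse:triangle}, which avoids your separate $\limsup$/$\liminf$ argument and the small detour through the event $\{S_\infty>0\}$.
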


\begin{proof}
Fix $\eta>0$. The random variable $\sup_{k>m} (m/k)^{\frac32+\eta} \bar D_m^\sPc (k)$ is equal in distribution to
\begin{align*}
& \frac{1}{\sqrt{m}}\sup_{k\ge m}  \Big(\frac mk\Big)^{\frac32+\eta} \max_{j \in \llb m, k \rrb}
\left\Vert \frac{k}{m} \{ \bm W_{2} (m) + \bm W_{1} (j-m)\} -\frac{j}{m} \{  \bm W_{2} (m) + \bm W_{1} (k-m) \} \right\Vert_{I_p},
\end{align*}
where we note that the norm is equal to zero when  $j=k$, which has allowed us to include the cases $j=k$ and $k=m$ in the maximum and supremum, respectively.  Using Brownian scaling, this is equal in distribution to 
\begin{align*}
\sup_{k\ge m} \Big(\frac mk\Big)^{\frac32+\eta} \max_{j \in \llb m, k \rrb}
\left\Vert \frac{k}{m} \Big\{ \bm W_{2} (1) + \bm W_{1} \Big(\frac{j}{m}-1\Big)\Big\} -\frac{j}{m} \Big\{  \bm W_{2} (1) + \bm W_{1} \Big(\frac{k}{m}-1\Big) \Big\} \right\Vert_{I_p}.
\end{align*}
In the above, $j$ and $k$ are integers.  Letting $k = \floor{mt}$ and $j = \floor{ms}$ (where $s,t\in\R$ and $1\leq s \le t < \infty$), the above display becomes  
\begin{align*}
&
\sup_{t \ge 1} \Big(\frac{m}{\floor{mt}}\Big)^{\frac32+\eta}
\\
& \times \sup_{s \in [1,t]}
\left\| \frac{\floor{mt}}{m} \Big\{ \bm W_{2} (1) + \bm W_{1} \Big(\frac{\floor{ms}}{m}-1\Big)\Big\} -\frac{\floor{ms}}{m} \Big\{  \bm W_{2} (1) + \bm W_{1} \Big(\frac{\floor{mt}}{m}-1\Big) \Big\} \right\|_{I_p} \\
&=\sup_{1 \leq s \le  t < \infty}  D_\eta \left( \frac{\floor{ms}}{m}, \frac{\floor{mt}}{m} \right).
\end{align*}
Using \eqref{eq:reverse:triangle} with the supremum over $(s,t)$, we have
\begin{align*}
 \left| \sup_{1 \leq s \le  t < \infty} D_\eta \left( \frac{\floor{ms}}{m}, \frac{\floor{mt}}{m} \right)- \sup_{1 \leq s \leq t < \infty} D_\eta (s, t)  \right| 
& \leq 
\sup_{1 \leq s \leq t < \infty} \left| D_\eta \left( \frac{\floor{ms}}{m}, \frac{\floor{mt}}{m} \right) -  D_\eta \left(s, t \right)  \right|,
\end{align*}
which converges to zero almost surely as $m \to \infty$ since $D_\eta$ is almost surely uniformly continuous and $t-1/m< \frac{\floor{mt}}{m}\le t$.
\end{proof}

For a $p \times p$ matrix $A$, denote the operator norm of $A$ by 
\begin{equation}
  \label{eq:operator:norm}
  \|A\|_{\textrm{op}} = \inf \{ c \geq 0 : \| A \bm v \|_2 \leq c \|\bm v \|_2,\text{ for all } \bm v\in \R^p \}.
\end{equation}

\begin{lem}
\label{lem:tilde:D-to-D}
Assume that Condition~\ref{cond:H0} holds and that $\Sigma_m^\sPc \p \Sigma^\sPc$. Then, for any $\eta>0$,  
\begin{equation*} 
  \label{eq:tilde:D-to-D}
  \sup_{k > m} \Big( \frac m k \Big)^{\frac32+\eta} | \tilde D_m^\sPc (k) - D_m^\sPc (k) | = o_\P(1),
\end{equation*}
where $\tilde D_m^\sPc$ is defined in~\eqref{eq:tilde:Dm} and $D_m^\sPc$ is defined in~\eqref{eq:Dm}.
\end{lem}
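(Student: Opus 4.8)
The plan is to use that $\tilde D_m^\sPc$ and $D_m^\sPc$ differ only through the weighting matrix appearing in the Mahalanobis-like norm --- the true $(\Sigma^\sPc)^{-1}$ for $\tilde D_m^\sPc$ versus the estimate $(\Sigma_m^\sPc)^{-1}$ for $D_m^\sPc$ --- and to control the resulting discrepancy by the operator-norm distance between these two matrices, at the cost of one extra copy of $\sup_{k>m}(m/k)^{\frac32+\eta}\tilde D_m^\sPc(k)$, which the two previous lemmas already show to be tight. Since $\Sigma^\sPc$ is positive-definite (Condition~\ref{cond:sigma}) and $\Sigma_m^\sPc\p\Sigma^\sPc$, the matrix $\Sigma_m^\sPc$ is positive-definite, and $(\Sigma_m^\sPc)^{-1}\p(\Sigma^\sPc)^{-1}$, on an event of probability tending to $1$; I would argue on that event, which is harmless for an $o_\P(1)$ conclusion.

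The key steps, in order. First, apply the reverse triangle inequality for suprema~\eqref{eq:reverse:triangle} to the maximum over $j\in\llb m,k\llb$, so that $|\tilde D_m^\sPc(k)-D_m^\sPc(k)|$ is at most $\max_{j\in\llb m,k\llb} j(k-j)m^{-\frac32}$ times the absolute difference of the two norms evaluated at $\bm v_{j,k}:=\bm{\bar Y}_{1:j}^\sPc-\bm{\bar Y}_{j+1:k}^\sPc$. Second, using $|\sqrt a-\sqrt b|\le\sqrt{|a-b|}$ together with the definition $\|\bm y\|_M=\sqrt{(\bm y^\top M\bm y)/p}$ and the operator norm~\eqref{eq:operator:norm}, this difference of norms is bounded by $\sqrt{\|(\Sigma_m^\sPc)^{-1}-(\Sigma^\sPc)^{-1}\|_{\textrm{op}}/p}\,\|\bm v_{j,k}\|_2$. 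Third, use the reverse comparison $\|\bm y\|_2\le(p\,\|\Sigma^\sPc\|_{\textrm{op}})^{1/2}\|\bm y\|_{(\Sigma^\sPc)^{-1}}$ to re-express $j(k-j)m^{-\frac32}\|\bm v_{j,k}\|_2$ in terms of $\|\bm v_{j,k}\|_{(\Sigma^\sPc)^{-1}}$; taking the maximum over $j$ and then $\sup_{k>m}$ of $(m/k)^{\frac32+\eta}$ times everything reconstitutes $\sup_{k>m}(m/k)^{\frac32+\eta}\tilde D_m^\sPc(k)$. Altogether this yields
\begin{equation*}
\sup_{k>m}\Big(\tfrac mk\Big)^{\frac32+\eta}\big|\tilde D_m^\sPc(k)-D_m^\sPc(k)\big|\ \le\ \sqrt{\|\Sigma^\sPc\|_{\textrm{op}}\,\|(\Sigma_m^\sPc)^{-1}-(\Sigma^\sPc)^{-1}\|_{\textrm{op}}}\ \sup_{k>m}\Big(\tfrac mk\Big)^{\frac32+\eta}\tilde D_m^\sPc(k).
\end{equation*}

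Finally, I would read off that the right-hand side is $o_\P(1)\times O_\P(1)=o_\P(1)$: the square-root factor tends to $0$ in probability because $\|\Sigma^\sPc\|_{\textrm{op}}$ is a fixed constant and $\|(\Sigma_m^\sPc)^{-1}-(\Sigma^\sPc)^{-1}\|_{\textrm{op}}\p 0$ by continuity of matrix inversion at the positive-definite $\Sigma^\sPc$, while $\sup_{k>m}(m/k)^{\frac32+\eta}\tilde D_m^\sPc(k)$ is $O_\P(1)$ --- by Lemma~\ref{lem:bar:D} it is within $o_\P(1)$ of $\sup_{k>m}(m/k)^{\frac32+\eta}\bar D_m^\sPc(k)$, which by Lemma~\ref{lem:bar:D-to-D:eta} converges weakly to $\sup_{1\le s\le t<\infty}D_\eta(s,t)$, an almost surely finite random variable by the boundedness property of $D_\eta$ established above. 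There is no genuinely hard step here; the only points requiring a little care are the two norm-comparison inequalities relating $\|\cdot\|_2$ to the two Mahalanobis-like norms via operator norms, the invertibility of $\Sigma_m^\sPc$ on an event of probability tending to $1$, and the fact --- crucially supplied by the two previous lemmas, which is precisely why this lemma is proved last --- that $\sup_{k>m}(m/k)^{\frac32+\eta}\tilde D_m^\sPc(k)=O_\P(1)$.
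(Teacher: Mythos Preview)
Your proof is correct and follows essentially the same route as the paper's: reverse triangle inequality on the maximum over $j$, then $|\sqrt a-\sqrt b|\le\sqrt{|a-b|}$ combined with the operator-norm bound to pull out $\|(\Sigma_m^\sPc)^{-1}-(\Sigma^\sPc)^{-1}\|_{\textrm{op}}^{1/2}=o_\P(1)$, leaving a factor that is $O_\P(1)$ by the two previous lemmas. The only cosmetic difference is that the paper leaves the residual factor in terms of $\|\cdot\|_2$ and invokes ``equivalence of norms on $\R^p$'' to identify it as $O_\P(1)$, whereas you make the comparison $\|\bm y\|_2\le(p\,\|\Sigma^\sPc\|_{\textrm{op}})^{1/2}\|\bm y\|_{(\Sigma^\sPc)^{-1}}$ explicit so that the residual factor is exactly $\sup_{k>m}(m/k)^{\frac32+\eta}\tilde D_m^\sPc(k)$; your extra remark about working on the event where $\Sigma_m^\sPc$ is invertible is a point the paper leaves implicit.
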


\begin{proof}
Fix $\eta>0$. Applying~\eqref{eq:reverse:triangle} to the maximum over $j \in \llb m,k \llb$ and using the inequality $| a^{\frac12} - b^{\frac12}| \le |a - b|^{\frac12}$ for $a,b\geq0$, we obtain that
\begin{align} 
\sup_{k > m} &\Big(\frac m k\Big)^{\frac32+\eta} | \tilde D_m^\sPc (k) - D_m^\sPc (k) | \nonumber \\
&\leq
\sup_{k > m} \Big(\frac m k\Big)^{\frac32+\eta} \max_{j \in \llb m,k \llb} \frac{j (k-j)}{m^{\frac32}} \left|  \| \bm {\bar Y}_{1:j}^\sPc - \bm {\bar Y}_{j+1:k}^\sPc \|_{(\Sigma^\sPc)^{-1}} - \| \bm {\bar Y}_{1:j}^\sPc - \bm {\bar Y}_{j+1:k}^\sPc \|_{(\Sigma_m^\sPc)^{-1}}\right| \nonumber \\
&\leq
\sup_{k > m} \Big(\frac m k\Big)^{\frac32+\eta} \label{eq:tilde:D-to-D-bound-i} 
\max_{j \in \llb m,k \llb} \frac{j (k-j)}{p^{\frac12} m^{\frac32}} \left| \left( \bm {\bar Y}_{1:j}^\sPc - \bm {\bar Y}_{j+1:k}^\sPc \right)^{\top} \left((\Sigma^\sPc)^{-1} - (\Sigma_m^\sPc)^{-1} \right) \left( \bm {\bar Y}_{1:j}^\sPc - \bm {\bar Y}_{j+1:k}^\sPc \right) \right|^{\frac12}. 
\end{align}
For $A \in \R^{p\times p}$ and $\bm v\in \R^p$, we have that $| \bm v^\top A \bm v | \leq \| A \bm v \|_2 \| \bm v \|_2$ by the Cauchy-Schwarz inequality. By \eqref{eq:operator:norm}, $\|A\bm v \|_2 \leq \|A\|_{\textrm{op}} \|\bm v\|_2$,  so that $| {\bm v}^\top A \bm v |^{\frac12} \leq \|A\|_{\textrm{op}}^{\frac12} \|\bm v \|_2$. From~\eqref{eq:tilde:D-to-D-bound-i}, we therefore obtain
\begin{multline*}
  \sup_{k > m} \Big(\frac m k\Big)^{\frac32+\eta} | \tilde D_m^\sPc (k) - D_m^\sPc (k) |  \\
  \leq \frac{1}{p^{\frac12}} \left\| (\Sigma^\sPc)^{-1} - (\Sigma_m^\sPc)^{-1} \right\|_{\textrm{op}}^{\frac12} \sup_{k > m} \left(\frac m k \right)^{\frac32+\eta} \max_{j \in \llb m,k \llb} \frac{j (k-j)}{m^{\frac32}} \left\| \bm {\bar Y}_{1:j}^\sPc - \bm {\bar Y}_{j+1:k}^\sPc \right\|_2.
\end{multline*}
It is well-known that the mapping that maps an invertible square matrix to its inverse is continuous. Since $\Sigma_m^\sPc \p \Sigma^\sPc$ and Condition~\ref{cond:sigma} holds, the continuous mapping theorem immediately implies that $(\Sigma_m^\sPc)^{-1} \p (\Sigma^\sPc)^{-1}$, which in turn implies that $\left\| (\Sigma^\sPc)^{-1} - (\Sigma_m^\sPc)^{-1} \right\|_{\textrm{op}} = o_\P(1)$ by equivalence of norms. Furthermore, from Lemmas~\ref{lem:bar:D} and~\ref{lem:bar:D-to-D:eta}, we have that $\sup_{k > m} (m/k)^{\frac32+\eta} \tilde D_m^\sPc(k)$ converges weakly, which implies that it is bounded in probability. By equivalence of norms on $\R^p$, we immediately obtain that
$$
\sup_{k > m} \left(\frac m k \right)^{\frac32+\eta} \max_{j \in \llb m,k \llb} \frac{j (k-j)}{m^{\frac32}} \left\| \bm {\bar Y}_{1:j}^\sPc - \bm {\bar Y}_{j+1:k}^\sPc \right\|_2 = O_\P(1)
$$
and therefore the desired result.
\end{proof}

\begin{proof}[\bf Proof of Theorem \ref{thm:H0}]
From Lemmas~\ref{lem:bar:D}--\ref{lem:tilde:D-to-D}, we have that  
\begin{align*}
\sup_{k > m} \left(\frac m k \right)^{\frac32+\eta}  D_m^\sPc(k) \leadsto \sup_{1\leq s \leq t < \infty} D_\eta\left( s,t \right),
\end{align*}
where $D_\eta$ is defined in~\eqref{eq:D:eta}. It remains to be shown that 
\begin{equation}
\label{eq:final}
\sup_{1\leq s \leq t < \infty} D_\eta\left( s,t \right) 
\d
\sup_{1 \leq s \leq t < \infty} t^{-\frac32-\eta} \| t \bm W(s) - s \bm W(t) \|_{I_p}.
\end{equation}
Let $\bm U_p$ and $\bm V_p$ be two $p$-dimensional Gaussian processes defined, for any $1\leq s \leq t$, by
\begin{align*}
\bm U_p\left(s,t\right)
&= \left(t-s\right)\bm W_2\left(1\right) + t\bm W_1\left(s-1\right) -s \bm W_1\left(t-1\right), \\
\bm V_p\left(s,t\right)
&=
t \bm W(s) - s \bm W(t).
\end{align*}
Since $\bm W_1$, $\bm W_2$, and $\bm W$ are $p$-dimensional standard Brownian motions, it follows that the coordinates of the Gaussian processes $\bm U_p$ and $\bm V_p$ are centered. Thus, to show that the random functions $\bm U_p$ and $\bm V_p$ are equal in distribution (which will immediately imply~\eqref{eq:final}), it suffices to establish equality of their covariance functions at any $(s,t,s',t')$ with $1\leq s \leq t$ and $1\leq s' \leq t'$. On one hand, the covariance function of the Gaussian process $\bm U_p$ at $(s,t,s',t')$ is
\begin{align*}
\Ex \{ \bm U_{p} (s,t) \bm U_{p} (s',t')^{\top} \}
&=
\Ex \big[ \big( \left(t-s\right)\bm W_{2}\left(1\right)+t \bm W_{1} (s-1) - s \bm W_{1}(t-1) \big) \\
&\quad
\times \big( (t' - s') \bm W_{2} (1) + t'\bm W_{1} (s'-1) - s' \bm W_{1} (t'-1) \big)^{\top} \big] \\
&=
\big[ (t-s) (t'-s') + t' t \{\min (s,s')-1 \} - s' t \{\min (s,t') - 1 \} \\
&\quad
-t's \{\min (t,s')-1\}+s's \{\min (t,t')-1\} \big] I_p\\
&=
\left\{ t' t \min (s,s') - s't \min (s,t') - t' s \min (t,s') + s's \min (t,t') \right\} I_p,
\end{align*}
while, on the other hand, the covariance function of the Gaussian process $\bm V_p$  at $(s,t,s',t')$ is
\begin{align*}
\Ex \{ \bm V (s,t) \bm V (s',t')^{\top} \}
&=
\Ex \big[\big( t \bm W (s) - s \bm W (t) \big) \big( t' \bm W(s') - s' \bm W (t') \big)^{\top}\big]\\
&=
\{ tt' \min (s,s') - s' t \min (s,t') - st' \min (t,s') + ss' \min (t,t') \} I_p,
\end{align*}
which concludes the proof.
\end{proof}


\section{Proofs of Propositions~\ref{prop:cond:H0},~\ref{prop:eta:0} and~\ref{prop:L:continuous}}
\label{proof:prop:H0}

\begin{proof}[\bf Proof of Proposition~\ref{prop:cond:H0}] Let $\Pc \in (\R^d)^p$ be  such that Condition~\ref{cond:sigma} holds.  Since $\bm Y_i^\sPc = \big( \1(\bm X_i \leq \bm x_1), \dots, \1(\bm X_i \leq \bm x_p) \big)$, it is immediate that $\alpha_r^{\bm Y}\le \alpha_r^{\bm X}= O(r^{-a})$ as $r \to \infty$. Furthermore, as all the components of the $\bm Y_i^\sPc$ are bounded in absolute value by 1, from Theorem~4 of \cite{KuePhi80}, we can redefine the sequence $( \bm Y_i^\sPc )_{i \in \N}$ on a new probability space together with a $p$-dimensional standard Brownian motion $\bm W$ such that, almost surely,
$$
\left\| \sum_{i=1}^m \{\bm Y_i^\sPc - \Ex(\bm Y_1^\sPc)\} - (\Sigma^{\sPc})^{\frac12} \bm W(m) \right\|_2 = O(m^{\frac12 - \lambda}),
$$
for some $\lambda \in (0,\frac12)$ that depends on $a$, $p$ and $\Pc$. Let $\xi \in (\frac12 - \lambda, \frac12)$. Then,  almost surely,
\begin{equation}
  \label{eq:as}
  \lim_{m \to \infty} \frac{1}{m^\xi} \left\| \sum_{i=1}^m \{\bm Y_i^\sPc - \Ex(\bm Y_1^\sPc)\} - (\Sigma^{\sPc})^{\frac12} \bm W(m)  \right\|_2 = 0.
\end{equation}
Let $\bm W'$ be another $p$-dimensional standard Brownian motion independent of $\bm W$ and define $\bm W^{(m)}_2$ as
$$
\bm W^{(m)}_2(s) =
\begin{cases}
\bm W(s) \text{ if } s \in [0, m], \\
\bm W'(s-m) + \bm W(m) \text{ otherwise,}
\end{cases}
$$
and $\bm W^{(m)}_1$ as $\bm W^{(m)}_1(s) = \bm W(m+s) - \bm W(m)$, $s \geq 0$. Then, for each $m \in \N$, $\bm W^{(m)}_1$ and $\bm W^{(m)}_2$ are independent $p$-dimensional standard Brownian motions.

For each $m \geq 0$, let
$$
V_m = \sup_{k > m} \frac{1}{(k-m)^\xi} \left\| \sum_{i=m+1}^{k} \{\bm Y_i^\sPc - \Ex(\bm  Y_1^\sPc) \}  - (\Sigma^{\sPc})^{\frac12} \bm W^{(m)}_{1}(k-m) \right\|_2
$$
and note that the sequence $(V_m)_{m \geq 0}$ consists of identically distributed random variables. Therefore, to show that~\eqref{eq:sup:cond} holds, it is sufficient to show that $V_0 = O_\P(1)$. From the previous definition, $V_0$ can be rewritten as
$$
V_0 = \sup_{k > 0} \frac{1}{k^\xi} \left\| \sum_{i=1}^{k} \{\bm Y_i^\sPc - \Ex(\bm  Y_1^\sPc) \}  - (\Sigma^{\sPc})^{\frac12} \bm W(k) \right\|_2,
$$
and, from~\eqref{eq:as}, it is the supremum of an almost surely convergent sequence. It follows that $V_0$ is an almost surely finite  random variable, so $V_0 = O_\P(1)$ and therefore~\eqref{eq:sup:cond} holds. Finally, \eqref{eq:as} and the definition of $\bm W^{(m)}_2$ immediately imply~\eqref{eq:cond}.
\end{proof}

\begin{proof}[\bf Proof of Proposition~\ref{prop:eta:0}]
It is immediate that
\begin{align*}
  \sqrt{p} \sup_{1 \leq s \leq t < \infty} t^{-\frac32} \| t \bm W(s) - s \bm W(t) \|_{I_p}  \geq  \sup_{1 \leq s \leq t < \infty}  t^{-\frac32} |t W^{\sss [1]}(s) - s W^{\sss [1]}(t)|,
\end{align*}
where $W^{\sss [1]}$ is the first component of the $p$-dimensional Brownian motion $\bm W$. Hence, for any fixed $M > 0$,
\begin{multline*}
\P \Big( \sup_{1 \leq s \leq t < \infty} t^{-\frac32} \| t \bm W(s) - s \bm W(t) \|_{I_p} \geq M \Big) \\ \geq \P \Big( p^{-1/2} \sup_{1 \leq s \leq t < \infty}  t^{-\frac32} |t W^{\sss [1]}(s) - s W^{\sss [1]}(t)| \geq M \Big) = 1,
\end{multline*}
where the last equality is a consequence of Proposition~3.4 of \cite{HolKoj21}.
\end{proof}

\begin{proof}[\bf Proof of Proposition~\ref{prop:L:continuous}]
Fix $\eta>0$ and $p\in \N$.  Note first that $\bm W$ is a $p$-variate continuous Gaussian process, which implies that $\bm W \in C([0,\infty),\R^p)$ almost surely. For $v\ge 2$, let $f_{v}:C([0,v],\R^p)\to [0,\infty)$ be defined by 
\begin{align*}
f_v(\bm w)=\sup_{1\le s\le t\le v}t^{-\frac32-\eta}\|t\bm w(s)-s\bm w(t)\|_{I_p}, \qquad \bm w \in C([0,v],\R^p),
\end{align*}
and, similarly, let $f:C([0,\infty),\R^p) \to [0,\infty)$ be defined by
\begin{align*}
f(\bm w)=\sup_{1\le s\le t<\infty}t^{-\frac32-\eta}\|t\bm w(s)-s\bm w(t)\|_{I_p}, \qquad \bm w \in C([0,\infty),\R^p).
\end{align*}
Notice that $\Lc_{p,\eta}=f(\bm{W})$ and that $f_v(\bm w)\le f(\bm w)$ for every $v\ge 2$.

Next, for any $v \geq 2$, we equip $C([0,v],\R^p)$ with the uniform distance, which is then a separable (and locally convex) metric space. Furthermore, as we shall verify below, $f_v$ is continuous and convex for any $v \geq 2$. We can then apply Theorem~7.1 of \cite{DavLif84} to obtain that, for any $v \geq 2$, the distribution of $f_v(\bm W)$ is concentrated on $[0,\infty)$ and absolutely continuous on $(0,\infty)$. In addition, some thought reveals that, for any $v \geq 2$, 
$$
\P(f_v(\bm{W})=0) \leq \P \big( f_2(\bm{W})=0) \leq \P(\|2\bm{W}(1)-1\bm{W}(2)\|_{I_p}=0 \big) = 0
$$
since $2\bm{W}(1)-1\bm{W}(2)$ is a centered multivariate normal random vector with covariance matrix $2I_p$. Hence, for any $v \geq 2$, the distribution of $f_v(\bm W)$ has no atom at 0 and is therefore absolutely continuous.  

\emph{Proof of the continuity of $f_v$:} To show (uniform) continuity on $C([0,v],\R^p)$ (equipped with the uniform distance), let $\vep>0$ be given and let $\delta=\vep/3$. Let $\bm w,\bm w'\in C([0,v],\R^p)$ be such that $\sup_{0 \leq t\le v} \| \bm w(t)-\bm w'(t) \|_{I_p} <\delta$. Then,
\begin{align*}
f_v(\bm w')&=\sup_{1\le s\le t\le v} t^{-\frac32-\eta}\|t\bm w'(s)-s\bm w'(t)\|_{I_p}\\
&=\sup_{1\le s\le t\le v} t^{-\frac32-\eta}\| \{ t\bm w(s)-s\bm w(t) \} + t \{ \bm w'(s)-\bm w(s) \} - s \{ \bm w'(t)-\bm w(t) \} \|_{I_p}\\
&\le \sup_{1\le s\le t\le v} t^{-\frac32-\eta}\Big\{ \| t\bm w(s)-s\bm w(t) \|_{I_p} + t\|\bm w'(s)-\bm w(s)\|_{I_p}+s\|\bm w'(t)-\bm w(t)\|_{I_p}\Big\}\\
&\le \sup_{1\le s\le t\le v} t^{-\frac32-\eta}\| t\bm w(s)-s\bm w(t) \|_{I_p}+2 \delta.
\end{align*}
This shows that $f_v(\bm w')-f_v(\bm w)\le 2\delta <\vep$.  Similarly (or just by symmetry of the above argument), $f_v(\bm w)-f_v(\bm w')<\vep$.  Thus, $|f_v(\bm w)-f_v(\bm w')|<\vep$ as required.

\emph{Proof of the convexity of $f_v$:} To show convexity, let $\lambda \in (0,1)$ and $\bm w, \bm w'\in C([0,v],\R^p)$.  Then $\lambda \bm w + (1-\lambda) \bm w'\in C([0,v],\R^p)$ and 
\begin{align*}
f_v(\lambda\bm w +(1-\lambda)\bm w')&=\sup_{1\le s\le t\le v} t^{-\frac32-\eta} \Big\|t\{ \lambda\bm w(s) +(1-\lambda)\bm w'(s) \} -s \{ \lambda\bm w(t) +(1-\lambda)\bm w'(t) \} \Big\|_{I_p}\\
&=\sup_{1\le s\le t\le v} t^{-\frac32-\eta} \Big\| \lambda \{ t \bm w(s) -s\bm w(t)\} + (1-\lambda) \{ t\bm w'(s)-s \bm w'(t) \} \Big\|_{I_p}\\
&\le \lambda f_v(\bm w)+ (1-\lambda) f_v(\bm w'),
\end{align*}
as required.

To complete the proof of the proposition, it suffices to fix $r,\vep>0$ and show that $\P(f(\bm W)=r)<\vep$. Notice first that, for any $u \geq 1$, almost surely, 
$$
\sup_{(s,t):1 \leq s \leq t, u\le t<\infty} t^{-\frac32-\eta} \| t \bm W(s) - s \bm W(t) \|_{I_p} \leq u^{-\frac{\eta}{2}} \sup_{1 \leq s \leq t <\infty} t^{-\frac32-\frac{\eta}{2}} \| t \bm W(s) - s \bm W(t) \|_{I_p} .
$$
Since according to Theorem~\ref{thm:H0}, $\Lc_{p,\frac{\eta}{2}}$ is almost surely finite, we see that 
\[
  \sup_{(s,t):1 \leq s \leq t, u\le t<\infty} t^{-\frac32-\eta} \| t \bm W(s) - s \bm W(t) \|_{I_p} \as 0, \qquad \text{  as }u \to \infty.
\]
Hence, there exists $u_0<\infty$ (depending on $r,\vep$) such that 
\begin{align}
  \label{eq:bound}
  \P\Bigg( \sup_{(s,t):1 \leq s \leq t, u_0\le t<\infty} t^{-\frac32-\eta} \| t \bm W(s) - s \bm W(t) \|_{I_p}\ge r/2 \Bigg)<\vep.
\end{align}
Now, if $f(\bm W)=r$, then either $f_{u_0}(\bm W)=r$ or  the supremum ($r$) in the definition of $f$ is attained for some $s,t$ with $t>u_0$.  Absolute continuity of $f_{u_0}(\bm W)$ shows that the former has probability 0, while  \eqref{eq:bound} shows that the latter has probability at most $\vep$. 
\end{proof}


\section{Proof of Proposition~\ref{prop:H1}}
\label{proof:prop:H1}

\begin{proof}[\bf Proof of Proposition~\ref{prop:H1}]
Let $A$ be a $p \times p$ symmetric positive-definite matrix. Then, there exits a diagonal matrix $\Delta$ with $\lambda_1 > 0$,\dots,$\lambda_p > 0$ on its diagonal and a $p \times p$  orthogonal matrix $P$ such that the columns of $P$ are the eigenvectors $\bm e_1, \dots, \bm e_p \in \R^p$ of $A$ with corresponding eigenvalues $\lambda_1,\dots,\lambda_p$ and $A = P \Delta P^\top$. Starting from the definition of $\| \cdot \|_{A^{-1}}$ given below~\eqref{eq:sigma}, for any $\bm v \in \R^p$, 
\begin{align*}
 \sqrt{p} \| \bm v \|_{A^{-1}} &=  | \bm v^\top A^{-1} \bm v|^{\frac12} = | \bm v^\top ( P \Delta P^\top )^{-1} \bm v|^{\frac12} = | \bm v^\top P \Delta^{-1} P^\top \bm v |^{\frac12} \\
                       &\geq \min_{i \in \llb 1, p \rrb} \lambda_i^{-\frac12}   \times \| P^\top \bm v \|_2 = \min_{i \in \llb 1, p \rrb} \lambda_i^{-\frac12}   \times \sqrt{ (P^\top \bm v) \cdot (P^\top \bm v)} \\
                       &= \min_{i \in \llb 1, p \rrb} \lambda_i^{-\frac12} \times \sqrt{ \bm v \cdot \bm v} = \min_{i \in \llb 1, p \rrb} \lambda_i^{-\frac12} \times \|\bm v\|_2 \\
  &\geq \min_{i \in \llb 1, p \rrb} \lambda_i^{-\frac12}  \times \|\bm v\|_\infty,
\end{align*}
where we have used the fact that orthogonal matrices preserve the dot product. 

Let $\psi$ be the map from the set $\Sc_p$ of $p \times p$ symmetric positive-definite matrices to $(0,\infty)$ such that, for any $A \in \Sc_p$, $\psi(A) = \min_{i \in \llb 1, p \rrb} \lambda_i^{-\frac12}$. Since eigenvalue decomposition is a continuous operation and since the minimum is a continuous function, the map $\psi$ is continuous. Hence, $\Sigma_m^\sPc \p \Sigma^\sPc$ and the continuous mapping theorem imply that $\psi(\Sigma_m^\sPc) \p \psi(\Sigma^\sPc) > 0$.

From the previous derivations and the assumptions of the proposition, we thus have that, for all $m \in \N$, $\| \cdot \|_{(\Sigma_m^\sPc)^{-1}} \geq p^{-\frac12} \psi(\Sigma_m^\sPc) \| \cdot \|_\infty$ almost surely. Therefore, for any $m \in \N$ and $k \geq m+1$, almost surely, 
\begin{align*}
  D_m^\sPc(k) &= \max_{j \in \llb m,k \llb} \frac{j (k-j)}{m^{\frac32}}  \| \bm {\bar Y}^\sPc_{1:j} - \bm {\bar Y}^\sPc_{j+1:k} \|_{(\Sigma_m^\sPc)^{-1}}  \\
              &\geq p^{-\frac12} \psi(\Sigma_m^\sPc) \max_{j \in \llb m,k \llb} \frac{j (k-j)}{m^{\frac32}}  \| \bm {\bar Y}^\sPc_{1:j} - \bm {\bar Y}^\sPc_{j+1:k} \|_{\infty} \geq p^{-\frac12} \psi(\Sigma_m^\sPc) E_m^{\bm x_\ell}(k),
\end{align*}
which immediately implies that $\sup_{k>m} (m/k)^{\frac32+\eta} D_m^\sPc(k) \p \infty$ since $\psi(\Sigma_m^\sPc) \p \psi(\Sigma^\sPc) > 0$ and $\sup_{k>m} (m/k)^{\frac32+\eta}  E_m^{\bm x_\ell}(k) \p \infty$.
\end{proof}


\section{Proofs of Propositions~\ref{prop:mult:point}, \ref{prop:cond:points}, and~\ref{prop:equiv:points}}
\label{proofs:prop:points}

\begin{proof}[\bf Proof of Proposition~\ref{prop:mult:point}]
Let $\bm a, \bm b \in [0,1]^d$ such that $\bm a < \bm b$. Note that
\begin{equation*}
  \nu_C \big( (\bm{a},\bm{b}] \big)  = \sum_{\bm{i}\in\{0,1\}^d}(-1)^{\sum_{j=1}^d i_j}C \bigl( a_1^{i_1}b_1^{1-i_1},\dots,a_d^{i_d}b_d^{1-i_d} \bigr),
\end{equation*}
and
\begin{equation*}
  \nu_m \big( (\bm{a},\bm{b}] \big)  = \sum_{\bm{i}\in\{0,1\}^d}(-1)^{\sum_{j=1}^d i_j}C_m \bigl( a_1^{i_1}b_1^{1-i_1},\dots,a_d^{i_d}b_d^{1-i_d} \bigr).
\end{equation*}
By Condition~\ref{cond:mult:point}~(ii), we immediately see that, for each $\bm \pi \in \Pi$ (where $\Pi$ is defined in~\eqref{eq:unif:grid}),  $\nu_m \big( (\bm \pi - \bm s, \bm \pi ]\big) \as \nu_C \big( (\bm \pi - \bm s, \bm \pi ] \big)$ as $m \to \infty$. By Condition~\ref{cond:mult:point}~(i), for each $\bm \pi \in \Pi$, $\nu_C\big( (\bm \pi - \bm s, \bm \pi ] \big) \ne 1/(\kappa (r+1)^d)$. We conclude that, almost surely, for each $\bm \pi \in \Pi$,
$$
\nu_m \big( (\bm \pi - \bm s, \bm \pi ] \big) > 1/(\kappa (r+1)^d) \text{ for all $m$ sufficiently large} \iff \nu_C \big( (\bm \pi - \bm s, \bm \pi ] \big) > 1/(\kappa (r+1)^d).
$$
This completes the proof.
\end{proof}

\begin{proof}[\bf Proof of Proposition~\ref{prop:cond:points}]
Let
$$
R(n,\bm x) = \sum_{i=1}^n \{ \1(\bm X_i \leq \bm x) - F(\bm x) \}, \qquad n \in \N, \bm x \in \R^d.
$$
From Theorem~3.1 part~2(b) in \cite{DedMerRio14}, without changing its distribution, the empirical process $R$ can be redefined on a richer probability space on which there exists a \emph{Kiefer process}, that is, a two-parameter centered continuous Gaussian process $K$ with covariance function given by~\eqref{eq:Gamma}, and a random variable $C>0$ such that, almost surely (a.s.), 
\begin{equation}
  \label{eq:strong:approx}
  \sup_{t \in [0,1]} \sup_{\bm x \in \R^d} |R(\ip{nt}, \bm x) - K(\ip{nt}, \bm x)| \leq C n^{\frac12 - \lambda}, \qquad \text{ for all }n \in \N
\end{equation}
for some $\lambda \in (0,1/2)$ only depending on $d$ and $a$. Let $i \in \llb 1, p \rrb$. Then,
\begin{align*}
  \sup_{k>m} & \, k^{-\frac12} \max_{j \in \llb 1, k \rrb} j | F_{1:j}(\bm \Xc_{i,m}) - F(\bm \Xc_{i,m}) -  F_{1:j}(\bm x_i) +  F(\bm x_i) |  \\
                          =& \sup_{k>m} k^{-\frac12} \max_{j \in \llb 1, k \rrb} | R(j, \bm \Xc_{i,m}) - R(j, \bm x_i) | \\
  \leq& \sup_{k>m} k^{-\frac12} \max_{j \in \llb 1, k \rrb} | R(j, \bm \Xc_{i,m}) - K(j, \bm \Xc_{i,m}) |  + \sup_{k>m} k^{-\frac12} \max_{j \in \llb 1, k \rrb} |  R(j, \bm x_i) -  K(j, \bm x_i) | \\
             &+ \sup_{k>m} k^{-\frac12} \max_{j \in \llb 1, k \rrb} | K(j, \bm \Xc_{i,m}) - K(j, \bm x_i) |.
\end{align*}
From~\eqref{eq:strong:approx}, with probability one, the first two terms on the right-hand side of the last display are smaller than
$$
2 \sup_{k>m} k^{-\frac12} \sup_{t \in [0,1]} \sup_{\bm x \in \R^d} | R(\ip{kt}, \bm x) - K(\ip{kt}, \bm x) | \leq 2 \sup_{k>m} k^{-\frac12} C k^{\frac12 - \lambda} \leq 2 C m^{- \lambda} \as 0.
$$
The third term is smaller than
\begin{align}
  \label{eq:third:term}
  I_m &= \sup_{k>m} k^{-\frac12} \max_{j \in \llb 1, k \rrb} \sup_{\substack{\bm x, \bm y \in \R^d \\ \|\bm x - \bm y\|_\infty \leq \Delta_m}}  | K(j, \bm x) - K(j, \bm y) |,
\end{align}
where $\Delta_m = \| \bm \Xc_{i,m} - \bm x_i \|_\infty$. To complete the proof, it thus suffices to show that $I_m \p 0$.

From Theorem~3.1 part~2(a) in \cite{DedMerRio14}, we know that the sample paths of the Gaussian process $K$ are almost surely uniformly continuous with respect to the pseudo-metric $\rho$ on $[0,\infty) \times \R^d$ defined by
$$
\rho \big( (s,\bm x), (t, \bm y) \big) = |s - t| + \sum_{\ell=1}^d |F^{\sss[\ell]}(x^{\sss[\ell]}) - F^{\sss[\ell]}(y^{\sss[\ell]})|, \qquad s,t \in [0,\infty), \bm x, \bm y \in \R^d.
$$
We will use this fact to show that $I_m \p 0$.  To this end, define $\Delta^*_m = \psi(\Delta_m)$, where
\begin{equation*}
\psi(\delta) = \sup_{\substack{\bm x, \bm y \in \R^d \\ \|\bm x - \bm y\|_\infty \leq \delta}}\sum_{\ell=1}^d  |F^{\sss[\ell]}(x^{\sss[\ell]}) - F^{\sss[\ell]}(y^{\sss[\ell]})|, \qquad \delta \geq 0.
\end{equation*}
Since $ \|\bm x - \bm y\|_\infty \leq \Delta_m$ implies that $\rho\big((t,\bm x),(t,\bm y)\big)\le \Delta^*_m$, we have that
\begin{align*}
I_m&\le  \sup_{k>m} k^{-\frac12} \sup_{\substack{s,t \in [0, k]\\ s=t}} \sup_{\substack{\bm x, \bm y \in \R^d \\ \|\bm x - \bm y\|_\infty \leq \Delta_m}}  | K(s, \bm x) - K(t, \bm y) | \le \sup_{k>m} k^{-\frac12} J_m \le m^{-\frac12} J_m,
\end{align*}
where $J_m = \phi(\Delta^*_m)$ with 
$$
\phi(\delta) = \sup_{\substack{s,t \in [0,\infty), \bm x, \bm y \in \R^d \\ \rho ((s,\bm x), (t, \bm y) ) \leq \delta}}  | K(s, \bm x) - K(t, \bm y) |, \qquad \delta \geq 0.
$$
Let $\eps > 0$.  By almost sure uniform continuity of the sample paths of the process $K$, there exists $\delta_1 = \delta_1(\eps) > 0$ such that, for all $0 \leq \delta \le  \delta_1$, $\phi(\delta) < \eps$ almost surely.  Since the $d$ univariate margins $F^{\sss[1]},\dots,F^{\sss[d]}$ of $F$ are (uniformly) continuous, there exists $\delta_0 = \delta_0(\delta_1) > 0$ such that, for all $0 \leq \delta \le  \delta_0$, $\psi(\delta) < \delta_1$.  Therefore 
\begin{align*}
\P(J_m > \eps) = \P \big( \phi(\Delta_m^*) > \eps \big)  &\leq \P \big( \phi(\Delta_m^*) > \eps, \Delta_m^* \leq \delta_1 \big) + \P( \Delta_m^* > \delta_1 )\\
&= \P( \Delta_m^* > \delta_1 )\\
&=\P \big( \psi(\Delta_m) > \delta_1\big)\\
&  \leq \P \big( \psi(\Delta_m) > \delta_1, \Delta_m \leq \delta_0 \big) + \P( \Delta_m > \delta_0 )\\
& = \P( \Delta_m > \delta_0 ).
\end{align*}
Since $\Delta_m = \| \bm \Xc_{i,m} - \bm x_i \|_\infty \p 0$ by assumption, this shows that $J_m \p 0$ which completes the proof since $I_m \leq m^{-\frac12} J_m$.
\end{proof}

\begin{proof}[\bf Proof of Proposition~\ref{prop:equiv:points}]
The second claim is immediate from the first claim and Theorem \ref{thm:H0}, so we need only prove the first claim.

First, we assert that, for any $\ell \in \llb 1, p \rrb$, 
\begin{equation}
  \label{eq:tmp:ij}
\sup_{k > m} k^{-\frac12} \max_{1 \leq i < j \leq k}(j-i+1) | F_{i:j}(\bm \Xc_{\ell,m}) - F(\bm \Xc_{\ell,m}) -  F_{i:j}(\bm x_\ell) +  F(\bm x_\ell) |  = o_\P(1).
\end{equation}
Indeed, the left-hand side of \eqref{eq:tmp:ij} is equal to
\begin{align*}
  \sup_{k > m} & \, k^{-\frac12} \max_{1 \leq i < j \leq k} \Big| \sum_{r=i}^j \{ 1(\bm X_r \leq \bm \Xc_{\ell,m}) - F(\bm \Xc_{\ell,m}) -  1(\bm X_r \leq \bm x_\ell) +  F(\bm x_\ell) \} \Big|  \\
  =&  \sup_{k > m} k^{-\frac12} \max_{1 \leq i < j \leq k} \Big| \sum_{r=1}^j \{ 1(\bm X_r \leq \bm \Xc_{\ell,m}) - F(\bm \Xc_{\ell,m}) -  1(\bm X_r \leq \bm x_\ell) +  F(\bm x_\ell) \} \\
                             &- \sum_{r=1}^{i-1} \{ 1(\bm X_r \leq \bm \Xc_{\ell,m}) - F(\bm \Xc_{\ell,m}) -  1(\bm X_r \leq \bm x_\ell) +  F(\bm x_\ell) \} \Big| \\
  \leq& 2 \sup_{k > m} k^{-\frac12} \max_{j \in \llb 1, k \rrb} j | F_{1:j}(\bm \Xc_{\ell,m}) - F(\bm \Xc_{\ell,m}) -  F_{1:j}(\bm x_\ell) +  F(\bm x_\ell) |,
\end{align*}
so the assertion holds by \eqref{eq:cond:points}. Using the fact that all norms on $\R^p$ are equivalent, \eqref{eq:tmp:ij} implies that
\begin{equation}
  \label{eq:tmp:sigma:norm}
\sup_{k > m} k^{-\frac12} \max_{1 \leq i < j \leq k} (j - i + 1) \| \bm F_{i:j}^{\sPc_m} - \bm F^{\sPc_m} - \bm F_{i:j}^{\sPc} - \bm F^{\sPc} \|_{(\Sigma^\sPc)^{-1}} = o_\P(1),
\end{equation}
where $\bm F^{\sPc_m} = \big( F(\bm \Xc_{1,m}),\dots,F(\bm \Xc_{p,m}) \big)$ and $\bm F^{\sPc} = \big( F(\bm x_1),\dots,F(\bm x_p) \big)$ (similarly for $\bm F^{\cdot}_{i:j}$).

Recall the definition of $\tilde D_m^\sPc$ in~\eqref{eq:tilde:Dm} which can be rewritten as
\begin{align*}
  \tilde D_m^{\sPc}(k) &= \max_{j \in \llb m, k \llb} \frac{j (k-j)}{m^{\frac32}}  \| \bm F_{1:j}^{\sPc} - \bm F_{j+1:k}^{\sPc} \|_{(\Sigma^\sPc)^{-1}}, \qquad k \geq m+1,
\end{align*} 
and define the unobservable detector $\tilde D_m^{\sPc_m}$ by changing the norm in the definition of $D_m^{\sPc_m}$ as
\begin{align*}
  \tilde D_m^{\sPc_m}(k) &= \max_{j \in \llb m, k \llb} \frac{j (k-j)}{m^{\frac32}}  \| \bm F_{1:j}^{\sPc_m} - \bm F_{j+1:k}^{\sPc_m} \|_{(\Sigma^\sPc)^{-1}}, \qquad k \geq m+1.
\end{align*} 
Using the reverse triangle inequality for the maximum norm and the norm $\|\cdot\|_{(\Sigma^\sPc)^{-1}}$, we then obtain that, for any $k \geq m+1$, 
\begin{align*}
  &  | \tilde D_m^{\sPc_m}(k) - \tilde D_m^\sPc(k) | \\
                                 &\leq   
   \max_{j \in \llb m, k \llb}  \frac{j (k-j)}{m^{\frac32}} 
   \big| \| \bm F_{1:j}^{\sPc_m} - \bm F^{\sPc_m} - \bm F_{j+1:k}^{\sPc_m} + \bm F^{\sPc_m}\|_{(\Sigma^\sPc)^{-1}} -  \| \bm F_{1:j}^{\sPc} - \bm F^{\sPc} -  \bm F_{j+1:k}^{\sPc} + \bm F^{\sPc} \|_{(\Sigma^\sPc)^{-1}} \big| \\
                                 &\leq   
    \max_{j \in \llb m, k \llb}  \frac{j (k-j)}{m^{\frac32}} \| \bm F_{1:j}^{\sPc_m} - \bm F^{\sPc_m} - \bm F_{1:j}^{\sPc} + \bm F^{\sPc} - \bm F_{j+1:k}^{\sPc_m} + \bm F^{\sPc_m} + \bm F_{j+1:k}^{\sPc} - \bm F^{\sPc} \|_{(\Sigma^\sPc)^{-1}}\\
                                 &\le \max_{j \in \llb m, k \llb}  \frac{j (k-j)}{m^{\frac32}}\Big\{ \| \bm F_{1:j}^{\sPc_m} - \bm F^{\sPc_m} - \bm F_{1:j}^{\sPc} + \bm F^{\sPc} \|_{(\Sigma^\sPc)^{-1}} + \| \bm F_{j+1:k}^{\sPc_m} - \bm F^{\sPc_m} - \bm F_{j+1:k}^{\sPc} + \bm F^{\sPc} \|_{(\Sigma^\sPc)^{-1}} \Big\}.
\end{align*}
Therefore,
 \begin{align*}
&\sup_{k > m} \left(\frac{m}{k}\right)^{\frac32+\eta}  | \tilde D_m^{\sPc_m}(k) - \tilde D_m^\sPc(k) |   \\
&\le   \sup_{k > m} \left(\frac{m}{k}\right)^\eta k^{-\frac32} \max_{j \in \llb m, k \llb}  j (k-j) \| \bm F_{1:j}^{\sPc_m} - \bm F^{\sPc_m} - \bm F_{1:j}^{\sPc} + \bm F^{\sPc} \|_{(\Sigma^\sPc)^{-1}} \\ 
& \quad +  \sup_{k > m} \left(\frac{m}{k}\right)^\eta k^{-\frac32} \max_{j \in \llb m, k \llb}  j (k-j) \| \bm F_{j+1:k}^{\sPc_m} - \bm F^{\sPc_m} - \bm F_{j+1:k}^{\sPc} + \bm F^{\sPc} \|_{(\Sigma^\sPc)^{-1}}.
\end{align*}
We claim that both terms on the right hand side converge to 0 in probability.  For example, since $j\le k$  and $k>m$, the second term is at most
\begin{align*}
&\sup_{k > m} k^{-\frac12} \max_{j \in \llb m, k \llb}(k-j)\| \bm F_{j+1:k}^{\sPc_m} - \bm F^{\sPc_m} - \bm F_{j+1:k}^{\sPc} + \bm F^{\sPc} \|_{(\Sigma^\sPc)^{-1}}\\
         & \le \sup_{k > m} k^{-\frac12} \max_{1 \leq i < j \leq k} (j - i + 1) \| \bm F_{i:j}^{\sPc_m} - \bm F^{\sPc_m} - \bm F_{i:j}^{\sPc} + \bm F^{\sPc} \|_{(\Sigma^\sPc)^{-1}},
\end{align*}
which converges to 0 in probability by  \eqref{eq:tmp:sigma:norm}.  The first term is similar. Hence,
\begin{align}
  \label{eq:tmp:proven}
  \sup_{k > m} \left(\frac{m}{k}\right)^{\frac32+\eta}  | \tilde D_m^{\sPc_m}(k) - \tilde D_m^\sPc(k) | = o_\P(1).
\end{align}
From Lemma~\ref{lem:tilde:D-to-D}, we have that $\sup_{k > m} (m/k)^{\frac32+\eta} | \tilde D_m^\sPc(k) - D_m^\sPc(k) | = o_\P(1)$.  Therefore, it remains to prove that
\begin{align*}
  \sup_{k > m} \left(\frac{m}{k}\right)^{\frac32+\eta} |  D_m^{\sPc_m}(k) - \tilde D_m^{\sPc_m}(k) | = o_\P(1).
\end{align*}
Proceeding as in the proof of Lemma~\ref{lem:tilde:D-to-D}, we have that 
\begin{align*} 
\sup_{k > m} &\Big(\frac m k\Big)^{\frac32+\eta} | \tilde D_m^{\sPc_m} (k) - D_m^{\sPc_m}(k) | \\
&\leq
\sup_{k > m} \Big(\frac m k\Big)^{\frac32+\eta} \max_{j \in \llb m,k \llb} \frac{j (k-j)}{m^{\frac32}} \left|  \| \bm F_{1:j}^{\sPc_m} - \bm F_{j+1:k}^{\sPc_m} \|_{(\Sigma^\sPc)^{-1}} - \| \bm F_{1:j}^{\sPc_m} - \bm F_{j+1:k}^{\sPc_m} \|_{(\Sigma_m^{\sPc_m})^{-1}}\right| \\
&\leq
\sup_{k > m} \Big(\frac m k\Big)^{\frac32+\eta} \max_{j \in \llb m,k \llb} \frac{j (k-j)}{p^{\frac12} m^{\frac32}} \left| \left( \bm F_{1:j}^{\sPc_m} - \bm F_{j+1:k}^{\sPc_m} \right)^\top \left((\Sigma^{\sPc})^{-1} - (\Sigma_m^{\sPc_m})^{-1} \right) \left( \bm F_{1:j}^{\sPc_m} - \bm F_{j+1:k}^{\sPc_m} \right) \right|^{\frac12} \\
&\leq \frac{1}{p^{\frac12}} \left\| (\Sigma^\sPc)^{-1} - (\Sigma_m^{\sPc_m})^{-1} \right\|_{\textrm{op}}^{\frac12} \sup_{k > m} \left(\frac m k \right)^{\frac32+\eta} \max_{j \in \llb m,k \llb} \frac{j (k-j)}{m^{\frac32}} \left\| \bm F_{1:j}^{\sPc_m} - \bm F_{j+1:k}^{\sPc_m} \right\|_2.
\end{align*}
We claim that this converges to zero in probability (which completes the proof).  By Condition~\ref{cond:sigma} and the fact that $\Sigma_m^{\sPc_m} \p \Sigma^\sPc$, we have that $\left\| (\Sigma^\sPc)^{-1} - (\Sigma_m^\sPc)^{-1} \right\|_{\textrm{op}} = o_\P(1)$.  To prove this final claim, it therefore suffices to show that 
$$
 \sup_{k > m} \left(\frac m k \right)^{\frac32+\eta} \max_{j \in \llb m,k \llb} \frac{j (k-j)}{m^{\frac32}} \left\| \bm F_{1:j}^{\sPc_m} - \bm F_{j+1:k}^{\sPc_m} \right\|_2 = O_\P(1).
$$
Indeed, by equivalence of norms on $\R^p$, this follows from the fact that $\sup_{k > m} (m/k)^{\frac32+\eta} \tilde D_m^{\sPc_m}(k)  = O_\P(1)$, itself a consequence of~\eqref{eq:tmp:proven}, Lemma \ref{lem:tilde:D-to-D}, and Theorem \ref{thm:H0}.
\end{proof}


\section{Details of Monte Carlo experiments}
\label{sec:MC:details}

In this section, we provide the implementation details of the Monte Carlo experiments that we carried out in the case of low-dimensional ($d \in \{1,2,3\}$) continuous observations and whose main findings are summarized in Section~\ref{sec:MC}. In all experiments, the sequential tests were carried out at the $\alpha = 5\%$ nominal level.

\subsection{Univariate experiments for the procedure based on $D_m^{\sPc_m}$ under the null}

To investigate the empirical levels of the sequential test based on $D_m^{\sPc_m}$ when $d=1$, we considered 9 data generating models, denoted M1, \dots, M9. Models M1, \dots, M6 are AR(1) models with independent standard normal innovations whose autoregressive parameter is equal to 0, 0.1, 0.3, 0.5, 0.7 and $-0.7$, respectively. Model M7 is a GARCH(1,1) model with independent standard normal innovations and parameters $\omega = 0.012$, $\beta = 0.919$ and $\alpha = 0.072$ to mimic SP500 daily log-returns following \cite{JonPooRoc07}. Models M8 and M9 are the nonlinear autoregressive model used in \citet[Section 3.3]{PapPol01} and the exponential autoregressive model considered in \cite{AueTjo90} and \citet[Section 3.3]{PapPol01}, respectively. The underlying generating equations are
\begin{equation*}
X_i = 0.6 \sin( X_{i-1} ) + \epsilon_i
\end{equation*}
and
\begin{equation*}
X_i = \{ 0.8 - 1.1 \exp ( - 50 X_{i-1}^2 ) \} X_{i-1} + 0.1 \epsilon_i,
\end{equation*}
respectively, where the $\epsilon_i$ are independent standard normal innovations. Note that, for all time series models, a burn-out sample of 100 observations was used.

\begin{table}[t!]
\centering
\caption{Percentages of rejection of $H_0$ in~\eqref{eq:H0} for the procedure based on $D_m^{\sPc_m}$ considered in Section~\ref{sec:univ:points}. The rejection percentages are computed from 1000 samples of size $n = m + 5000$ generated from the time series models M1, \dots, M9.} 
\label{tab:H0}
\begingroup\small
\begin{tabular}{llllll}
  \hline
  Model & $m$ & $p=2$ & $p=5$ & $p=10$ & $p=20$ \\ \hline
M1 & 200 & 3.4 & 3 & 3.3 & 5.4 \\ 
   & 400 & 1.9 & 2.1 & 3.2 & 2.8 \\ 
   & 800 & 1.5 & 1.3 & 2 & 1.3 \\ 
   & 1600 & 1 & 0.9 & 0.9 & 1 \\ 
  M2 & 200 & 4.4 & 4.3 & 8.1 & 13.8 \\ 
   & 400 & 3.2 & 2.7 & 3.6 & 5.6 \\ 
   & 800 & 1.8 & 2.8 & 2.1 & 3.2 \\ 
   & 1600 & 1.3 & 1 & 1.7 & 1 \\ 
  M3 & 200 & 6.9 & 10.6 & 21.5 & 53.9 \\ 
   & 400 & 3.4 & 4.1 & 7.2 & 16.7 \\ 
   & 800 & 3 & 2.5 & 3 & 5.6 \\ 
   & 1600 & 1.5 & 1.3 & 1.5 & 1.4 \\ 
  M4 & 200 & 9.5 & 18.1 & 44.6 & 93.2 \\ 
   & 400 & 5.3 & 7.1 & 16.5 & 42.9 \\ 
   & 800 & 3 & 3.6 & 5.3 & 12.3 \\ 
   & 1600 & 2.2 & 2.3 & 2.8 & 2.2 \\ 
  M5 & 200 & 18.9 & 39.4 & 82.1 & 100 \\ 
   & 400 & 9.6 & 17.5 & 40.3 & 87 \\ 
   & 800 & 5.7 & 6.7 & 14.4 & 34.5 \\ 
   & 1600 & 2.4 & 2.7 & 4.4 & 7.5 \\ 
  M6 & 200 & 7.3 & 8.6 & 14.1 & 26.7 \\ 
   & 400 & 3.7 & 4.6 & 6.4 & 11 \\ 
   & 800 & 2.6 & 2.8 & 3 & 3.6 \\ 
   & 1600 & 1.2 & 1.5 & 1.3 & 1.2 \\ 
  M7 & 200 & 9.1 & 14.8 & 17.2 & 16.8 \\ 
   & 400 & 8 & 12 & 12.6 & 9.9 \\ 
   & 800 & 7.5 & 9.8 & 9.2 & 5.2 \\ 
   & 1600 & 4.6 & 6.1 & 4.8 & 2.9 \\ 
  M8 & 200 & 7.8 & 13 & 30.6 & 74.3 \\ 
   & 400 & 4.7 & 5.5 & 10.8 & 30.8 \\ 
   & 800 & 2.8 & 3.4 & 4.7 & 7.9 \\ 
   & 1600 & 1.8 & 1.8 & 1.9 & 2.1 \\ 
  M9 & 200 & 15 & 29 & 61.1 & 95.6 \\ 
   & 400 & 8.6 & 12.9 & 25.5 & 57.9 \\ 
   & 800 & 5.5 & 6.7 & 10.1 & 17.3 \\ 
   & 1600 & 2.1 & 2.4 & 3.4 & 5.3 \\ 
   \hline
\end{tabular}
\endgroup
\end{table}

For each of the nine models, the probability of rejection of $H_0$ in~\eqref{eq:H0} was estimated from 1000 samples of size $n = m + 5000$ with $m \in \{200, 400, 800, 1600\}$ and for $p \in \{2, 5, 10, 20\}$. The empirical levels are reported in Table~\ref{tab:H0}. As one can see, for any fixed $p$, reassuringly, they decrease as $m$ increases. For $m \geq 800$, it is mostly for the models with strong serial dependence such as M5, M8 and M9 that the empirical levels are not below the 5\% nominal level. The latter is not so surprising and highlights the difficulty of the estimation of the long-run covariance matrix $\Sigma^\sPc$ using the estimator  $\Sigma^{\sPc_m}_m$ in the case of strong serial dependence. It may be slightly more surprising for the GARCH(1,1) model M7 for which $m=1600$ seems necessary to obtain a reasonably good estimate of $\Sigma^\sPc$. The fact that for most other models, the empirical levels are all below the 5\% nominal level when $m \geq 800$ is a consequence of the fact that they are underestimated in all settings. Indeed, the monitoring was stopped after 5000 steps whereas it would theoretically be necessary to monitor ``indefinitely'' to compute empirical levels accurately. For any fixed $m$, we see that increasing $p$ tends in general to increase the empirical level. This is again a consequence of the difficulty of the estimation of the long-run covariance matrix $\Sigma^\sPc$ which is a $p \times p$ matrix. Note that, since the monitoring procedure based on $D_m^{\sPc_m}$ is margin-free (as verified in Section~\ref{sec:margin:free}), it is not necessary to empirically study the influence of the contemporary distribution $F$ on the empirical levels.

\begin{table}[t!]
\centering
\caption{Percentages of rejection of $H_0$ in~\eqref{eq:H0} for the procedure based on $D_m^{\sPc_m}$ considered in Section~\ref{sec:univ:points} when the true underlying long-run covariance matrix is used instead of its estimate. The rejection percentages are computed from 1000 random samples of size $n = m + 5000$ generated from the normal distribution.} 
\label{tab:H0IidTrueLRV}
\begingroup\small
\begin{tabular}{rrrrr}
  \hline
  $m$ & $p=2$ & $p=5$ & $p=10$ & $p=20$ \\ \hline
200 & 2.4 & 2.4 & 2.3 & 3.1 \\ 
  400 & 1.7 & 1.9 & 3.1 & 2.3 \\ 
  800 & 1.3 & 1.3 & 1.6 & 0.7 \\ 
  1600 & 0.8 & 0.9 & 0.9 & 0.6 \\ 
   \hline
\end{tabular}
\endgroup
\end{table}

In a second experiment, we briefly investigated the effect of the estimation of $\Sigma^\sPc$ on the empirical levels in the case of independent observations. Instead of estimating $\Sigma^\sPc$ from the learning sample, we used its true value whose elements, in the considered setting (see Remark~\ref{remark:points}), are given by
$$
\Cov\{ \1(X_1^{\sss[1]} \leq F^{-1}( i/(p+1) ) , \1(X_1^{\sss[1]} \leq F^{-1}( j/(p+1) ) \} = \min(i, j) / (p+1) -  i j / (p+1)^2, 
$$
for $i,j \in \llb 1, p \rrb$. The empirical levels were then estimated from 1000 random samples of size $n = m + 5000$ from the standard normal distribution. The results are reported in Table~\ref{tab:H0IidTrueLRV}. By comparing the results with the first horizontal block of Table~\ref{tab:H0}, we see, as could have been expected, that for the same value of $m$, the use of the true long-run covariance matrix leads to lower empirical levels than when it is estimated.

\begin{table}[t!]
\centering
\caption{Percentages of rejection of $H_0$ in~\eqref{eq:H0} for the procedure based on $D_m^{\sPc_m}$ considered in Section~\ref{sec:univ:points} when, for $p > 20$ and $\eta=0.001$, estimates of the 0.95-quantiles of the distribution $\Lc_{p,\eta}$ are extrapolated using the model fitted at the end of Section~\ref{sec:quant}. The rejection percentages are computed from 1000 samples of size $n = m + 5000$ generated from the standard normal distribution with $m=1600$.} 
\label{tab:H0IidLargeR}
\begingroup\small
\begin{tabular}{rrrrrrr}
  \hline
  $p=2$ & $p=5$ & $p=10$ & $p=20$ & $p=30$ & $p=40$ & $p=50$ \\ \hline
1.0 & 0.9 & 0.9 & 1.0 & 0.4 & 0.3 & 0.5 \\ 
   \hline
\end{tabular}
\endgroup
\end{table}

As a last experiment under the null, we investigated the quality of the model fitted at the end of Section~\ref{sec:quant} to extrapolate the values of the quantiles of the distribution of $\Lc_{p,\eta}$ for $p > 20$ and $\eta = 0.001$. Using $m=1600$ and 1000 random samples of size $n = m + 5000$ from the standard normal distribution, we estimated rejection percentages for $p \in \{30,40,50\}$. These are given in Table~\ref{tab:H0IidLargeR} and suggest that the quality of the model for extrapolating the values of the quantiles may be acceptable when $20 < p \leq 50$ (although it may lead to some slightly more conservative tests).

\subsection{Univariate experiments for the procedure based on $D_m^{\sPc_m}$ under alternatives}

In order to understand the behavior of the monitoring procedure based on the detector $D_m^{\sPc_m}$ considered in Section~\ref{sec:univ:points} under alternatives to $H_0$ in~\eqref{eq:H0}, we considered successively changes in the expectation of the $X_i^{\sss[1]}$'s, in their variance and in their d.f.\ (while keeping their expectation and variance constant).  As a first experiment, we studied the finite-sample behavior of the sequential test under a change in the expectation of an AR(1) model with autoregressive parameter equal to 0.3 (Model M3). Specifically, to estimate rejection percentages, we generated 1000 samples of size $n = m + 5000$ from Model M3 with $m=800$ and, for each sample, added a positive offset of $\delta$ to all observations after position $m+k$ with $k \in \{0, 500, 1000, 2000\}$. The results are reported in Figure~\ref{fig:mean}. Notice that only the exceedences (of the detectors with respect to their thresholds) after position $m+k$ are taken into account when calculating the rejection percentages.

\begin{figure}[t!]
\begin{center}
  \includegraphics*[width=1\linewidth]{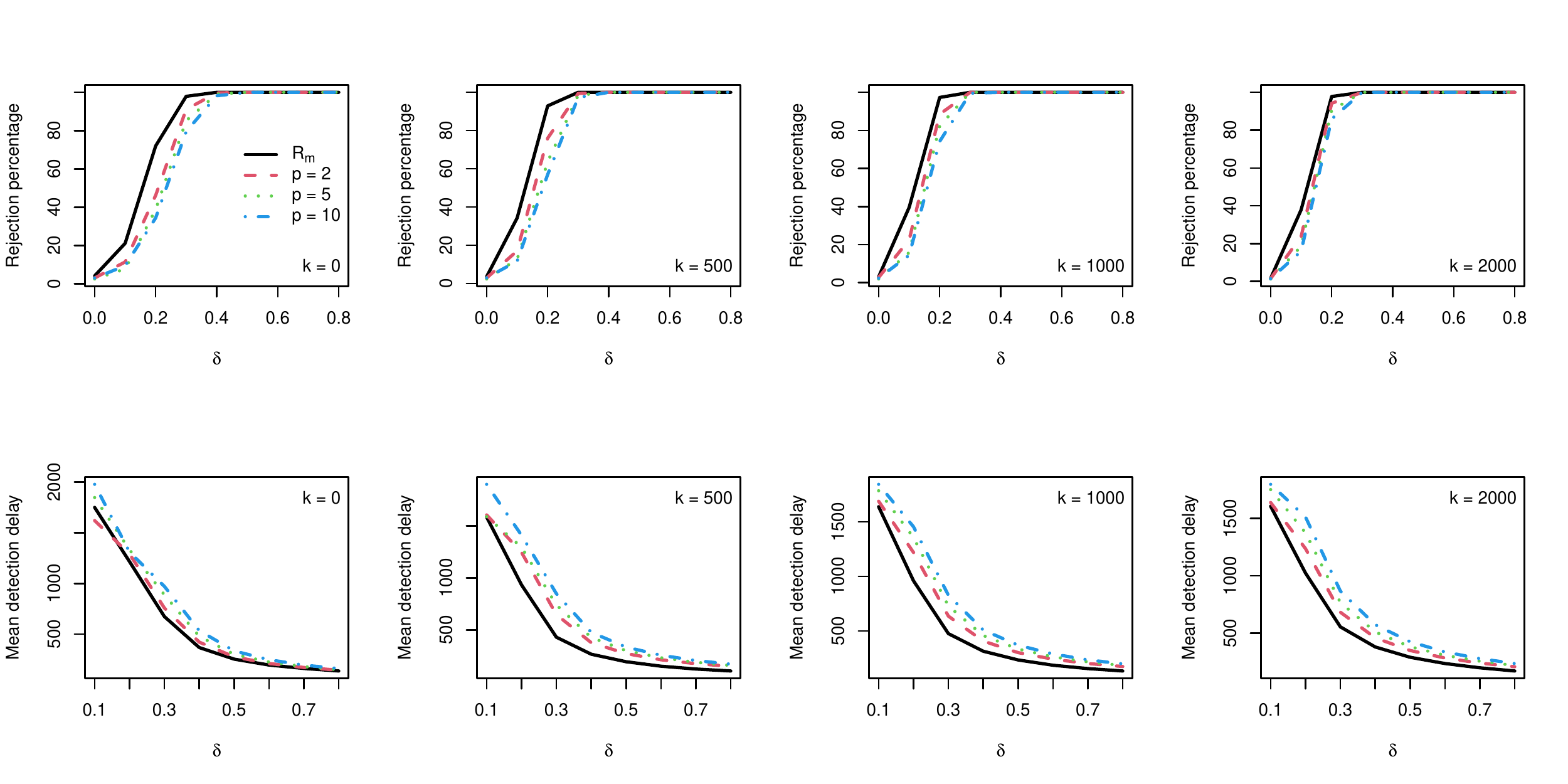}
  \caption{\label{fig:mean} Rejection percentages of $H_0$ in~\eqref{eq:H0} (first row) and corresponding mean detection delays (second row) for the procedure based on $R_m$ in~\eqref{eq:det:mean} (solid line) and for the procedure $D_m^{\sPc_m}$ considered in Section~\ref{sec:univ:points} with $p \in \{2, 5, 10\}$ (dash, dotted and dash-dotted lines) estimated from 1000 samples of size $n = m + 5000$ from Model M3 with $m=800$ such that, for each sample, a positive offset of $\delta$ was added to all observations after position $m+k$.}
\end{center}
\end{figure}

As one can see from the top row of graphs in Figure~\ref{fig:mean}, as expected, the power of all procedures increases as $\delta$ increases. Furthermore, for the procedure based on $D_m^{\sPc_m}$ and a fixed value of the offset $\delta$, increasing the number $p$ of evaluation points slightly lowers the power of the test. We also see that the procedure based on $R_m$ in~\eqref{eq:det:mean} is always the most powerful. This was to be expected as the latter was specifically designed to be sensitive to changes in the mean. Similarly, from the second row of plots in Figure~\ref{fig:mean}, we see that mean detection delays are smallest for the procedure based on $R_m$ and increase for the procedure based on $D_m^{\sPc_m}$ as $p$ increases. Note finally that the power of every procedure becomes larger as the time $k$ at which the offset $\delta$ is added becomes closer to half of $n = m + 5000$. This is a consequence of using of CUSUM statistics to define the detectors. 

As a second experiment, we considered a change in the variance of independent centered observations. To estimate the power of the sequential test, we generated 1000 samples of size $n = m + 5000$  with $m=800$ such that, observations up to position $m+k$  with $k \in \{0, 500, 1000, 2000\}$ are from the standard normal distribution while observations after position $m+k$ are from the $N(0,\sigma^2)$ distribution. The results are reported in Figure~\ref{fig:var}. As expected, the power of the procedure based on $D_m^{\sPc_m}$ increases as $\sigma$ deviates further away from one. In contrast to the first experiment however, the rejection percentages (resp.\ mean detection delays) increase (resp.\ decrease) as the number of evaluation points $p$ increases. Notice that the improvement as $p$ increases from 5 to 10 appears to be rather small.

\begin{figure}[t!]
\begin{center}
  \includegraphics*[width=1\linewidth]{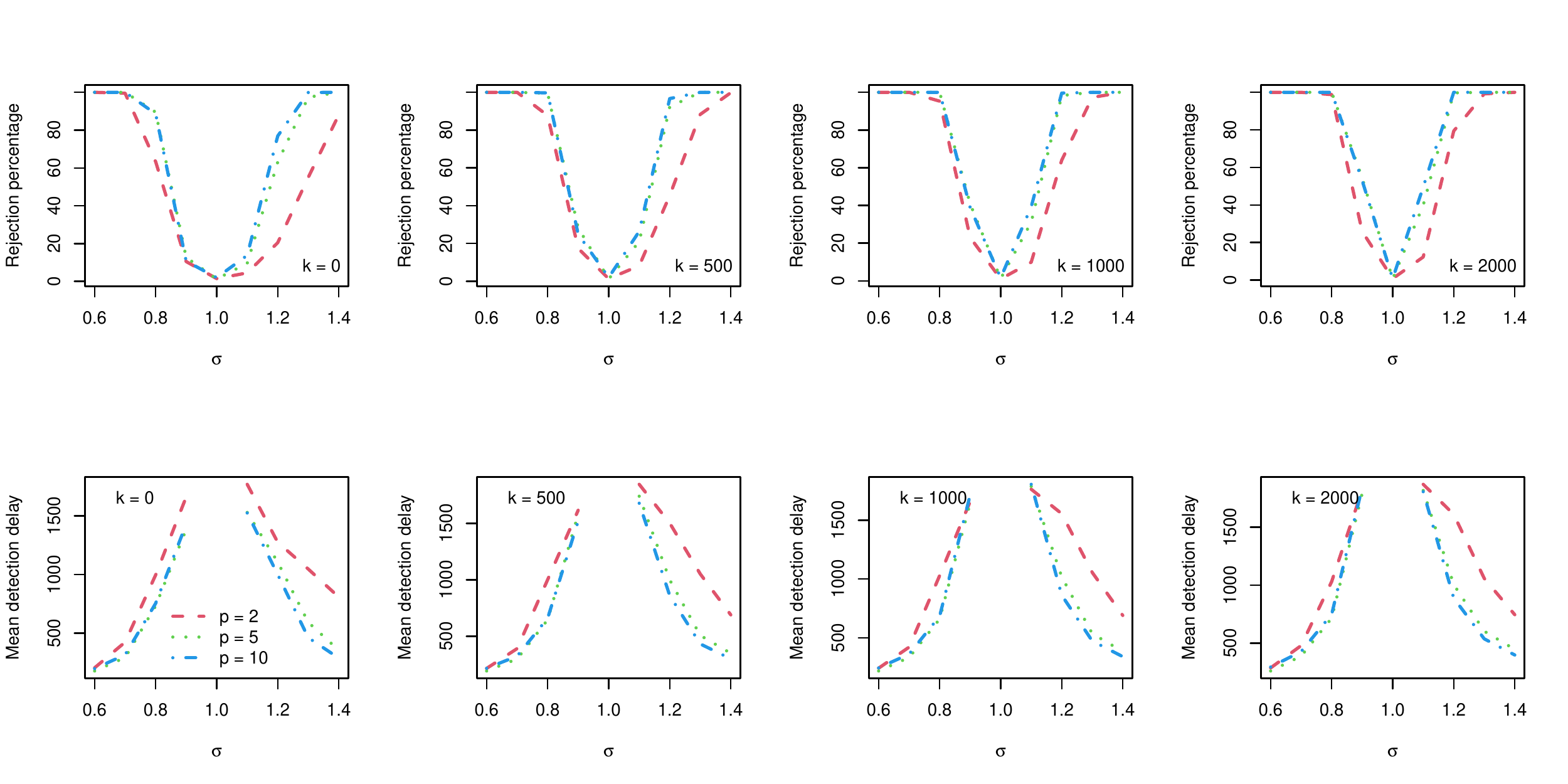}
  \caption{\label{fig:var} Rejection percentages of $H_0$ in~\eqref{eq:H0} and corresponding mean detection delays for the procedure based on $D_m^{\sPc_m}$ with $p \in \{2, 5, 10\}$ estimated from 1000 random samples of size $n = m + 5000$  with $m=800$ such that observations up to position $m+k$ with $k \in \{0, 500, 1000, 2000\}$ are from the standard normal distribution while observations after position $m+k$ are for the $N(0,\sigma^2)$ distribution.}
\end{center}
\end{figure}

As a final experiment, we considered a change in the contemporary distribution of independent observations that keeps the expectation and the variance constant. To estimate the rejection percentages, we generated 1000 samples of size $n = m + 5000$  with $m=800$ such that observations up to position $m+k$ with $k \in \{0, 500, 1000, 2000\}$ are from the scaled Student $t$ distribution with 3 degrees of freedom (where the scaling is performed so that the variance is equal to one) while observations after position $m+k$ are from the scaled Student $t$ distribution with $\nu \in \llb 3, 10 \rrb$ degrees of freedom. The results are reported in Figure~\ref{fig:dist}. As expected, the power of the procedure based on $D_m^{\sPc_m}$ increases as $\nu$ increases. Furthermore, as in the previous experiment, the rejection percentages (resp.\ mean detection delays) are larger (resp.\ smaller) when $p \in \{5,10\}$. Somehow surprisingly however, the results seem slightly better when $p=5$. 

\begin{figure}[t!]
\begin{center}
  \includegraphics*[width=1\linewidth]{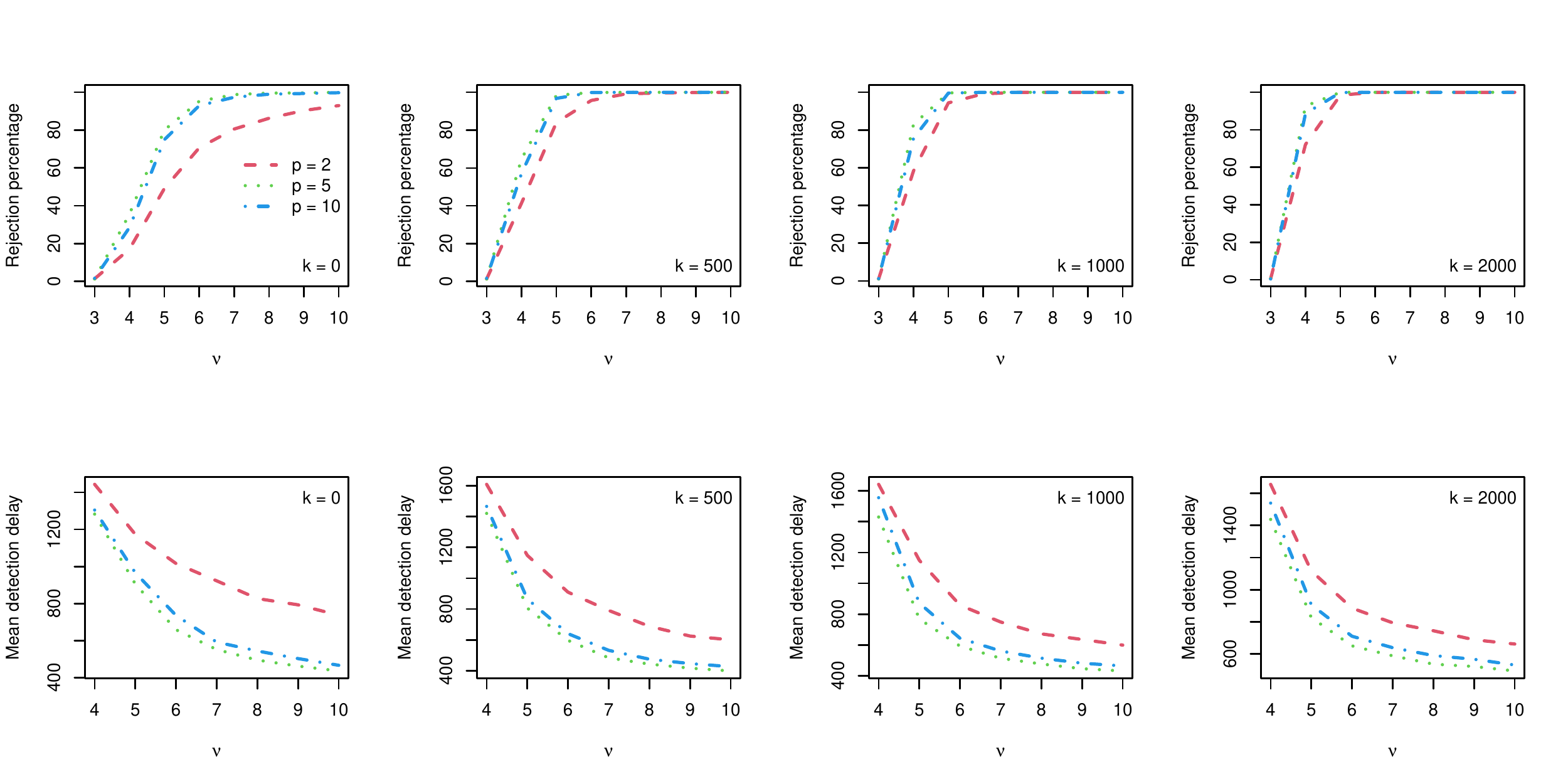}
  \caption{\label{fig:dist} Rejection percentages of $H_0$ in~\eqref{eq:H0} and corresponding mean detection delays for the procedure based on $D_m^{\sPc_m}$ with $p \in \{2, 5, 10\}$ estimated from 1000 random samples of size $n = m + 5000$  with $m=800$ such that observations up to position $m+k$  with $k \in \{0, 500, 1000, 2000\}$ are from the scaled Student $t$ distribution with $\nu=3$ degrees of freedom while observations after position $m+k$ are from the scaled Student $t$ distribution with $\nu \in \llb 3, 10 \rrb$ degrees of freedom.}
\end{center}
\end{figure}

\subsection{Multivariate experiments for the procedure based on $D_m^{\sPc_m}$ under the null}

Given $d \in \{2,3\}$ and a $d$-dimensional copula $C$, we used a multivariate AR(1) model to generate potentially serially dependent observations under $H_0$ in~\eqref{eq:H0}. Let $\bm U_i$, $i \in \llb -100, n \rrb$, be a $d$-dimensional i.i.d.\ sample from a copula~$C$. Then, set $\bm \epsilon_i = (\Phi^{-1}(U_i^{\sss [1]}),\dots,\Phi^{-1}(U_i^{\sss [d]}))$, where $\Phi$ is the d.f.\ of the standard normal distribution, and $\bm X_{-100} = \bm \epsilon_{-100}$. Finally, for any $j \in \llb 1, d \rrb$ and $i \in \llb -99, n \rrb$, compute recursively
\begin{equation}
\label{eq:ar1}
  X_i^{\sss [j]} = \beta X_{i-1}^{\sss [j]} + \epsilon_i^{\sss [j]}.
\end{equation}

\begin{table}[t!]
\centering
\caption{Percentages of rejection of $H_0$ in~\eqref{eq:H0} for the procedure based on $D_m^{\sPc_m}$ considered in Section~\ref{sec:mult:points} with $r \in \{3,4\}$ and $\kappa \in \{1.5,2,3\}$. The rejection percentages are computed from 1000 bivariate samples of size $n = m + 5000$ generated from the time series model~\eqref{eq:ar1} with $\beta=0.3$ and $C$ the bivariate Gumbel--Hougaard copula with a Kendall's tau of $\tau \in \{0,0.3,0.6,0.9\}$. The column $\bar p$ reports the average number of grid points retained by the point selection procedure.} 
\label{tab:H0BivAR}
\begingroup\small
\begin{tabular}{rrrrrrrrrrrrrr}
  \hline
  \multicolumn{2}{c}{} & \multicolumn{4}{c}{$\kappa=1.5$} & \multicolumn{4}{c}{$\kappa=2$} & \multicolumn{4}{c}{$\kappa=3$}  \\ \cmidrule(lr){3-6} \cmidrule(lr){7-10} \cmidrule(lr){11-14} \multicolumn{2}{c}{} & \multicolumn{2}{c}{$r=3$} & \multicolumn{2}{c}{$r=4$} & \multicolumn{2}{c}{$r=3$} & \multicolumn{2}{c}{$r=4$} & \multicolumn{2}{c}{$r=3$} & \multicolumn{2}{c}{$r=4$}  \\ \cmidrule(lr){3-4} \cmidrule(lr){5-6} \cmidrule(lr){7-8} \cmidrule(lr){9-10} \cmidrule(lr){11-12} \cmidrule(lr){13-14} $m$ & $\tau$ & $\bar p$ & $D_m^{\sPc_m}$ & $\bar p$ & $D_m^{\sPc_m}$ & $\bar p$ & $D_m^{\sPc_m}$ & $\bar p$ & $D_m^{\sPc_m}$ & $\bar p$ & $D_m^{\sPc_m}$ & $\bar p$ & $D_m^{\sPc_m}$ \\ \hline
400 & 0.00 & 8.9 & 9.1 & 15.4 & 14.8 & 9.0 & 10.4 & 15.9 & 18.4 & 9.0 & 10.4 & 16.0 & 21.6 \\ 
   & 0.30 & 8.6 & 7.4 & 14.3 & 13.2 & 8.9 & 8.4 & 15.3 & 16.6 & 9.0 & 9.4 & 15.9 & 20.5 \\ 
   & 0.60 & 7.0 & 7.7 & 10.7 & 10.0 & 7.1 & 8.0 & 11.7 & 11.3 & 7.7 & 8.1 & 13.4 & 13.6 \\ 
   & 0.90 & 3.0 & 4.0 & 5.8 & 3.7 & 3.4 & 3.6 & 7.8 & 4.6 & 5.9 & 4.5 & 9.8 & 12.9 \\ 
  800 & 0.00 & 9.0 & 4.5 & 15.9 & 6.6 & 9.0 & 4.7 & 16.0 & 7.6 & 9.0 & 4.7 & 16.0 & 7.6 \\ 
   & 0.30 & 8.7 & 4.6 & 14.5 & 3.7 & 9.0 & 4.9 & 15.6 & 4.9 & 9.0 & 5.0 & 16.0 & 5.6 \\ 
   & 0.60 & 7.0 & 3.6 & 10.3 & 4.4 & 7.0 & 3.7 & 11.9 & 3.5 & 7.8 & 3.5 & 13.8 & 5.9 \\ 
   & 0.90 & 3.0 & 2.1 & 5.2 & 1.8 & 3.2 & 2.2 & 8.4 & 3.1 & 6.3 & 3.0 & 10.0 & 7.5 \\ 
  1600 & 0.00 & 9.0 & 2.2 & 16.0 & 2.7 & 9.0 & 2.2 & 16.0 & 2.7 & 9.0 & 2.2 & 16.0 & 2.7 \\ 
   & 0.30 & 8.9 & 1.8 & 14.7 & 2.8 & 9.0 & 2.2 & 15.8 & 2.6 & 9.0 & 2.2 & 16.0 & 2.7 \\ 
   & 0.60 & 7.0 & 1.2 & 10.1 & 1.1 & 7.0 & 1.2 & 12.0 & 1.1 & 7.6 & 1.3 & 14.0 & 1.8 \\ 
   & 0.90 & 3.0 & 1.2 & 4.9 & 1.3 & 3.0 & 1.3 & 8.8 & 2.0 & 6.5 & 0.9 & 10.0 & 3.3 \\ 
   \hline
\end{tabular}
\endgroup
\end{table}

Recall that, when $d > 1$, the evaluation points of the monitoring procedure based on $D_m^{\sPc_m}$ are chosen from the learning sample using the point selection procedure described in Section~\ref{sec:mult:points}. To evaluate the behavior of the procedure when $d=2$ with $r \in \{3,4\}$ and $\kappa \in \{1.5,2,3\}$ under the null, in a first experiment, we computed its rejection percentages from 1000 bivariate samples of size $n = m + 5000$ generated from the time series model~\eqref{eq:ar1} with $\beta=0.3$ and $C$ the bivariate Gumbel--Hougaard copula with a Kendall's tau of $\tau \in \{0,0.33,0.66\}$. The empirical levels are reported in the columns $D_m^{\sPc_m}$ of Table~\ref{tab:H0BivAR}. The columns $\bar p$ report the average number of grid points retained by the point selection procedure of Section~\ref{sec:mult:points}. As one can see, reassuringly, the empirical levels improve in all settings as $m$ increases. Unsurprisingly, they are higher for $r=4$ than for $r=3$ since a larger value of $r$ tends to result in a larger number of selected points $p$ and thus in a more difficult estimation of the underlying long run covariance matrix. Also unsurprisingly, the number of selected points $p$ tends to increase as $\kappa$ increases and to decrease as $\tau$ increases, that is, as the cross-sectional dependence in the underlying time series changes from independence to stronger positive association.

\begin{table}[t!]
\centering
\caption{Percentages of rejection of $H_0$ in~\eqref{eq:H0} for the procedure based on $D_m^{\sPc_m}$ considered in Section~\ref{sec:mult:points} with $r \in \{2,3\}$ and $\kappa \in \{1.5,2,3\}$. The rejection percentages are computed from 1000 trivariate samples of size $n = m + 5000$ generated from the time series model~\eqref{eq:ar1} with $\beta=0.3$ and $C$ the trivariate Clayton copula whose bivariate margins have a Kendall's tau of $\tau \in \{0,0.3,0.6,0.9\}$. The column $\bar p$ reports the average number of grid points retained by the point selection procedure.} 
\label{tab:H0TriAR}
\begingroup\small
\begin{tabular}{rrrrrrrrrrrrrr}
  \hline
  \multicolumn{2}{c}{} & \multicolumn{4}{c}{$\kappa=1.5$} & \multicolumn{4}{c}{$\kappa=2$} & \multicolumn{4}{c}{$\kappa=3$}  \\ \cmidrule(lr){3-6} \cmidrule(lr){7-10} \cmidrule(lr){11-14} \multicolumn{2}{c}{} & \multicolumn{2}{c}{$r=2$} & \multicolumn{2}{c}{$r=3$} & \multicolumn{2}{c}{$r=2$} & \multicolumn{2}{c}{$r=3$} & \multicolumn{2}{c}{$r=2$} & \multicolumn{2}{c}{$r=3$}  \\ \cmidrule(lr){3-4} \cmidrule(lr){5-6} \cmidrule(lr){7-8} \cmidrule(lr){9-10} \cmidrule(lr){11-12} \cmidrule(lr){13-14} $m$ & $\tau$ & $\bar p$ & $D_m^{\sPc_m}$ & $\bar p$ & $D_m^{\sPc_m}$ & $\bar p$ & $D_m^{\sPc_m}$ & $\bar p$ & $D_m^{\sPc_m}$ & $\bar p$ & $D_m^{\sPc_m}$ & $\bar p$ & $D_m^{\sPc_m}$ \\ \hline
400 & 0.00 & 7.6 & 12.5 & 20.8 & 32.0 & 7.9 & 14.6 & 24.1 & 43.6 & 8.0 & 16.0 & 26.0 & 55.6 \\ 
   & 0.30 & 7.4 & 10.4 & 18.3 & 22.4 & 7.8 & 12.2 & 21.4 & 33.7 & 8.0 & 15.4 & 24.0 & 49.2 \\ 
   & 0.60 & 6.5 & 6.5 & 14.6 & 18.4 & 7.5 & 10.9 & 15.4 & 22.3 & 8.0 & 16.5 & 16.5 & 25.4 \\ 
   & 0.90 & 2.0 & 3.9 & 6.4 & 4.4 & 2.0 & 3.8 & 8.5 & 6.7 & 3.2 & 3.4 & 11.2 & 14.9 \\ 
  800 & 0.00 & 7.9 & 5.6 & 24.1 & 11.7 & 8.0 & 6.6 & 26.3 & 18.1 & 8.0 & 6.8 & 26.9 & 22.6 \\ 
   & 0.30 & 7.6 & 4.1 & 19.9 & 9.8 & 8.0 & 5.7 & 23.1 & 14.3 & 8.0 & 6.1 & 25.6 & 19.8 \\ 
   & 0.60 & 6.7 & 2.8 & 15.0 & 7.8 & 7.9 & 5.2 & 15.3 & 9.1 & 8.0 & 6.7 & 16.5 & 8.8 \\ 
   & 0.90 & 2.0 & 2.7 & 6.6 & 3.0 & 2.0 & 2.7 & 9.2 & 4.7 & 2.5 & 1.9 & 12.2 & 6.7 \\ 
  1600 & 0.00 & 8.0 & 2.4 & 26.2 & 3.8 & 8.0 & 2.4 & 27.0 & 5.7 & 8.0 & 2.4 & 27.0 & 6.0 \\ 
   & 0.30 & 7.8 & 1.9 & 20.7 & 2.5 & 8.0 & 2.6 & 24.3 & 4.2 & 8.0 & 2.6 & 26.4 & 6.3 \\ 
   & 0.60 & 6.9 & 1.0 & 15.0 & 2.2 & 8.0 & 1.8 & 15.1 & 2.3 & 8.0 & 2.2 & 16.2 & 2.4 \\ 
   & 0.90 & 2.0 & 1.6 & 6.8 & 1.4 & 2.0 & 1.6 & 9.5 & 2.3 & 2.1 & 1.4 & 13.0 & 4.3 \\ 
   \hline
\end{tabular}
\endgroup
\end{table}

We additionally considered a trivariate version of the previous experiment under the null based on the Clayton copula. The results, reported in Table~\ref{tab:H0TriAR}, are qualitatively the same.

\subsection{Multivariate experiments for the procedure based on $D_m^{\sPc_m}$ under alternatives}

\begin{figure}[t!]
\begin{center}
\includegraphics*[width=1\linewidth]{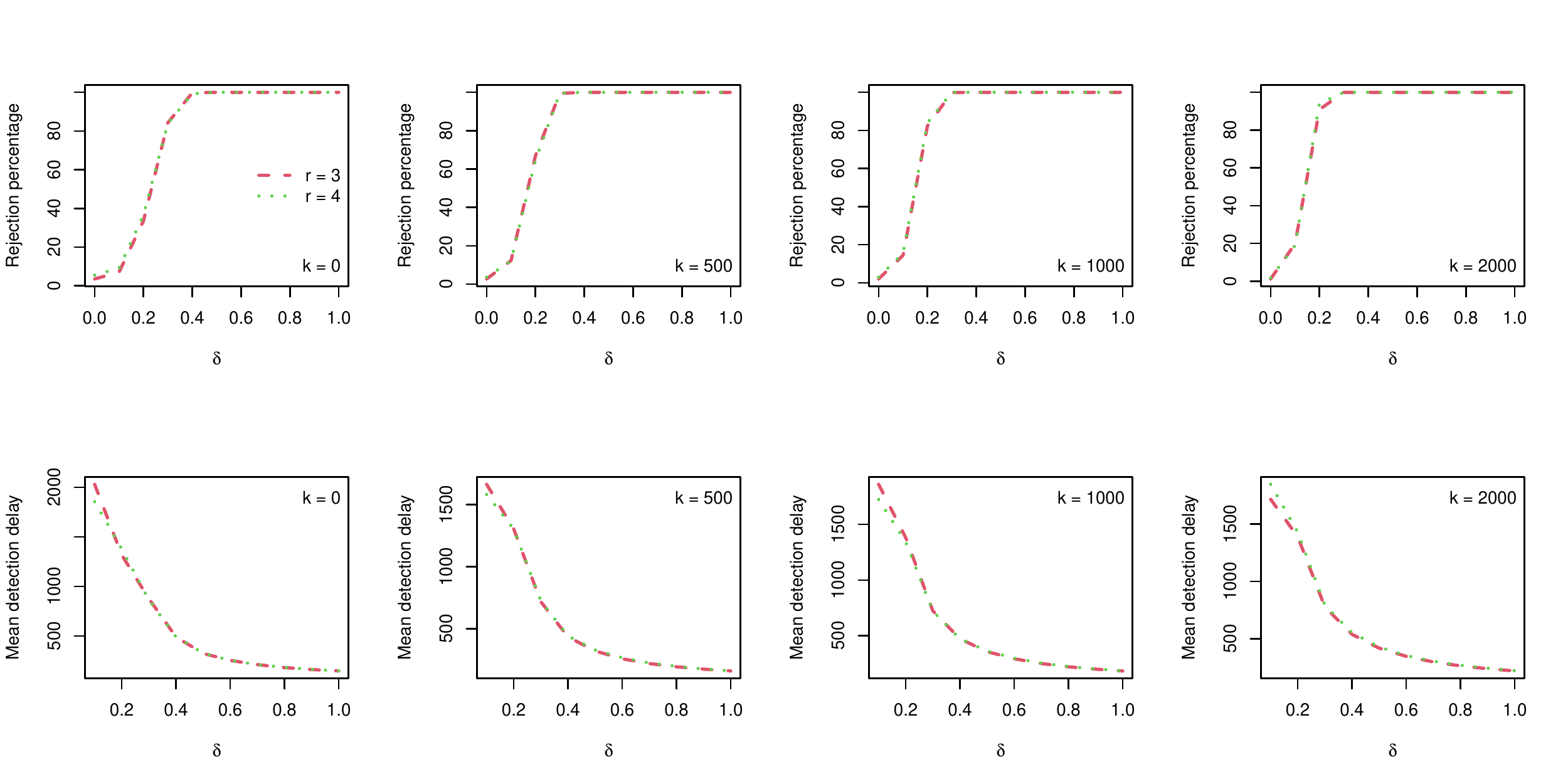}
\caption{\label{fig:H1BivTs} Rejection percentages of $H_0$ in~\eqref{eq:H0} and corresponding mean detection delays for the procedure based on $D_m^{\sPc_m}$ with $r \in \{3, 4\}$ and $\kappa=1.5$ estimated from 1000 bivariate samples of size $n = m + 5000$  with $m=800$ generated from the time series model~\eqref{eq:ar1} with $\beta=0.3$ and $C$ the bivariate Frank copula with a Kendall's tau of $0.5$ such that, for each sample, a positive offset of $\delta$ was added to the first component of all bivariate observations after position $m+k$. The average number of selected points is approximately 7 for $r=3$ and 10.9 for $r=4$.}
\end{center}
\end{figure}

In a last series of bivariate and trivariate experiments, we investigated the power of the procedure based on $D_m^{\sPc_m}$.

We first estimated its rejection percentages and corresponding mean detection delays for $r \in \{3, 4\}$ and $\kappa=1.5$ from 1000 samples of size $n = m + 5000$  with $m=800$ generated from the time series model~\eqref{eq:ar1} with $\beta=0.3$ and $C$ the bivariate Frank copula with a Kendall's tau of $\tau=0.5$ such that, for each sample, a positive offset of $\delta$ was added to the first component of all bivariate observations after position $m+k$. The results are represented in Figure~\ref{fig:H1BivTs}. As one can see, the value of $r \in \{3,4\}$ has hardly any influence on the power or on the mean detection delay.

We next considered a similar experiment where the change affects only the first margin which changes from the scaled Student $t$ with 3 degrees of freedom to the scaled Student $t$ distribution with $\nu \in \llb 3, 10 \rrb$ degrees of freedom. The copula (the bivariate Frank with a Kendall's tau of $0.5$) and the second margin (the Student $t$ with $\nu=3$ degrees of freedom) remain constant. The results are displayed in Figure~\ref{fig:H1BivIid}. As one can see, using $r=4$ rather than $r=3$ leads to a slightly more powerful procedure which detects the change faster on average.

In a third experiment, we focused on the effect of a change of the dependence parameter of the copula in the case of serially independent data. Before the change, observations are generated from the bivariate Normal copula with a Kendall's tau of $0.5$, while after the change they come from the bivariate Normal copula with a Kendall's tau of $\tau \in \{0.1,\dots,0.9\}$. The rejection percentages and corresponding mean detection delays are represented in Figure~\ref{fig:H1BivTau}. As in the previous experiment, the results for $r=4$ are slightly better than for $r=3$.

In a fourth bivariate experiment, we considered the situation where the copula changes while the strength of association measured in terms of Kendall's tau remains constant. Specifically, before the change, observations are generated from the bivariate Clayton copula with a Kendall's tau of $0.5$ (which is lower tail dependent), while after the change they arise from the bivariate Gumbel--Hougaard copula with a Kendall's tau of $0.5$ (which is upper tail dependent). The results are reported Table~\ref{tab:H1BivCop}. The procedure with $r=4$ is again slightly  more powerful and detects the change faster than the procedure with $r=3$.

\begin{figure}[t!]
\begin{center}
\includegraphics*[width=1\linewidth]{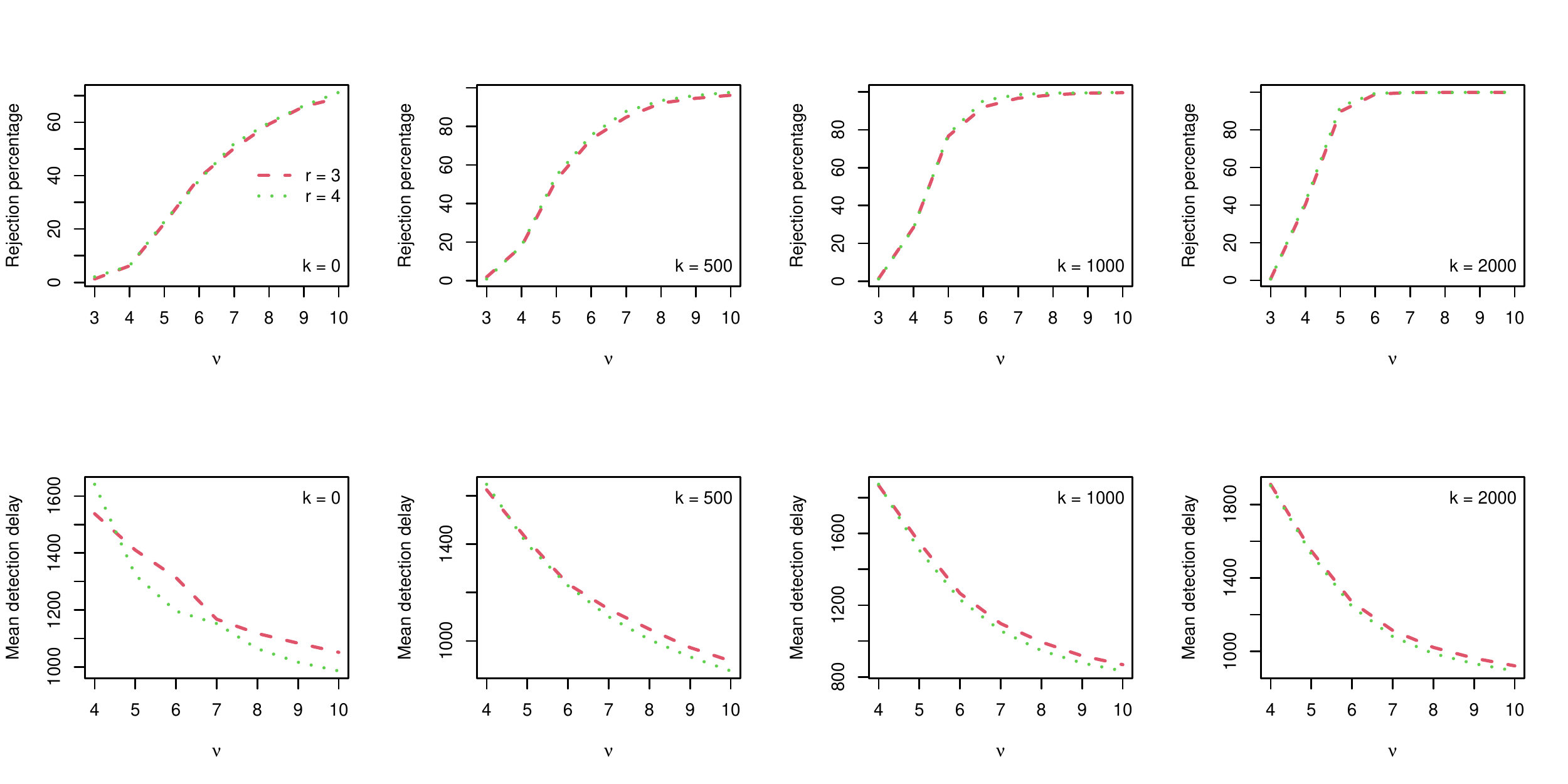}
\caption{\label{fig:H1BivIid} Rejection percentages of $H_0$ in~\eqref{eq:H0} and corresponding mean detection delays for the procedure based on $D_m^{\sPc_m}$ with $r \in \{3, 4\}$ and $\kappa=1.5$ estimated from 1000 bivariate random samples of size $n = m + 5000$  with $m=800$ such that observations up to position $m+k$ are from a d.f.\ whose copula is the bivariate Frank with a Kendall's tau of $0.5$ and whose margins are scaled Student $t$ with $\nu=3$ degrees of freedom, while observations after position $m+k$ are still from a d.f. with the same copula and same second margin but with first margin the scaled Student $t$ with $\nu \in \llb 3, 10 \rrb$ degrees of freedom. The average number of selected points is approximately 7 for $r=3$ and 10.6 for $r=4$.}
\end{center}
\end{figure}

\begin{figure}[t!]
\begin{center}
\includegraphics*[width=1\linewidth]{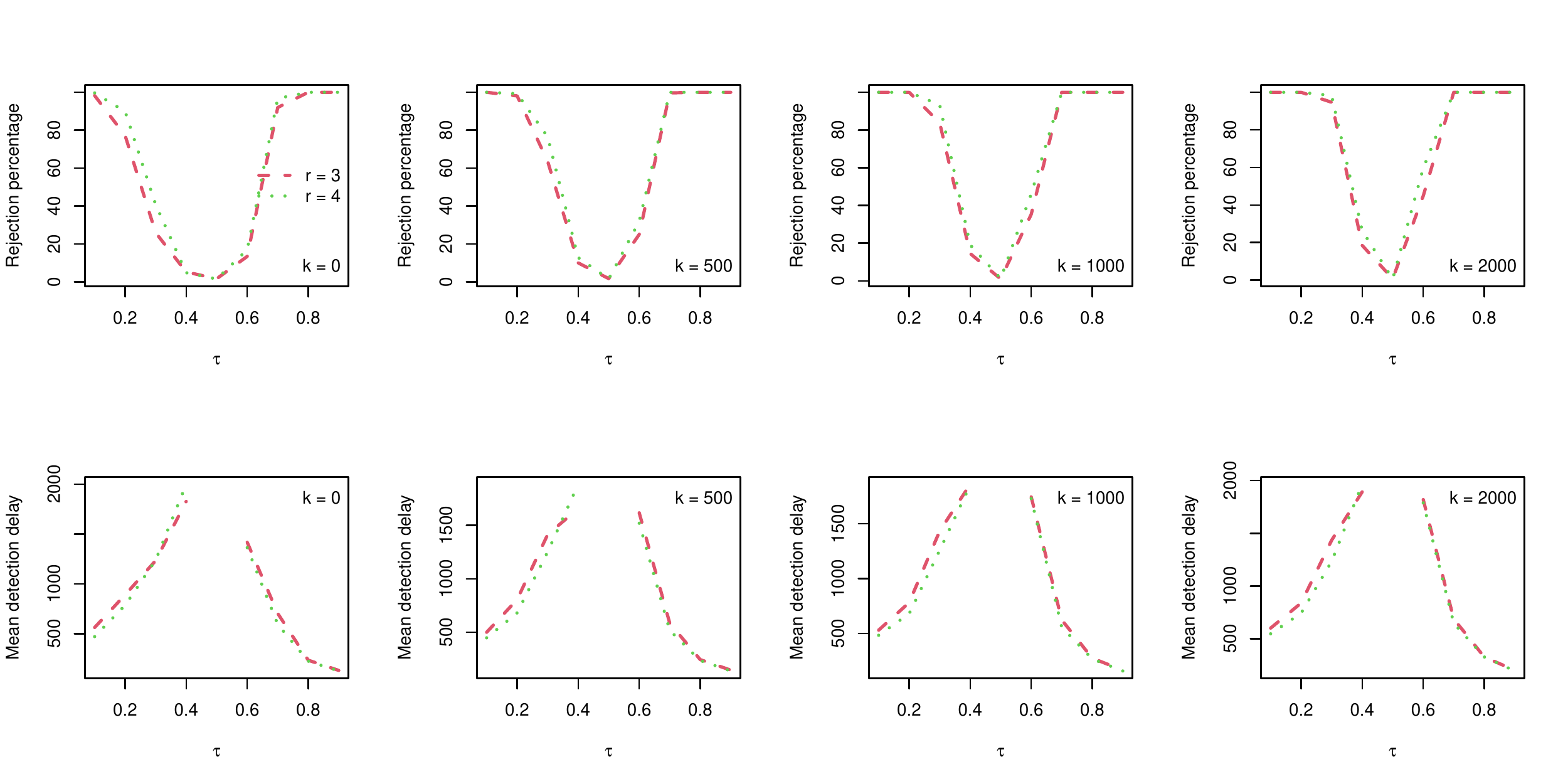}
\caption{\label{fig:H1BivTau} Rejection percentages of $H_0$ in~\eqref{eq:H0} and corresponding mean detection delays for the procedure based on $D_m^{\sPc_m}$ with $r \in \{3, 4\}$ and $\kappa=1.5$ estimated from 1000 bivariate random samples of size $n = m + 5000$  with $m=800$ such that observations up to position $m+k$  are from the bivariate normal copula with a Kendall's tau of $0.5$ while observations after position $m+k$ are from the bivariate normal copula with a Kendall's tau of $\tau \in \{0.1,\dots,0.9\}$. The average number of selected points is approximately 7 for $r=3$ and 11.8 for $r=4$.}
\end{center}
\end{figure}

\begin{table}[t!]
\centering
\caption{Percentages of rejection of $H_0$ in~\eqref{eq:H0} for the procedure based on $D_m^{\sPc_m}$ considered in Section~\ref{sec:mult:points} with $r \in \{3,4\}$ and $\kappa=1.5$. The rejection percentages are computed from 1000 bivariate samples of size $n = m + 5000$ with $m=800$ such that, up to time $m+k$, observations come from a bivariate Clayton copula with a Kendall's tau of 0.5, while observations after time $m+k$ are generated from a Gumbel--Hougaard with a Kendall's tau of 0.5. The abreviation ``m.d.d.'' stands for ``mean detection delay''. } 
\label{tab:H1BivCop}
\begingroup\small
\begin{tabular}{rrrrrrr}
  \hline
   & \multicolumn{3}{c}{$r=3$} & \multicolumn{3}{c}{$r=4$}  \\ \cmidrule(lr){2-4} \cmidrule(lr){5-7} $k$ & $\bar p$ & $D_m^{\sPc_m}$ & m.d.d. & $\bar p$ & $D_m^{\sPc_m}$ & m.d.d. \\ \hline
0 & 6.8 & 72.3 & 1063.9 & 10.7 & 74.4 & 1007.7 \\ 
  500 & 6.8 & 92.9 & 967.9 & 10.8 & 94.8 & 936.4 \\ 
  1000 & 6.8 & 99.0 & 990.1 & 10.7 & 98.9 & 905.6 \\ 
  2000 & 6.9 & 99.8 & 1002.3 & 10.7 & 99.8 & 965.1 \\ 
   \hline
\end{tabular}
\endgroup
\end{table}


We concluded our multivariate simulations under alternatives by considering trivariate versions of the previous bivariate experiments. We used $r = 3$ and $\kappa = 1.5$. The results are reported in Figure~\ref{fig:H1Tri} and Table~\ref{tab:H1TriCop} and are qualitatively the same as in the bivariate case. Notice however the rather low estimated rejection percentages for the trivariate version of the second bivariate experiment reported in the second row of graphs of Figure~\ref{fig:H1Tri}.

\begin{figure}[t!]
\begin{center}
\includegraphics*[width=1\linewidth]{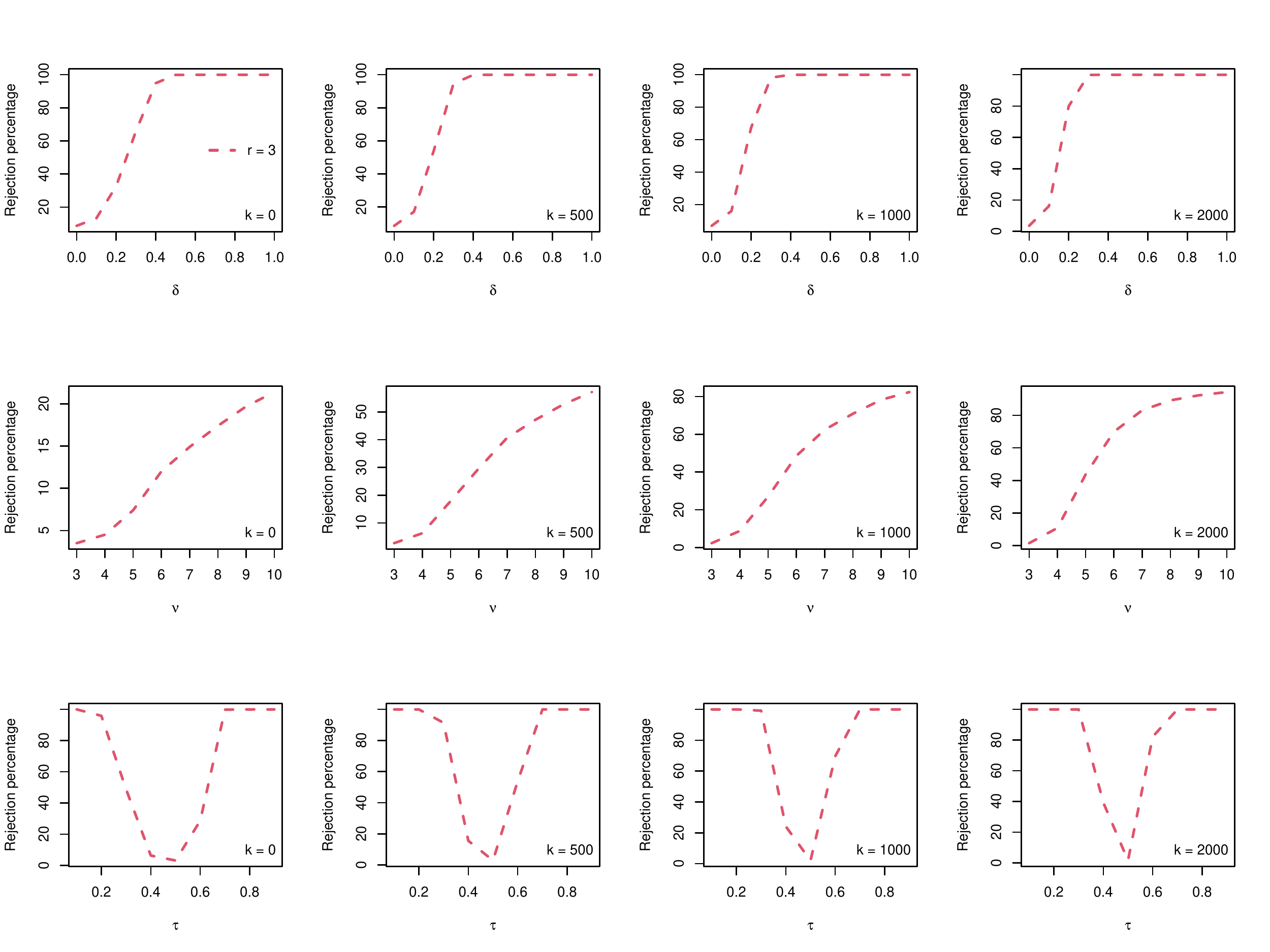}
\caption{\label{fig:H1Tri} Rejection percentages of $H_0$ in~\eqref{eq:H0} for the procedure based on $D_m^{\sPc_m}$ in the trivariate case with $r=3$ and $\kappa = 1.5$. The graphs in the first, second and third row correspond to experiments which are the trivariate analogs of those reported in Figures~\ref{fig:H1BivTs},~\ref{fig:H1BivIid} and~\ref{fig:H1BivTau}, respectively. The corresponding average numbers of selected points are 17.7, 17.3, and 18.3, respectively.}
\end{center}
\end{figure}

\begin{table}[t!]
\centering
\caption{Percentages of rejection of $H_0$ in~\eqref{eq:H0} for the procedure based on $D_m^{\sPc_m}$ considered in Section~\ref{sec:mult:points} with $r=3$ and $\kappa=1.5$. The rejection percentages are computed from 1000 trivariate samples of size $n = m + 5000$ with $m=800$ such that, up to time $m+k$, observations come from a trivariate Clayton copula whose bivariate margins have a Kendall's tau of 0.5, while observations after time $m+k$ are generated from a Gumbel--Hougaard copula whose bivariate margins have a Kendall's tau of 0.5.} 
\label{tab:H1TriCop}
\begingroup\small
\begin{tabular}{rrr}
  \hline
  $k$ & $\bar p$ & $D_m^{\sPc_m}$\\ \hline
0 & 15.3 & 95.2 \\ 
  500 & 15.2 & 99.8 \\ 
  1000 & 15.2 & 100.0 \\ 
  2000 & 15.2 & 100.0 \\ 
   \hline
\end{tabular}
\endgroup
\end{table}


\end{appendix}

\bibliographystyle{imsart-nameyear}
\bibliography{biblio}

\end{document}